\newtheorem{theorem}{Theorem}
\newtheorem{definition}[theorem]{Definition}
\newtheorem{corollary}[theorem]{Corollary}
\newtheorem{lemma}[theorem]{Lemma}
\newtheorem{assumption}{Assumption}
\newtheorem{remark}[theorem]{Remark}
\numberwithin{theorem}{section}
\numberwithin{assumption}{section}
\newcommand{\req}[1]{Eq.\,(\ref{#1})}
\begin{document}

\title[Langevin equations: Higher-order approximations]{Langevin equations in the small-mass limit:\\
Higher-order approximations}

\author[Birrell]{Jeremiah Birrell}
\address{Department of Mathematics and Statistics\\
University of Massachusetts Amherst\\
Amherst, MA 01003, USA}
\email{birrell@math.umass.edu}

\author[Wehr]{Jan Wehr}
\address{Department of Mathematics\\
Program in Applied Mathematics\\
University of Arizona\\
Tucson, AZ, 85721, USA}
\email{wehr@math.arizona.edu}

\subjclass{60H10,  82C31}

\keywords{Langevin equation, homogenization, small-mass limit, noise-induced drift}

\date{\today}

\begin{abstract}
We study the small-mass (overdamped) limit of Langevin equations for a particle in a potential and/or magnetic field with matrix-valued and state-dependent drift and diffusion.  We utilize a bootstrapping argument to derive a hierarchy of approximate equations for the position degrees of freedom that  are able to achieve accuracy  of order $m^{\ell/2}$ over compact time intervals for any  $\ell\in\mathbb{Z}^+$. This generalizes prior derivations of the homogenized equation for the position degrees of freedom in the $m\to 0$ limit, which result in order $m^{1/2}$  approximations.   Our results cover bounded forces, for which we prove convergence in $L^p$ norms, and unbounded forces, in which case we prove convergence in probability.
\end{abstract}
\maketitle

\section{Introduction}
 Langevin equations provide models of a diffusing particle;  a simple example, illustrating several typical ingredients, is the system of stochastic differential equations (SDE) 
\begin{align}\label{model_sys}
dq^m_t=v^m_t dt,\hspace{2mm} m dv^m_t=-\gamma v^m_t dt+\sigma dW_t,
\end{align}
 where $m$ is the mass of the particle (here and in the sequel we use a superscript to denote the $m$ dependence), $\gamma$ and $\sigma$ are the dissipation (or: drag) and diffusion coefficients respectively and $W_t$ is a Wiener process.  Pioneering work, including investigation of the small-mass limit, was done by Smoluchowski \cite{smoluchowski1916drei} and Kramers \cite{KRAMERS1940284}. A detailed discussion of the early literature can be found in \cite{Nelson1967}.

Works studying the small-mass limit of (various generalizations of) \req{model_sys} have rigorously established convergence  of the position degrees of freedom, $q^m_t$,  as $m\to 0$ to the solution, $q_t$, of a limiting SDE \cite{Sancho1982,volpe2010influence,Hottovy2014,herzog2015small,particle_manifold_paper,BirrellHomogenization}. Such problems fall under the umbrella of homogenization (see, for example, the recent sources \cite{fouque2007wave,pavliotis2008multiscale}) and so we refer to the SDE for $q_t$ as the  homogenized equation.  Moving beyond the homogenized equation, results have been proven regarding the small-mass limit in the sense of rough-paths \cite{FritzRoughPath}, the limit  of the joint distribution of position and (scaled) velocity \cite{BirrellPhaseSpace}, the limit of the invariant measures \cite{hu2017}, and the limit of the entropy production \cite{Birrell2018}.

In this paper we  build on the above small-mass limit results, specifically those in  \cite{BirrellHomogenization}, where it was proven that $q_t$ approximates $q_t^m$ with $O(m^{1/2})$ error over compact time intervals (see also the summary in Section \ref{sec:prior_results} below).   Our main result is the derivation of a hierarchy of SDEs whose solutions are higher-order approximations to $q_t^m$; this hierarchy is able to achieve $O(m^{\ell/2})$ error over compact time intervals for any $\ell\in\mathbb{Z}^+$.  We derive these approximations and prove the claimed error bounds via a  bootstrapping argument.  Related techniques are commonly used in many different contexts, such as in proving regularity of solutions to various classes of equations (see page 20 in \cite{tao2006nonlinear} and page 489 in \cite{evans2010partial}),  to derive error estimates  for numerical methods 
\cite{doi:10.1080/10236198.2012.656617,10.1093/imanum/drz009} and in predictor-corrector methods (see Chapter 15.5 in \cite{kloeden2013numerical}), and in  homogenization (see Chapters 16 and 20 in \cite{pavliotis2008multiscale}) and perturbation theory (see Chapter 3.2 in \cite{murdock1999perturbations} and Chapter 6.2 - 6.3 in  \cite{smith1985singular}).

\subsection{Langevin Equation with State-Dependent Drag and Noise}
In this work, we study generalizations of \req{model_sys} that allow for  time- and state-dependent drag, noise, and external forcing:
\begin{align}
dq_t^m=&v_t^m dt,\label{q_eq0}\\
md(v^m_t)_i=&\left(-\tilde \gamma_{ik}(t,q_t^m) (v_t^m)^k+  F_i(t,q^m_t)\right)dt+\sigma_{i\rho}(t,q_t^m)dW^\rho_t,\label{v_eq}
\end{align}
where $q_t^m$ and $v_t^m$ are $\mathbb{R}^n$-valued processes. The matrix-valued function $\tilde\gamma$ will have a symmetric part, the drag matrix, and is allowed to have an antisymmetric part, coming from a possible magnetic field; see \req{tilde_gamma_def} below for details. We again alert the reader that here, and elsewhere, the superscript $m$ on vector or matrix-valued quantities denotes the value of the mass and not a component or a power.

  Except in the simplest cases, the system \req{q_eq0} - \req{v_eq} cannot be solved explicitly, and it is difficult to study numerically, especially for small values of $m$, since the velocity process $v_t^m$ diverges as $m \to 0$.  As discussed above, solutions of the homogenized SDE can serve as  approximations to (the position components of) solutions of the original system, as long as the value of $m$ is sufficiently small. The effectiveness of this has been confirmed numerically and experimentally for physically relevant values of $m$ \cite{volpe2010influence}.  However, this approximate solution is independent of $m$.  The present work improves on this, by deriving approximate position processes which are sensitive to the variation of $m$ while still not requiring one to solve the full system.  In addition, these $m$-dependent approximations are free from the type of singularity that makes the original system \req{q_eq0}-\req{v_eq} difficult to work with when $m$ is small.

  More specifically, we obtain a hierarchy of approximations $q^{\ell,m}_t$,  $\ell\in\mathbb{Z}^+$, starting with $q^{1,m}_t\equiv q_t$, where $q^{\ell,m}_t$ approximates $q_t^m$ with $O(m^{\ell/2})$ error over compact time intervals.  These processes will be constructed inductively (on $\ell$) as solutions to SDEs of the form
\begin{align}\label{hierarchy_form}
dq^{\ell,m}_t=\tilde b(t,q^{\ell,m}_t)dt+\tilde \sigma(t,q^{\ell,m}_t)dW_t+\sqrt{m}dR^{\ell-1,m}_t.
\end{align}
Here and in the following, SDEs are defined in the It{\^o} sense.

The leading order terms in \req{hierarchy_form} are given by the same drift, $\tilde b$, (including the noise-induced drift from \cite{Hottovy2014}) and diffusion, $\tilde \sigma$, that appear in the homogenized SDE for $q_t$ (see \req{q_SDE} below). The corrections are captured by the semimartingale term  $R^{\ell-1,m}_t$.  The appropriate form of $R^{\ell-1,m}_t$ will be motivated by comparing the SDE for $q_t^m$ to the SDE for the homogenized process, $q_t$, and extracting  the  error terms.

What makes the hierarchy particularly simple is that $R^{\ell-1,m}_t$ does {\em not} depend on $q^{\ell,m}_t$, but rather is an external driving semimartingale, constructed from the  approximation at the previous step, $q^{\ell-1,m}_t$ (with $R^{0,m}_t\equiv 0$).  This means that the singular nature of the $m\to 0$ limit does not complicate the limiting drift and diffusion, even for higher-order approximations.  Moreover, the presence of $m$ in the correction process, $R^{\ell-1,m}_t$, is rather benign; it primarily serves to exponentially damp out contributions from the past history of $q^{\ell-1,m}_t$.

  In Section \ref{sec:prior_results} we summarize the prior results that will be needed in this paper. Section \ref{sec:result_summary} gives a summary of the new results that will be established.  Section \ref{sec:simple_proof} contains an outline of our proof strategy, in a simplified setting, in order to highlight the key ideas.  Our new results are fully developed in Sections \ref{sec:bounded_forcing} and \ref{sec:unbounded}.  The former covers Langevin equations driven by bounded forces and the latter covers the extension to unbounded forces. 

\subsection{Homogenized Equation in the $m\to 0$ Limit: Established Results}\label{sec:prior_results}
Here we recall several previously proven results, pertaining to the Langevin equation  \req{q_eq0}-\req{v_eq}, that will be needed going forward.  Before doing so, we need to be a bit more specific about the objects appearing in \req{v_eq}. We will assume:
\begin{enumerate}
\item    $\tilde\gamma$ is constructed from  a continuous, positive definite matrix-valued drag, $\gamma$, and an antisymmetric part generated by a $C^2$ vector potential, $\psi$, as follows:
\begin{align}\label{tilde_gamma_def}
\tilde \gamma_{ik}(t,q)\equiv\gamma_{ik}(t,q) +\partial_{q^k}\psi_i(t,q)-\partial_{q^i}\psi_k(t,q).
\end{align}
\item The   diffusion, $\sigma:[0,\infty)\times\mathbb{R}^n\to\mathbb{R}^{n\times k}$ is continuous.
\item $W$ is an $\mathbb{R}^k$-valued Wiener process on  $(\Omega,\mathcal{F},\mathcal{F}_t,P)$, a filtered probability space satisfying the usual conditions \cite{karatzas2014brownian}.
\item  The total forcing is
 \begin{align}\label{eq:F_def}
F(t,q)=-\partial_t\psi(t,q)-\nabla_q V(t,q)+\tilde F(t,q),
\end{align} 
where the $C^2$ function $V$ represents an (electrostatic) potential and $\tilde F$ is a continuous external forcing.
\end{enumerate}
Our usage of electromagnetic language is due to us viewing the antisymmetric part of the drag-matrix, $\tilde\gamma$,  as begin generated by the vector-potential of an electromagnetic field, $\psi$.  If one is not interested in such a term, then our framework still allows for consideration of quite general gradient and non-gradient forces. More  assumptions on these objects will be required as we proceed; in particular the $L^p$-convergence result of Theorem \ref{convergence_theorem} will apply only to bounded forces but Theorem \ref{thm:conv_in_prob} will prove convergence in probability for a large class of unbound forces.
  
Next, define the (kinematic) momentum
\begin{align}\label{u_def}
u_t^m=mv_t^m.
\end{align}
We showed in \cite{BirrellHomogenization} that, under appropriate assumptions, there exist unique solutions $(q_t^m,u_t^m)$, $t\in[0,\infty)$, that converge to $(q_t,0)$ as $m\rightarrow 0$; $q_t$ is the solution of a homogenized limiting SDE.  The precise nature of this convergence, and the form of the SDE for $q_t$, are given below.  These results provide the foundation that we build upon in order to derive higher-order approximations.\\

\noindent { Summary of Previous Results:}

{\em
Under the  assumptions listed in  Appendix \ref{app:assump}, one has the following convergence results (see  \cite{BirrellHomogenization} for a detailed proof):  For any  $T>0$, $p>0$, $\epsilon>0$ we have
\begin{align}
 &E\left[\sup_{t\in[0,T]}\|q_t^m-q_t\|^p\right]^{1/p}=O(m^{1/2-\epsilon}),\hspace{2mm} \sup_{t\in[0,T]}E\left[\|q_t^m-q_t\|^p\right]^{1/p}=O(m^{1/2}) ,\label{results_summary1a}\\
& E\left[\sup_{t\in[0,T]}\|u_t^m\|^p\right]^{1/p}=O(m^{1/2-\epsilon}),\hspace{2mm}
\sup_{t\in[0,T]} E\left[\|u_t^m\|^p\right]^{1/p}=O(m^{1/2})\label{results_summary1b}
\end{align}
as $m\rightarrow 0$, where $q_t$ is the solution to the  SDE 
\begin{align}\label{q_SDE}
dq_t=& \tilde \gamma^{-1}(t,q_t)F(t,q_t)dt+S(t,q_t)dt+\tilde \gamma^{-1}(t,q_t)\sigma(t,q_t) dW_t.
\end{align}
$S(t,q)$ is called the {\em noise-induced drift}, see \cite{Hottovy2014,BirrellHomogenization}, and is given by
(employing the summation convention on repeated indices):
\begin{enumerate}
\item $S^i(t,q)\equiv  \partial_{q^k}(\tilde\gamma^{-1})^{ij}(t,q) \delta^{kl}G_{jl}^{rs}(t,q)\Sigma_{rs}(t,q)$,
\item $G_{ij}^{kl}(t,q)\equiv \delta^{rk}\delta^{sl}\int_0^\infty (e^{-\zeta \tilde\gamma(t,q)})_{ir} (e^{-\zeta \tilde\gamma(t,q)})_{js} d\zeta$,
\item $\Sigma_{ij}\equiv \sigma_{i\rho}\sigma_{j\xi}\delta^{\rho\xi}$.
\end{enumerate}
Here $\delta^{kl}$ denotes the Kronecker delta.  

The initial conditions are assumed to satisfy $E[\|q^m_0\|^p]<\infty$, $E[\|q_0\|^p]<\infty$, and $E[\|q_0^m-q_0\|^p]^{1/p}=O(m^{1/2})$ for all $p>0$.

The following bounds on $q_t^m$ and  $q_t$ were also shown:
\begin{align}\label{q_Lp_bound}
E\left[\sup_{t\in[0,T]}\|q^m_t\|^p\right]<\infty
,\,\,\,E\left[\sup_{t\in[0,T]}\|q_t\|^p\right]<\infty
\end{align}
for all $m>0$, $T>0$, $p>0$.}

Note that $u_t^m=O(m^{1/2})$ translates into $v_t^m=O(m^{-1/2})$. Also, in the above, we have defined the index placement on $\tilde\gamma^{-1}$ so that 
\begin{align}\label{tilde_gamma_inv_def}
(\tilde\gamma^{-1})^{ij}\tilde\gamma_{jk}=\delta^i_k,
\end{align}
and for any $v_i$ we define the contraction $(\tilde\gamma^{-1}v)^i=(\tilde\gamma^{-1})^{ij}v_j$.  Finally, note   that what we call $\tilde F$ here was simply called $F$ in \cite{BirrellHomogenization}, whereas here we use $F$ to refer to \req{eq:F_def}.

As stated previously, a comprehensive list of assumptions that guarantee the  above convergence and boundedness properties  can be found in Appendix  \ref{app:assump}.   Of particular significance, we assume that the symmetric part, $\gamma$, of $\tilde\gamma$ is positive definite; this ensures that the term involving $\gamma$ is dissipative, and  is also sufficient to ensure that $\tilde\gamma$ is invertible.

\subsection{Summary of New Results}\label{sec:result_summary}
The main content of the present work is the derivation of a hierarchy of approximating equations for the position degrees of freedom, generalizing the $O(m^{1/2})$-accurate  \req{q_SDE}, that is capable of approximating $q_t^m$ to order $O(m^{\ell/2})$ for any $\ell$. This is done in Section \ref{sec:bounded_forcing} under appropriate boundedness assumptions on the coefficients of the equation.  Specifically, in Theorem \ref{convergence_theorem} we show that for each $\ell\in \mathbb{Z}^+$ there is a family of $\mathbb{R}^n$-valued semimartingales, $R^{\ell-1,m}_t$ such that the solutions to the SDEs
\begin{align}\label{approx_hierarchy}
dq^{\ell,m}_t=&\tilde \gamma^{-1}(t,q^{\ell,m}_t)F(t,q^{\ell,m}_t)dt+S(t,q^{\ell,m}_t)dt\\
&+\tilde \gamma^{-1}(t,q^{\ell,m}_t)\sigma(t,q^{\ell,m}_t) dW_t +\sqrt{m}dR^{\ell-1,m}_t \notag
\end{align}
 satisfy
\begin{align}\label{conv_rate}
&\sup_{t\in[0,T]}E[\|q^m_t-q^{\ell,m}_t\|^p]^{1/p}=O(m^{\ell/2}),\\
&E\left[\sup_{t\in[0,T]}\|q^m_t-q^{\ell,m}_t\|^p\right]^{1/p}=O(m^{\ell/2-\epsilon})\notag
\end{align}
for all $T>0$, $p>0$, $\epsilon>0$. We call $q^{\ell,m}_t$ the solution at the $\ell$'th level of the hierarchy.

  In Section \ref{sec:unbounded} we will use the technique developed in  \cite{herzog2015small} to significantly relax the assumption of bounded forcing, while still obtaining convergence in probability:
\begin{align}\label{conv_rate2}
\lim_{m\to 0} P\left(\frac{\sup_{t\in[0,T]}\|q^m_t-q^{\ell,m}_t\|}{m^{\ell/2-\epsilon}}>\delta\right)=0
\end{align}
for all $T>0$, $\delta>0$, $\epsilon>0$, $\ell\in\mathbb{Z}^+$; see Theorem \ref{thm:conv_in_prob}.

 The hierarchy begins with $R^{0,m}_t\equiv 0$, $q^{1,m}_t\equiv q_t$, the solution to the homogenized SDE, \req{q_SDE}. We emphasize that $R^{\ell-1,m}_t$ acts as an external forcing semimartingale, and is {\em not} dependent on $q^{\ell,m}_t$.  See Chapter V in \cite{protter2013stochastic} and Appendix \ref{app:gen_SDE} below for the general theory of SDEs that include  forcing terms of this type.

  Each $R^{\ell-1,m}_t$ will be defined in terms of  $q^{\ell-1,m}|_{[0,t]}$, the approximation at the $\ell-1$st step  up to time $t$.  In fact, it will be useful to think of the $R_t^{\ell,m}$ as functions of a continuous semimartingale.  Thought of this way, they will satisfy $R_t^{\ell-1,m}=R_t^m[q^{\ell-1,m}]$. The mapping $R_t^m$ is constructed by comparing the SDE for $q_t^m$ with the homogenized SDE, \req{q_SDE}, for $q_t$ and extracting  the  error (i.e., remainder) terms.  This will be carried out in Section \ref{sec:remainder}; see Definition \ref{main_def} for the precise definition of $R_t^m[y]$.

\section{Outline of the Proof in a Simplified Setting}\label{sec:simple_proof}
   The convergence rates in \req{conv_rate} will be obtained by showing that, for an appropriate class of continuous semimartingales, $y$, the remainder terms $R_t^{m}[y]$ are  Lipschitz transformations of $y|_{[0,t]}$ (Lipschitz with respect to pairs of norms that will be specified below), and then inductively using a Gronwall's inequality argument.   

The proof of these Lipschitz properties is quite technical, and so we  first provide an outline of our argument in  the following simplified setting: Here we work in $n=1$ dimensions, and consider the SDE
\begin{align}
dq_t^m &= v_t^m\,dt, \label{eq:q_SDE_simple}\\
m\,dv_t^m &= F\left(q_t^m\right)\,dt - \gamma v_t^m\,dt + \sigma\,dW_t,\notag
\end{align}
where  $\gamma, \sigma$ are positive constants or, in terms of $u_t^m\equiv m v_t^m$:
\begin{align}
dq_t^m &= \frac{1}{m}u_t^m\,dt, \label{eq:q_SDE_simple2}\\
du_t^m &= F\left(q_t^m\right)\,dt - \frac{1}{m}\gamma u_t^m\,dt + \sigma\,dW_t.\notag
\end{align}
Solving the second equation for $\frac{1}{m}\gamma u_t^m\,dt$, substituting into the first, and rewriting in terms of 
\begin{align}
z_t^m \equiv \sqrt{m}v_t^m=u_t^m/\sqrt{m}
\end{align}
(the velocity, normalized to be of order $1$), one finds
\begin{align}
dq_t^m = \gamma^{-1}F\left(q_t^m\right)\,dt + \gamma^{-1}\sigma\,dW_t - \sqrt{m}\gamma^{-1}\,dz_t^m.
\end{align}
The last term is of order $\sqrt{m}$, so it vanishes in the limit $m \to 0$.   Indeed, it can be shown that the limit $q_t$ of the processes $q_t^m$ satisfies 
\begin{align}\label{eq:overdamped_simple}
dq_t = \gamma^{-1}F\left(q_t\right)\,dt + \gamma^{-1}\sigma\,dW_t
\end{align}
(see \cite{Nelson1967}).  This is a considerably simpler equation than the original system and in many situations its solution $q_t$  furnishes a good approximation of  $q_t^m$.  Here we are interested in a more accurate result, approximating $q_t^m$ by an $m$-dependent process that can still be obtained as a solution to a first-order SDE, albeit a somewhat more complicated one.  In fact, we will obtain a hierarchy of such equations, whose solutions will approximate $q_t^m$ to within an arbitrary power of $m$.  To implement it, we do not neglect the remainder 
\begin{align}
\sqrt{m}\gamma^{-1}\,dz_t^m.
\end{align}
Instead, we rewrite it by solving the equation for $u_t^m$ as an inhomogeneous linear equation (and multiplying the result by $\frac{1}{\sqrt{m}}$ to obtain $z_t^m$):
\begin{align}
z_t^m = { 1 \over \sqrt{m}}e^{-{\gamma \over m}t}\left(\sqrt{m}z_0^m + \int_0^te^{{\gamma \over m}s}F\left(q_s^m\right)\,ds + \int_0^t e^{{\gamma \over m}s}\sigma\,dW_s\right).
\end{align}
Substituting this expression into the remainder term, we obtain the delay equation
\begin{align}\label{eq:delay_simple}
dq^m_t = \gamma^{-1}F\left(q_t\right)\,dt  + \gamma^{-1}\sigma\,dW_t + \sqrt{m}\,dR_t^m[q^m],
\end{align}
where, for an arbitrary continuous semimartingale $y_t$, we define
\begin{align}
dR_t^m[y] = \gamma^{-1}\,dz^m_t[y]
\end{align}
with 
\begin{align}
{z}^m_t[y] = { 1 \over \sqrt{m}}e^{-{\gamma \over m}t}\left(\sqrt{m}\tilde{z}_0 + \int_0^te^{{\gamma \over m}s}F\left(y_t\right)\,ds + \int_0^t e^{{\gamma \over m}s}\sigma\,dW_s\right).
\end{align}
The equation for $q_t^m$ can now be thought of as a fixed point problem.  We will solve it iteratively, defining a sequence $q_t^{\ell,m}$ of approximate solutions, starting from $q_t^{1, m} = q_t$---the solution of the homogenized equation in the $m \to 0$ limit, and, given $q_t^{\ell-1,m}$, defining $q_t^{\ell, m}$ as the solution of the SDE
\begin{align}\label{eq:hierarchy_SDE_simple}
dq_t^{\ell,m} = \gamma^{-1}F\left(q_t^{\ell,m}\right)\,dt + \gamma^{-1}\sigma\,dW_t + \sqrt{m}dR_t^m\left[q^{\ell-1,m}\right].
\end{align}
Note that the the first two terms are  the same as in the SDE \req{eq:overdamped_simple}, while the last term is a fixed semimartingale forcing term, i.e., it does not depend on the process, $q^{\ell,m}_t$, that one is solving for.  As we will see, these two features are maintained in the general case.

We prove the claimed $O(m^{\ell/2})$  difference between $q^{\ell,m}_t$ and $q^m_t$ under a variety of norms, by using Gronwall's inequality and a bootstrapping argument. For specificity, here we outline the argument for the norm
\begin{align}
\|y-\tilde y\|_{2,T}\equiv \sup_{0\leq t\leq T} E[|y_t-\tilde y_t|^2]^{1/2}.
\end{align}

Subtracting \req{eq:delay_simple} and \req{eq:hierarchy_SDE_simple}, using the triangle inequality, and the simple bound $(a+b)^2\leq 2(a^2+b^2)$ for $a,b\geq 0$, we find
\begin{align}\label{eq:delta_q_simple}
&\|q^m-q^{\ell,m}\|_{2,T}^2\\
\leq &2 \| \int_0^t \gamma^{-1}F\left(q^m_s\right)- \gamma^{-1}F\left(q^{\ell,m}_s\right) ds\|_{2,T}^2+2{m}\|R^m[q^m]-R^m[q^{\ell-1,m}]\|_{2,T}^2.\notag
\end{align}
Suppose now that  $F$ is Lipschitz  and  $R^m[y]$ is Lipschitz in $y$ for the norm $\|\cdot\|_{2,T}$ (the latter is an oversimplification, but will allow us to illustrate the main idea without additional technical complications).  Estimating the first term in \req{eq:delta_q_simple} by using the Lipschitz property for $F$, along with the Cauchy-Schwarz inequality on the Lebesgue integral, and the second, using the Lipschitz property of $R^m$; one arrives at
\begin{align}
\|q^m-q^{\ell,m}\|_{2,T}^2\leq C\int_0^T \|q^m-q^{\ell,m}\|_{2,t}^2dt+Cm\|q^m-q^{\ell-1,m}]\|_{2,T}^2
\end{align}
for some constant $C>0$ (independent of $m$).  Gronwall's inequality then yields
\begin{align}\label{eq:error_simple}
\|q^m-q^{\ell,m}\|_{2,T}^2\leq Cm\|q^m-q^{\ell-1,m}]\|_{2,T}^2
\end{align}
for (a different) $C>0$.  Taking the square root, one finds that the iteration from $q^{\ell-1,m}$ to $q^{\ell,m}$ has improved the error by a factor of $\sqrt{m}$.  Starting with the base case  $\|q^m-q^{1,m}\|_{2,T}=\|q^m-q\|_{2,T}=O(m^{1/2})$ (this is  the error bound for the  overdamped limit, proven in \cite{BirrellHomogenization}) one  obtains the claimed error bounds  $\|q^m-q^{\ell,m}\|_{2,T}=O(m^{\ell/2})$   for all $\ell\in\mathbb{Z}^+$.

The  proof in the general case follows the above outline, but introduces several technical complications:
\begin{enumerate}
\item Working in arbitrary dimension $n\geq 1$, and with state-dependent, matrix-valued $\gamma$ and $\sigma$, complicates the derivation of the overdamped limit \req{eq:overdamped_simple}, as well as the remainder term in \req{eq:delay_simple}.  The required computations are  found in \cite{BirrellHomogenization}, but we outline them in Section \ref{sec:remainder} for completeness. 

\item We prove below that $R^m[y]$ is  Lipschitz under pairs of related norms, but not with respect to a single norm as was assumed above; see Lemmas  \ref{lemma:sup_E_R_bound} and \ref{lemma:E_sup_R_bound}.  This constitutes the greatest technical hurdle in this paper, and requires repeated use of the  inequalities collected in Appendix \ref{app:ineq}.

\item The computations in \req{eq:delta_q_simple} - \req{eq:error_simple}  must be generalized beyond the $\|\cdot\|_{2,T}$-norm, to accommodate the usage of the norm-pairs mentioned in item (2).  In addition, the stochastic integral terms will no longer cancel, due to state-dependence of $\gamma$ and $\sigma$.  These generalizations  require further use of the  inequalities from Appendix  \ref{app:ineq}; see Theorem \ref{convergence_theorem}.
\end{enumerate}

\section{Derivation of the Approximation Hierarchy for Bounded Forcing}\label{sec:bounded_forcing}
Having outlined our argument, we  now begin a detailed derivation of the hierarchy of approximating equations, in the general setting laid out in Section \ref{sec:prior_results} and under the assumptions from Appendix \ref{app:assump}; in particular, for bounded forcing.  The next two subsections lay the analytical groundwork, while the definition of the approximating hierarchy and the convergence proof are found in Section \ref{sec:conv_proof}.

\subsection{Identifying the Remainder Terms}\label{sec:remainder}
It is convenient to rewrite the system \req{q_eq0} - \req{v_eq} in terms of $u_t^m$ (see \req{u_def}):
\begin{align}
dq_t^m=&\frac{1}{m }u_t^m dt,\label{q_eq}\\
d(u^m_t)_i=&\left(-\frac{1}{m}\tilde \gamma_{ik}(t,q_t^m) (u_t^m)^k+  F_i(t,q^m_t)\right)dt+\sigma_{i\rho}(t,q_t^m)dW^\rho_t.\label{u_eq}
\end{align}
The next step is to combine the SDEs for $q^m_t$ and $u_t^m$ and decompose the result into two pieces: one that becomes the homogenized SDE, \req{q_SDE}, in the $m\to 0$ limit, and a remainder term that will motivate the definition of $R^m_t[y]$.

 \req{u_eq} is a linear equation for $u_t^m$, so the pathwise solution to
\begin{align}
\frac{d}{dt}\Phi_t^m=-\frac{1}{m}\tilde \gamma(t,q^m_t)\Phi^m_t,\hspace{2mm} \Phi^m_0=I,
\end{align}
(i.e., the fundamental-solution process; see Appendix \ref{app:fund_sol})  furnishes us with an explicit formula for $u_t^m$ in terms of $q_t^m$:
\begin{align}\label{u_sol}
u^m_t=\Phi^m_t\left(u^m_0+\int_0^t (\Phi^m_s)^{-1}F(s,q^m_s) ds+\int_0^t(\Phi^m_s)^{-1} \sigma(s,q^m_s) dW_s\right).
\end{align}

In principle, the above formula for $u_t^m$ allows one to formulate a delay equation for $q_t^m$ (i.e., with the right hand side depending on $q^m|_{[0,t]}$) by substituting \req{u_sol} into   \req{q_eq}.  However, doing so in this form does little to shed light on the behavior in the singular $m\to 0$ limit. Nevertheless, by first rewriting the equation for $q_t^m$ in an equivalent form we can turn this into a fruitful idea.

We begin by mimicking the convergence proof of $q_t^m$ to $q_t$, as found in \cite{BirrellHomogenization},  and separating the terms that survive in the $m\to 0$ limit from the remaining $O(\sqrt{m})$ error terms, which will then be used to define $R^m_t[y]$.  To make this section more self-contained, we will repeat a portion of that derivation here:

First solve \req{u_eq} for  $\frac{1}{m}u_t^mdt$ and substitute into \req{q_eq}  to obtain
\begin{align}\label{q_subs_eq}
d(q_t^m)^i=&(\tilde\gamma^{-1})^{ij}(t,q_t^m) F_j(t,q_t^m)dt+(\tilde\gamma^{-1})^{ij}(t,q_t^m)\sigma_{j\rho}(t,q_t^m)dW^\rho_t\\
&-(\tilde\gamma^{-1})^{ij}(t,q_t^m)d(u^m_t)_j.\notag
\end{align}
Integrating the last term by parts results in
\begin{align}
&-(\tilde\gamma^{-1})^{ij}(t,q_t^m)d(u^m_t)_j=- d((\tilde\gamma^{-1})^{ij}(t,q_t^m)(u^m_t)_j)\\
&+(u_t^m)_j\partial_t(\tilde\gamma^{-1})^{ij}(t,q_t^m)dt+\frac{1}{m}(u_t^m)_j(u^m_t)_k\delta^{kl}\partial_{q^l}(\tilde\gamma^{-1})^{ij}(t,q_t^m)  dt.\notag
\end{align}
From \req{results_summary1b}, we see that $u_t^m/\sqrt{m}$ is $O(1)$ as $m\to 0$, so the last term above is $O(1)$ as $m\to 0$ and must be further decomposed to identify the $q$-dependent piece that survives in the limit. To do this, use \req{u_eq} to compute
\begin{align}\label{Scott's_trick}
&d((u_t^m)_i(u_t^m)_j)=(u_t^m)_id(u_t^m)_j+(u_t^m)_jd(u_t^m)_i+d[u^m_i,u^m_j]_t\\
=&\frac{1}{m}(-(u_t^m)_i\tilde\gamma_{jk}(t,q_t^m)-(u_t^m)_j\tilde\gamma_{ik}(t,q_t^m))(u_t^m)_l \delta^{kl}dt\notag\\
&+\left((u_t^m)_iF_j(t,q^m_t)+(u_t^m)_jF_i(t,q^m_t)\right)dt\notag\\
&+(u_t^m)_i\sigma_{j\rho}(t,q_t^m)dW^\rho_t+(u_t^m)_j\sigma_{i\rho}(t,q_t^m)dW^\rho_t+\Sigma_{ij}(t,q_t^m)dt.\notag
\end{align}
 We wish to solve for $\frac{1}{m}(u_t^m)_j(u^m_t)_k dt$, hence we rewrite this as
\begin{align}\label{Lyapunov_eq}
&\frac{1}{m}(\tilde\gamma_{jk}(t,q_t^m)(u_t^m)_l(u_t^m)_i+\tilde\gamma_{ik}(t,q_t^m)(u_t^m)_l(u_t^m)_j)\delta^{kl}dt\\
=&-d((u_t^m)_i(u_t^m)_j)+\left((u_t^m)_iF_j(t,q^m_t)+(u_t^m)_jF_i(t,q^m_t)\right)dt\notag\\
&+(u_t^m)_i\sigma_{j\rho}(t,q_t^m)dW^\rho_t+(u_t^m)_j\sigma_{i\rho}(t,q_t^m)dW^\rho_t+\Sigma_{ij}(t,q_t^m)dt.\notag
\end{align}
If $\tilde\gamma$ is scalar-valued we can immediately solve for $\frac{1}{m}(u_t^m)_j(u^m_t)_k dt$.  In general, one must solve a Lyapunov equation (see Eq.(4.15) and surrounding material in \cite{BirrellHomogenization} for details). Doing so, and substituting back into \req{q_subs_eq} results in
\begin{align}\label{q_eq2}
d(q_t^m)^i=&(\tilde\gamma^{-1})^{ij}(t,q_t^m)F_j(t,q^m_t)dt+Q^{ikl}(t,q_t^m)J_{kl}(t,q_t^m)dt\\
&+(\tilde\gamma^{-1})^{ij}(t,q_t^m)\sigma_{j\rho}(t,q_t^m)dW^\rho_t+\sqrt{m}d(R^m_t)^i,\notag
\end{align}
where
\begin{align}\label{J_def}
J_{ij}(t,q)\equiv G_{ij}^{kl}(t,q)\Sigma_{kl}(t,q),
\end{align}
\begin{align}
G_{ij}^{kl}(t,q)\equiv \delta^{rk}\delta^{sl}\int_0^\infty (e^{-\zeta \tilde\gamma(t,q)})_{ir} (e^{-\zeta \tilde\gamma(t,q)})_{js} d\zeta,
\end{align}
\begin{align}\label{Q_def}
 Q^{ijl}(t,q)\equiv \partial_{q^k}(\tilde\gamma^{-1})^{ij}(t,q) \delta^{kl},
\end{align}
and, defining the $O(1)$ processes
\begin{align}\label{z_def}
z_t^m\equiv u_t^m/\sqrt{m},
\end{align}
\begin{align}\label{R_def}
d(R^m_t)^i\equiv&- d((\tilde\gamma^{-1})^{ij}(t,q_t^m)(z^m_t)_j)+(z_t^m)_j\partial_t(\tilde\gamma^{-1})^{ij}(t,q_t^m)dt\\
&+Q^{ikl}(t,q_t^m)G_{kl}^{ab}(t,q_t^m)\left((z_t^m)_a F_b(t,q^m_t)+(z_t^m)_b F_a(t,q^m_t)\right)dt\notag\\
& +(z_t^m)_a(z_t^m)_b(z_t^m)^c\partial_{q^c}(Q^{ikl}G_{kl}^{ab})(t,q_t^m) dt\notag\\
&+Q^{ikl}(t,q_t^m)G_{kl}^{ab}(t,q_t^m)\left((z_t^m)_a\sigma_{b\rho}(t,q_t^m)+(z_t^m)_b\sigma_{a\rho}(t,q_t^m)\right)dW^\rho_t\notag\\
&-\sqrt{m}d(Q^{ikl}(t,q_t^m)G_{kl}^{ab}(t,q_t^m)(z_t^m)_a(z_t^m)_b)\notag\\
&+ \sqrt{m} (z_t^m)_a(z_t^m)_b \partial_t(Q^{ikl}G_{kl}^{ab})(t,q_t^m)dt,\notag
\end{align}
with $R^m_0=0$. Though the precise form of $R^m_t$ is less than intuitive, we emphasize that it simply constitutes all of the terms in the equation for $q^m_t$ that are shown in  \cite{BirrellHomogenization} to not  contribute in the $m\to0$ limit. These terms will of course contribute to the higher-order approximations.

We have written $R^m_t$ in terms of $z_t^m$ (which is $O(1)$) so that the order in $m$ of each term is more obvious.  Note that the above definition of the remainder $R_t^m$ differs from that in \cite{BirrellHomogenization} by the factor of $\sqrt{m}$ that we have explicitly pulled out in \req{q_eq2}. Also, in obtaining \req{R_def}  we have integrated the term $-Q^{ikl}(t,q_t^m)G_{kl}^{ab}(t,q_t^m)d((u_t^m)_a(u_t^m)_b)$ from \cite{BirrellHomogenization} by parts.

One can then use \req{u_sol} to write $z_t^m$ in terms of $q_t^m$ and substitute into \req{R_def}.  In this manner, we can view   \req{q_eq2} as a delay equation for $q_t^m$, which will be the basis for the rest of the derivation.

\subsection{Lipschitz and Boundedness Properties}

As discussed above, we are viewing $z_t^m$ as defined in terms of $q^m|_{[0,t]}$ via \req{u_sol} and \req{z_def}, and similarly for $R_t^m$, by substituting the expression for $z_t^m$ into \req{R_def}.

It will be useful to view both $z_t^m$ and $R_t^m$ as functions of an arbitrary continuous semimartingale, $y$, as follows:
\begin{definition}
For $y$ a continuous semimartingale, define
\begin{align}\label{z_func_def}
&z^m_t[y]\\
\equiv&\frac{1}{\sqrt{m}}\Phi^m_t[y]\left(\sqrt{m}z^m_0+\int_0^t (\Phi^m_s[y])^{-1}F(s,y_s) ds+\int_0^t(\Phi^m_s[y])^{-1} \sigma(s,y_s) dW_s\right),\notag
\end{align}
where $\Phi_t^m[y]$ is defined pathwise as the solution to
\begin{align}
\frac{d}{dt}\Phi_t^m=-\frac{1}{m}\tilde \gamma(t,y_t)\Phi^m_t,\hspace{2mm} \Phi^m_0=I.
\end{align}
Using this, now  define 
\begin{align}\label{R_func_def}
d(R^m_t[y])^i\equiv&- d((\tilde\gamma^{-1})^{ij}(t,y_t)(z^m_t[y])_j)+(z_t^m[y])_j\partial_t(\tilde\gamma^{-1})^{ij}(t,y_t)dt\\
&+Q^{ikl}(t,y_t)G_{kl}^{ab}(t,y_t)\left((z_t^m[y])_a F_b(t,y_t)+(z_t^m[y])_b F_a(t,y_t)\right)dt\notag\\
& +(z_t^m[y])_a(z_t^m[y])_b(z_t^m[y])^c\partial_{q^c}(Q^{ikl}G_{kl}^{ab})(t,y_t) dt\notag\\
&+Q^{ikl}(t,y_t)G_{kl}^{ab}(t,y_t)\left((z_t^m[y])_a\sigma_{b\rho}(t,y_t)+(z_t^m[y])_b\sigma_{a\rho}(t,y_t)\right)dW^\rho_t\notag\\
&-\sqrt{m}d(Q^{ikl}(t,y_t)G_{kl}^{ab}(t,y_t)(z_t^m[y])_a(z_t^m[y])_b)\notag\\
&+ \sqrt{m} (z_t^m[y])_a(z_t^m[y])_b \partial_t(Q^{ikl}G_{kl}^{ab})(t,y_t)dt\notag
\end{align}
with $R^m_0[y]=0$.
\end{definition}

For any such $y$, $\Phi_t^m[y]$ is a pathwise-$C^1$ semimartingale, and $z_t^m[y]$ 
and $R_t^m[y]$ are continuous semimartingales. In terms of these maps, the processes entering the delay equation for $q_t^m$, \req{q_eq2} (see also \req{R_def}),  are given by $R_t^m=R_t^m[q^m]$ and $z_t^m=z_t^m[q^m]$, as the notation suggests.

We will denote by $Y$ the space of continuous semimartingales with
\begin{align}\label{Y_def}
E\left[\sup_{t\in [0,T]}\|y_t\|^p\right]<\infty \text{ for all $T>0$, $p>0$}.
\end{align}
We will now show that $\Phi_t^m[y]$, $z_t^m[y]$, and $R_t^m[y]$ satisfy several Lipschitz and boundedness properties for $y\in Y$.

\begin{lemma}
Let $y,\tilde y\in Y$.  Then for any  $T>0$ there exists $L_{T}>0$ such that for all $0\leq s\leq t\leq T$  we have the pathwise bound
\begin{align}\label{Phi_lip}
&\|\Phi_{t}^m[y](\Phi_s^m[y])^{-1}-\Phi_{t}^m[\tilde y](\Phi_s^m[\tilde y])^{-1}\|\\
\leq& \frac{L_T}{m} \int_s^{t} \|y_{r}-\tilde y_{r}\| dr e^{-\lambda (t-s)/m} .\notag
\end{align}

\end{lemma}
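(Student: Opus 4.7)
\medskip

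\noindent\textbf{Proof proposal.} Set $\Psi^m_{t,s}[y]\equiv\Phi^m_t[y](\Phi^m_s[y])^{-1}$, the two-parameter fundamental solution for the pathwise linear ODE driven by $y$. By the defining ODE and the fact that the family $\{\Phi^m_t[y]\}$ forms a multiplicative cocycle, $\Psi^m_{t,s}[y]$ satisfies
\begin{equation*}
\frac{d}{dt}\Psi^m_{t,s}[y]=-\frac{1}{m}\tilde\gamma(t,y_t)\Psi^m_{t,s}[y],\qquad \Psi^m_{s,s}[y]=I,
\end{equation*}
together with the dual equation $\frac{d}{ds}\Psi^m_{t,s}[y]=\frac{1}{m}\Psi^m_{t,s}[y]\tilde\gamma(s,y_s)$. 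The plan is to use a standard Duhamel/variation-of-constants argument together with two ingredients: (i) the exponential contraction estimate for $\Psi^m_{t,s}$ that follows from positive-definiteness of the symmetric part $\gamma$, and (ii) Lipschitz continuity of $\tilde\gamma$ in its second argument, both of which are consequences of the assumptions listed in Appendix \ref{app:assump}.

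First I would invoke the contraction estimate established earlier (cf.\ Appendix \ref{app:fund_sol}): under the hypothesis that the symmetric part of $\tilde\gamma$ is bounded below by some $\lambda>0$, there is a constant $C>0$ (depending only on dimension and the eigenvalue bounds) such that, uniformly in the sample path of $y$ and in $0\le s\le t\le T$,
\begin{equation*}
\|\Psi^m_{t,s}[y]\|\le C e^{-\lambda (t-s)/m}.
\end{equation*}
Next, I would compute the derivative of the product $r\mapsto \Psi^m_{t,r}[y]\,\Psi^m_{r,s}[\tilde y]$ using both ODEs above. The two terms combine to
\begin{equation*}
\frac{d}{dr}\bigl(\Psi^m_{t,r}[y]\Psi^m_{r,s}[\tilde y]\bigr)=\frac{1}{m}\Psi^m_{t,r}[y]\bigl(\tilde\gamma(r,y_r)-\tilde\gamma(r,\tilde y_r)\bigr)\Psi^m_{r,s}[\tilde y].
\end{equation*}
Integrating from $s$ to $t$ and using $\Psi^m_{t,t}[y]=I=\Psi^m_{s,s}[\tilde y]$ yields the Duhamel identity
\begin{equation*}
\Psi^m_{t,s}[\tilde y]-\Psi^m_{t,s}[y]=\frac{1}{m}\int_s^t \Psi^m_{t,r}[y]\bigl(\tilde\gamma(r,y_r)-\tilde\gamma(r,\tilde y_r)\bigr)\Psi^m_{r,s}[\tilde y]\,dr.
\end{equation*}

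Now I would combine the three ingredients. Taking norms, using submultiplicativity, the contraction bound on each factor $\Psi^m_{t,r}[y]$ and $\Psi^m_{r,s}[\tilde y]$, the crucial semigroup identity $e^{-\lambda(t-r)/m}e^{-\lambda(r-s)/m}=e^{-\lambda(t-s)/m}$ (which lets the exponential be pulled out of the integral), and the Lipschitz bound $\|\tilde\gamma(r,y_r)-\tilde\gamma(r,\tilde y_r)\|\le K\|y_r-\tilde y_r\|$ on $[0,T]\times\mathbb{R}^n$ (which follows from the assumed regularity of $\gamma$ and $\psi$), I obtain
\begin{equation*}
\|\Psi^m_{t,s}[y]-\Psi^m_{t,s}[\tilde y]\|\le \frac{KC^{2}}{m}\,e^{-\lambda(t-s)/m}\int_s^{t}\|y_r-\tilde y_r\|\,dr,
\end{equation*}
and setting $L_T=KC^{2}$ closes the proof.

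The only real issue is justifying the two input estimates cleanly in the present setting. The contraction bound is standard once one knows that the symmetric part of $\tilde\gamma$ is uniformly positive-definite in $t\in[0,T]$, which is part of the running assumptions. The main point to watch is that the Lipschitz estimate for $\tilde\gamma$ must be global in the spatial argument and uniform in $t\in[0,T]$, so that $L_T$ is deterministic and the bound holds pathwise despite $y,\tilde y$ being unbounded semimartingales; this is precisely the purpose of the boundedness hypotheses on the derivatives of $\gamma$ and $\psi$ stipulated in Appendix \ref{app:assump}. Aside from that, the argument is a direct manipulation of deterministic linear ODEs sample-path by sample-path.
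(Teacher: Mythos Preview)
Your proof is correct and follows essentially the same strategy as the paper: both arguments combine the exponential contraction estimate for the fundamental solution with the Lipschitz property of $\tilde\gamma$ via a Duhamel (variation-of-constants) identity. The only cosmetic difference is that the paper time-shifts to start at $r=0$ and then invokes the pre-packaged Lemma~\ref{fund_matrix_diverg} (whose proof is precisely the Duhamel argument you wrote out), whereas you carry out the Duhamel computation directly on the two-parameter family $\Psi^m_{t,s}$; the resulting bounds are identical up to the harmless constant $C$ in your contraction estimate, which the paper takes to be $1$ by working with the $\ell^2$ operator norm.
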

\begin{proof}
For $0\leq r\leq t-s$ define $\Phi_1(r)=\Phi_{r+s}^m[y](\Phi_s^m[y])^{-1}$ and $\Phi_2(r)=\Phi_{r+s}^m[\tilde y](\Phi_s^m[\tilde y])^{-1}$.    Lemma \ref{fund_matrix_diverg} applies to the $\Phi_i$, where $B_1(r)=-m^{-1}\tilde\gamma(r+s,y_{r+s})$, $B_2(r)=-m^{-1}\tilde\gamma(r+s,\tilde y_{r+s})$, and $\mu=-\lambda/m$,  and gives the pathwise bound
\begin{align}
&\|\Phi_{t}^m[y](\Phi_s^m[y])^{-1}-\Phi_{t}^m[\tilde y](\Phi_s^m[\tilde y])^{-1}\|\\
\leq& \frac{1}{m} \int_s^{t} \|\tilde\gamma(r,y_{r})-\tilde\gamma(r,\tilde y_{r})\| dr e^{-\lambda (t-s)/m}.\notag
\end{align}
The claimed result then follows from the fact that $\tilde\gamma$ is $C^1$ with bounded derivative on $[0,T]\times\mathbb{R}^n$ (see the assumptions in Appendix \ref{app:assump}).
\end{proof}

Next we show Lipschitz and boundedness properties for $z_t^m[y]$ and $R_t^m[y]$ under various norms.  These are the key technical results in this work; once they are established, the claimed  convergence rates in \req{conv_rate} follow from a rather straightforward application of Gronwall's inequality and a bootstrapping argument.

  In the following, we will repeatedly use the  technique of bounding a difference of products by rewriting it as
\begin{align}\label{difference property}
&ab-\tilde a\tilde b=(a-\tilde a)b+\tilde a(b-\tilde b),\\
&  abc-\tilde a\tilde b\tilde c=(a-\tilde a)bc+\tilde a(b-\tilde b)c+\tilde a\tilde b( c-\tilde c),\,\,\,\text{ etc,}\notag
\end{align}
and then employing bounds on each of the terms and their differences.  Bounds will be obtained by using the properties in Appendix \ref{app:assump} along with repeated usage of the following inequalities: H{\"o}lder's inequality (H), H{\"o}lder's inequality for finite sums (HFS), Minkowski's inequality for integrals (MI), the $L^p$-triangle inequality (T), and the the Burkholder-Davis-Gundy inequality (BDG).  These inequalities are commonly used in the literature, but  we have restated them in Appendix \ref{app:ineq} for convenience, along with textbook citations.  We will use the above abbreviations to indicate where the various inequalities are used. When multiple inequalities are used to obtain a particular line, we list the inequalities in the order they were applied.
\begin{lemma}\label{z_lip_lemma}
Let $y,\tilde y\in Y$.  Then for any $m_0>0$,  $T>0$, $q>p\geq 2$ there exist $C$, $L$ such that. for $0<m\leq m_0$, $0\leq t\leq T$:
\begin{align}
&\sup_{s\in [0,t]}E\left[\|z_s^m[y]-z_s^m[\tilde y]\|^p\right]^{1/p}\leq L\sup_{s\in [0,t]}E\left[\|y_s-\tilde y_s\|^{q}\right]^{1/{q}}
\end{align}
and
\begin{align}\label{sup_E_z_bound}
\sup_{t\in [0,T]}E\left[\|z_t^m[y]\|^p\right]\leq C.
\end{align}
We emphasize that the $C$ and $L$ are independent of $m$.
\end{lemma}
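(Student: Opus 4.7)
The plan is to derive the linear SDE satisfied by $z_t^m[y]$ from its explicit definition \req{z_func_def} and then run energy estimates via It\^o's formula on $\|z_t^m[y]\|^p$, leveraging the coercivity of $\tilde\gamma$ and the boundedness of $F,\sigma$ from Appendix \ref{app:assump}. Differentiating \req{z_func_def} shows that $z_t^m[y]$ solves $dz_t=-\tfrac{1}{m}\tilde\gamma(t,y_t)z_t\,dt+\tfrac{1}{\sqrt m}F(t,y_t)\,dt+\tfrac{1}{\sqrt m}\sigma(t,y_t)\,dW_t$ with initial condition $z_0^m$.

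For the boundedness bound \req{sup_E_z_bound} I would apply It\^o to $\|z_t^m[y]\|^p$ (using the regularization $(\|x\|^2+\varepsilon)^{p/2}$ if $p$ is not an even integer). The leading drift term $-\tfrac{p}{m}\|z\|^{p-2}z^T\tilde\gamma(t,y_t) z\le -\tfrac{p\lambda}{m}\|z\|^p$ supplies a $1/m$-strength dissipation, while the $F$-contribution $\tfrac{p}{\sqrt m}\|z\|^{p-1}\|F\|_\infty$ and the quadratic-variation contribution $\tfrac{C_p}{m}\|z\|^{p-2}\|\sigma\|_\infty^2$ each split, via Young's inequality, into a small multiple of $\tfrac{1}{m}\|z\|^p$ (absorbed into the dissipation) plus an $m$-dependent remainder of order at most $1/m$. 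Taking expectations I arrive at a differential inequality $\tfrac{d}{dt}E[\|z_t\|^p]\le -\tfrac{c\lambda}{m}E[\|z_t\|^p]+C/m$, and Gronwall then yields a bound independent of $m\in(0,m_0]$ and $t\in[0,T]$.

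For the Lipschitz claim I would subtract the SDEs satisfied by $z_s^m[y]$ and $z_s^m[\tilde y]$ to obtain, for $D_s:=z_s^m[y]-z_s^m[\tilde y]$, the linear SDE $dD_s=-\tfrac{1}{m}\tilde\gamma(s,y_s)D_s\,ds+\Delta_s\,ds+g_s\,dW_s$ with $D_0=0$, where $\Delta_s=-\tfrac{1}{m}(\tilde\gamma(s,y_s)-\tilde\gamma(s,\tilde y_s))z_s^m[\tilde y]+\tfrac{1}{\sqrt m}(F(s,y_s)-F(s,\tilde y_s))$ and $g_s=\tfrac{1}{\sqrt m}(\sigma(s,y_s)-\sigma(s,\tilde y_s))$. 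Applying It\^o to $\|D_s\|^p$, invoking the coercivity of $\tilde\gamma$, the Lipschitz continuity of $\tilde\gamma,F,\sigma$ from Appendix \ref{app:assump}, and Young's inequality to absorb the $\|D_s\|^{p-1}$ and $\|D_s\|^{p-2}$ prefactors into the dissipation, produces a differential inequality of the form $\tfrac{d}{ds}E[\|D_s\|^p]\le -\tfrac{c\lambda}{m}E[\|D_s\|^p]+\tfrac{C}{m}E[\|y_s-\tilde y_s\|^p\|z_s^m[\tilde y]\|^p]+\tfrac{C}{m}E[\|y_s-\tilde y_s\|^p]$.

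The main obstacle, and the precise reason the hypothesis requires $q>p$, is the cross-term $E[\|y_s-\tilde y_s\|^p\|z_s^m[\tilde y]\|^p]$: the random factor $\|z_s^m[\tilde y]\|$ does not factor into an $L^p$-norm of $y-\tilde y$ alone. I would split it by H\"older's inequality with conjugate exponents $q/p$ and $q/(q-p)$ as $E[\|y_s-\tilde y_s\|^q]^{p/q}\cdot E[\|z_s^m[\tilde y]\|^{pq/(q-p)}]^{(q-p)/q}$; the second factor is finite and independent of $m\in(0,m_0]$ by the already-established boundedness estimate \req{sup_E_z_bound} applied at the higher exponent $pq/(q-p)$, which is precisely where the strict inequality $q>p$ is used. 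Gronwall's inequality (exploiting the $1/m$-strength damping to neutralize the $1/m$ prefactors on the source) then yields $E[\|D_t\|^p]\le L^p\sup_{r\le t}E[\|y_r-\tilde y_r\|^q]^{p/q}$; taking the $p$-th root and the supremum over $s\in[0,t]$ gives the claimed inequality with a constant $L$ independent of $m\in(0,m_0]$.
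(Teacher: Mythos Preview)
Your argument is correct and takes a genuinely different route from the paper. The paper works directly with the explicit variation-of-constants representation \req{z_func_def}: it rewrites the stochastic convolution via \req{convol_decomp}, derives a pathwise bound on $\|z^m_t[y]-z^m_t[\tilde y]\|$ in terms of $\|y_r-\tilde y_r\|$ and various stochastic integrals of $\sigma$ (this is \req{delta_z_bound}), and then pushes the $L^p$-norm through using Minkowski's inequality for integrals, the Burkholder--Davis--Gundy inequality, and H\"older. You instead read off the linear SDE for $z^m_t[y]$, apply It\^o's formula to $\|z\|^p$ and $\|D\|^p$, and let the coercivity $z^T\tilde\gamma z\ge\lambda\|z\|^2$ (which kills the antisymmetric part) produce a differential inequality with $-\tfrac{c}{m}$ damping; Gronwall then neutralizes the $1/m$ prefactors on the source terms. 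Both proofs isolate the same mechanism that forces $q>p$: the cross-term $\|y_s-\tilde y_s\|\cdot\|z^m_s[\tilde y]\|$ requires H\"older with exponents $q/p$ and $q/(q-p)$ together with the already-proved moment bound on $z^m[\tilde y]$ at the higher exponent $pq/(q-p)$.

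Your approach is more economical for this lemma. The paper's approach, however, yields a \emph{pathwise} inequality (\req{delta_z_bound} and \req{norm_z_bound}) as an intermediate step, and this is what the paper reuses verbatim in Lemma~\ref{lemma:E_sup_z} to control the $E[\sup_t\,\cdot\,]$ norm, where the It\^o-on-$\|D\|^p$ method does not directly apply. So the paper's longer route here is an investment that pays off in the next lemma; if you follow your route, you will need a separate argument (e.g.\ It\^o plus a BDG step on the martingale part, or a return to the explicit representation) to handle the sup-inside estimates later.
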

\begin{proof}
Decomposing the stochastic convolution as in \req{convol_decomp},
\begin{align}\
&\Phi^m_t[y] \int_0^t(\Phi^m_s[y])^{-1} \sigma(s,y_s) dW_s\\
=&\Phi^m_t[y] \int_0^t \sigma(s,y_s) dW_s+\frac{1}{m}\Phi^m_t[y]\int_0^t(\Phi^m_s[y])^{-1} \tilde \gamma(s,y_s) \int_s^t \sigma(r,y_r) dW_r ds,\notag
\end{align}
 we can write
\begin{align}
&z^m_t[y]\\
=&\Phi^m_t[y]z^m_0+\frac{1}{\sqrt{m}}\int_0^t \Phi^m_t[y](\Phi^m_s[y])^{-1}F(s,y_s) ds+\frac{1}{\sqrt{m}}\Phi^m_t[y] \int_0^t \sigma(s,y_s) dW_s\notag\\
&+\frac{1}{m^{3/2}}\int_0^t\Phi^m_t[y](\Phi^m_s[y])^{-1} \tilde \gamma(s,y_s) \int_s^t \sigma(r,y_r) dW_r ds.\notag
\end{align}

Let  $0\leq t\leq T$.  First, use the technique of \req{difference property} to bound the norm of the difference as follows:
\begin{align}
&\|z^m_t[y]-z^m_t[\tilde y]\|\leq\|\Phi^m_t[y]-\Phi^m_t[\tilde y]\|\|z^m_0\|\\
&+\frac{1}{\sqrt{m}}\int_0^t \|\Phi^m_t[y](\Phi^m_s[y])^{-1}- \Phi^m_t[\tilde y](\Phi^m_s[\tilde y])^{-1}\|\|F(s, y_s)\|ds\notag\\
&+\frac{1}{\sqrt{m}}\int_0^t\| \Phi^m_t[\tilde y](\Phi^m_s[\tilde y])^{-1}\|\|F(s, y_s) - F(s,\tilde y_s)\| ds\notag\\
&+\frac{1}{\sqrt{m}}\|\Phi^m_t[y]\|\| \int_0^t \sigma(s,y_s) -\sigma(s,\tilde y_s) dW_s\|\notag\\
&+\frac{1}{\sqrt{m}}\|\Phi^m_t[y] -\Phi^m_t[\tilde y]\|\| \int_0^t \sigma(s,\tilde y_s) dW_s\|\notag\\
&+\frac{1}{m^{3/2}}\!\!\int_0^t\!\!\|\Phi^m_t[y](\Phi^m_s[y])^{-1}\!\! -\Phi^m_t[\tilde y](\Phi^m_s[\tilde y])^{-1}\| \|\tilde \gamma(s, y_s) \|\|\int_s^t \!\!\sigma(r, y_r) dW_r\|ds\notag\\
&+\frac{1}{m^{3/2}}\!\!\int_0^t\!\!\|\Phi^m_t[\tilde y](\Phi^m_s[\tilde y])^{-1}\| \|\tilde \gamma(s, y_s)\|\| \int_s^t\!\! \sigma(r, y_r) - \sigma(r,\tilde y_r) dW_r\|ds\notag\\
&+\frac{1}{m^{3/2}}\!\!\int_0^t\!\!\|\Phi^m_t[\tilde y](\Phi^m_s[\tilde y])^{-1}\| \| \tilde \gamma(s, y_s) - \tilde \gamma(s,\tilde y_s)\|\| \int_s^t \!\!\sigma(r,\tilde y_r) dW_r\|ds.\notag
\end{align}

Let $L$ denote a constant, independent of $m$, that may change from line to line. Using \req{Phi_lip}, \req{fund_sol_decay}, and the bounds and Lipschitz properties from Appendix \ref{app:assump}, we obtain
\begin{align}\label{delta_z_bound}
&\|z^m_t[y]-z^m_t[\tilde y]\|\\
\leq&\frac{LL_T}{m}\int_0^t\|y_r-\tilde y_r\|dre^{-\lambda t/m}+\frac{L_T\|F\|_\infty}{m^{3/2}}\int_0^t  \int_s^{t} \|y_{r}-\tilde y_{r}\| dr e^{-\lambda (t-s)/m} ds\notag
\end{align}
\begin{align}
&+\frac{L}{\sqrt{m}}\int_0^te^{-\lambda(t-s)/m}\| y_s - \tilde y_s\| ds+\frac{e^{-\lambda t/m}}{\sqrt{m}}\| \int_0^t \sigma(s,y_s) -\sigma(s,\tilde y_s) dW_s\|\notag\\
&+\frac{L_T}{m^{3/2}} e^{-\lambda t/m}\int_0^{t} \|y_{r}-\tilde y_{r}\|  \| \int_0^t \sigma(s,\tilde y_s) dW_s\|dr\notag\\
&+\frac{L_T\|\tilde\gamma\|_\infty}{m^{5/2}}\int_0^t \int_s^{t} \|y_{r}-\tilde y_{r}\| dr e^{-\lambda (t-s)/m} \|\int_s^t \sigma(r, y_r) dW_r\|ds\notag\\
&+\frac{\|\tilde\gamma\|_\infty}{m^{3/2}}\int_0^te^{-\lambda(t-s)/m} \| \int_s^t \sigma(r, y_r) - \sigma(r,\tilde y_r) dW_r\|ds\notag\\
&+\frac{L}{m^{3/2}}\int_0^te^{-\lambda(t-s)/m} \|  y_s - \tilde y_s\|\| \int_s^t \sigma(r,\tilde y_r) dW_r\|ds.\notag
\end{align}
Here, $\|g\|_\infty\equiv \sup_{t\in[0,T],q\in\mathbb{R}^n}\|g(t,q)\|$.  

Next we  compute an $L^p$ bound  by using the inequalities from Appendix \ref{app:ineq} (indicated by the abbreviations in parentheses; if more than one inequality is used, we list them in order of usage): Let $q>p\geq 2$ and define $\tilde p=pq/(q-p)$. First use the triangle inequality (T) to bound the $L^p$-norm of $\|z^m_t[y]-z^m_t[\tilde y]\|$ by the sum of $L^p$-norms of each of the terms on the right-hand-side in \req{delta_z_bound}, and then bound each term as follows:
\begin{align}
&E[\|z^m_t[y]-z^m_t[\tilde y]\|^p]^{1/p}\\
&\leq\frac{LL_T}{m}e^{-\lambda t/m}\int_0^tE\left[\|y_r-\tilde y_r\|^p\right]^{1/p}dr&\text{(MI)}\notag\\
&+\frac{L_T\|F\|_\infty}{m^{3/2}}\int_0^t  \int_s^{t} e^{-\lambda (t-s)/m}E\left[\|y_{r}-\tilde y_{r}\|^p\right]^{1/p}drds&\text{(MI)}\notag\\
&+\frac{L}{\sqrt{m}}\int_0^te^{-\lambda(t-s)/m}E\left[\| y_s - \tilde y_s\|^p\right]^{1/p}ds&\text{(MI)}\notag\\
&+\frac{1}{\sqrt{m}}e^{-\lambda t/m}E\left[\| \int_0^t \sigma(s,y_s) -\sigma(s,\tilde y_s) dW_s\|^p\right]^{1/p}\notag\\
&+\frac{L_T}{m^{3/2}} e^{-\lambda t/m}\int_0^{t}E\left[ \|y_{r}-\tilde y_{r}\|^q\right]^{1/q}E\left[ \| \int_0^t \sigma(s,\tilde y_s) dW_s\|^{\tilde p}\right]^{1/\tilde p}\!\!\!\!\!dr&\text{(MI,H)} \notag\\
&+\frac{L_T\|\tilde\gamma\|_\infty}{m^{5/2}}\int_0^t \int_s^{t} e^{-\lambda (t-s)/m}E\left[ \|y_{r}-\tilde y_{r}\|^q\right]^{1/q}&\text{(MI,H)}\notag\\
&\hspace{30mm}\times E\left[  \|\int_s^t \sigma(\tilde r, y_{\tilde r}) dW_{\tilde r}\|^{\tilde p}\right]^{1/\tilde p}\!\!\!\!\!drds\notag
\end{align}
\begin{align}
&+\frac{\|\tilde\gamma\|_\infty}{m^{3/2}}\int_0^t e^{-\lambda(t-s)/m}E\left[ \| \int_s^t \sigma(r, y_r) - \sigma(r,\tilde y_r) dW_r\|^p\right]^{1/p}\!\!\!\!\!ds &\text{(MI)}\notag\\
&+\frac{L}{m^{3/2}}\int_0^t\!e^{-\lambda(t-s)/m}E\left[ \|  y_s - \tilde y_s\|^q\right]^{1/q}E\left[\| \int_s^t \sigma(r,\tilde y_r) dW_r\|^{\tilde p}\right]^{1/\tilde p}\!\!\!\!\!ds.&\text{(MI,H)}\notag
\end{align}
The  uses of H{\"o}lder's inequality are all with the conjugate exponents $q/p$ and $q/(q-p)$.

Bounding the stochastic integrals via the Burkholder-Davis-Gundy inequality, using the bounds and Lipschitz properties form Appendix   \ref{app:assump}, and extracting the powers of $m$ from the Lebesgue integrals  gives
\begin{align}
E[\|z^m_t[y]-&z^m_t[\tilde y]\|^p]^{1/p}\leq L\frac{ t}{m}e^{-\lambda t/m} \sup_{r\in[0,t]}E\left[\|y_r-\tilde y_r\|^p\right]^{1/p}\\
&+Lm^{1/2}\int_0^{t/m}  u e^{-\lambda u}du  \sup_{r\in[0,t]}E\left[\|y_r-\tilde y_r\|^p\right]^{1/p}\notag\\  
&+L{m}^{1/2}\int_0^{t/m} e^{-\lambda u} du \sup_{s\in[0,t]}E\left[\|y_s-\tilde y_s\|^p\right]^{1/p}\notag\\
&+\frac{1}{\sqrt{m}}e^{-\lambda t/m}E\left[\left( \int_0^t \|y_s -\tilde y_s\|^2ds\right)^{p/2}\right]^{1/p}&\text{(BDG)}\notag\\
&+L(t/m)^{3/2} e^{-\lambda t/m} \sup_{r\in[0,t]}E\left[ \|y_{r}-\tilde y_{r}\|^q\right]^{1/q}&\text{(BDG)}\notag\\
&+L\int_0^{t/m} u^{3/2} e^{-\lambda u}du\sup_{r\in[0,t]}E\left[ \|y_{r}-\tilde y_{r}\|^q\right]^{1/q}&\text{(BDG)}\notag\\
&+\frac{L}{m^{3/2}}\int_0^t e^{-\lambda(t-s)/m}E\left[\left( \int_s^t \| y_r - \tilde y_r\|^2 dr\right)^{p/2}\right]^{1/p}ds&\text{(BDG)}\notag\\
&+L\int_0^{t/m}e^{-\lambda u} u^{1/2}du\sup_{s\in[0,t]}E\left[ \|  y_s - \tilde y_s\|^q\right]^{1/q}&\text{(BDG)}\notag\\
\leq& L \sup_{s\in[0,t]}E\left[ \|y_{s}-\tilde y_{s}\|^q\right]^{1/q}.&\text{(H)}\notag
\end{align}
We have used several times the fact that $\sup_{m>0}\sup_{t\geq 0} (t/m)^ke^{-\lambda  t/m}<\infty$ for all $k\geq 0$.  To obtain the last line, we used H{\"o}lder's inequality to  bound all expectations by the $L^q$-norm; the condition $p\geq 2$ was needed to use H\"older's inequality on the term in the third-to-last line. Taking a supremum over $t$ on the left hand side gives the claimed Lipschitz bound.

The bound on $z^m_t$ proceeds using the same tools. First we bound
\begin{align}\label{norm_z_bound}
&\|z^m_t[y]\|\\
\leq&Ce^{-\lambda t/m}+\frac{\|F\|_\infty}{\sqrt{m}}\int_0^t e^{-\lambda(t-s)/m} ds+\frac{1}{\sqrt{m}}e^{-\lambda t/m} \|\int_0^t \sigma(s,y_s) dW_s\|\notag\\
&+\frac{\|\tilde \gamma\|_\infty}{m^{3/2}}\int_0^t e^{-\lambda(t-s)/m} \|\int_s^t \sigma(r,y_r) dW_r\| ds.\notag
\end{align}
Hence (letting $C$ vary from line to line) for $p\geq 2$ we obtain
\begin{align}
&E[\|z^m_t[y]\|^p]^{1/p}\\
\leq& Ce^{-\lambda t/m}+\|F\|_\infty\sqrt{m}\int_0^{t/m}\!\!\!\! e^{-\lambda u} du&\text{(T)}\notag\\
&+\frac{1}{\sqrt{m}}e^{-\lambda t/m} E\left[\|\int_0^t \sigma(s,y_s) dW_s\|^p\right]^{1/p}&\text{(T)}\notag\\
&+\frac{\|\tilde \gamma\|_\infty}{m^{3/2}}E\left[\left(\int_0^t e^{-\lambda(t-s)/m} \|\int_s^t \sigma(r,y_r) dW_r\| ds\right)^p\right]^{1/p}&\text{(T)}\notag\\
\leq&C+\frac{C}{\sqrt{m}}e^{-\lambda t/m} E\left[\left(\int_0^t \|\sigma(s,y_s)\|^2 ds\right)^{p/2}\right]^{1/p}&\text{(BDG)}\notag\\
&+\frac{C}{m^{3/2}}\int_0^te^{-\lambda(t-s)/m}E\left[\left(  \int_s^t \|\sigma(r,y_r)\|^2dr \right)^{p/2}\right]^{1/p}ds.&\text{(MI,BDG)}\notag
\end{align}
We have assumed $\sigma$ is bounded, therefore
\begin{align}
&E[\|z^m_t[y]\|^p]^{1/p}\\
\leq&C+C(t/m)^{1/2}e^{-\lambda t/m} +\frac{C}{m^{3/2}}\int_0^te^{-\lambda(t-s)/m}(t-s)^{1/2}ds\notag\\
\leq&C+C(t/m)^{1/2}e^{-\lambda t/m}  +C\int_0^{t/m}e^{-\lambda u}u^{1/2}du.\notag
\end{align}
Taking the supremum over $t\in[0,T]$, we arrive at the claimed bound \req{sup_E_z_bound}.
\end{proof}

Similarly to the previous lemma, but this time also employing Lemma  \ref{Phi_int_bound}, we have:
\begin{lemma}\label{lemma:E_sup_z}
Let $y,\tilde y\in Y$.  Then for any $m_0>0$,  $T>0$, $p\geq 2$, $\epsilon>0$ there exist $q>p$, $C$, $L$ such that for $0<m\leq m_0$, $0\leq t\leq T$:
\begin{align}\label{E_sup_lipschitz_bound}
&E\left[\sup_{s\in [0,t]}\|z_s^m[y]-z_s^m[\tilde y]\|^p\right]^{1/p}\leq \frac{L}{m^\epsilon}E\left[\sup_{s\in [0,t]}\|y_s-\tilde y_s\|^{q}\right]^{1/{q}}
\end{align}
and
\begin{align}\label{E_sup_z_bound}
E\left[\sup_{t\in [0,T]}\|z_t^m[y]\|^p\right]^{1/p}\leq C/m^\epsilon.
\end{align}
Again, $C$ and $L$ are independent of $m$.  Also note that, in contrast with the previous result, $q$ depends on $\epsilon$ and can't be chosen arbitrarily.
\end{lemma}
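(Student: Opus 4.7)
The plan is to mirror the structure of the proof of Lemma \ref{z_lip_lemma}, but with the supremum inside the expectation. Starting from the same decomposition of $z_t^m[y]$ via the stochastic convolution identity \req{convol_decomp}, and applying the product-difference identity \req{difference property}, we obtain the same eight pathwise bounds on $\|z_t^m[y]-z_t^m[\tilde y]\|$ that appear in \req{delta_z_bound}. The new task is to control $E[\sup_{s\in[0,t]}(\cdots)^p]^{1/p}$ of each of these eight terms rather than $\sup_{s\in[0,t]}E[(\cdots)^p]^{1/p}$.

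For the terms built only from Lebesgue integrals and the decaying semigroup $\Phi^m$, one may first push the supremum over $s\in[0,t]$ through each time integral using the pointwise decay $e^{-\lambda s/m}$ and the Lipschitz bound \req{Phi_lip}, then apply Minkowski's inequality for integrals (MI) to bring the $L^p$-norm inside the Lebesgue integral. This yields contributions controlled by $E[\sup_{r\in[0,t]}\|y_r-\tilde y_r\|^p]^{1/p}$, times an $m$-independent constant (after rescaling $u=s/m$ as in the previous lemma).

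The main obstacle is the four terms containing a stochastic integral. Here the natural move is to invoke Lemma \ref{Phi_int_bound}, which bounds $E[\sup_{s\in[0,t]}\|\Phi^m_s[y]\int_0^s(\Phi^m_r[y])^{-1}H_r\,dW_r\|^p]$ for an adapted integrand $H$ in terms of the $L^p$-norm of $H$, with a factor of at most $m^{-\epsilon}$ (for any $\epsilon>0$, after choosing a sufficiently high moment $q>p$). For the two terms where a pathwise factor $\|y_r-\tilde y_r\|$ or $\|\Phi^m_t[y]-\Phi^m_t[\tilde y]\|$ multiplies a stochastic convolution evaluated at times up to $t$, one applies H\"older's inequality with conjugate exponents $q/p$ and $q/(q-p)$ (HFS) to split the expectation: the first factor gives $E[\sup\|y-\tilde y\|^q]^{1/q}$, while the second factor is a moment of a stochastic convolution controlled by Lemma \ref{Phi_int_bound} and absorbed into the $m^{-\epsilon}$ loss. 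The remaining stochastic terms, which have $\sigma(s,y_s)-\sigma(s,\tilde y_s)$ as integrand, are handled directly by Lemma \ref{Phi_int_bound} together with the Lipschitz property of $\sigma$ and Jensen's inequality to convert the resulting $L^q$ bound on $\|y_s-\tilde y_s\|$ into an $L^q$ bound on the supremum. The requirement $q>p$ enters precisely because H\"older's inequality trades a sharper exponent for an $m^{-\epsilon}$ loss in the stochastic-convolution moments coming from Lemma \ref{Phi_int_bound}.

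The bound \req{E_sup_z_bound} is proved by the same template applied to the pathwise estimate \req{norm_z_bound}, now taking $\tilde y = 0$ in the stochastic terms and bounding $\|\sigma\|$ by $\|\sigma\|_\infty$; the only step that produces $m$-loss is again the stochastic convolution term, which Lemma \ref{Phi_int_bound} controls at the price of $m^{-\epsilon}$. The hardest step, as in the earlier lemma, is the term $m^{-3/2}\int_0^t e^{-\lambda(t-s)/m}\|\int_s^t\sigma(r,y_r)\,dW_r\|\,ds$: after bringing the supremum in $t$ inside, one must bound the double maximum $\sup_t\int_0^t e^{-\lambda(t-s)/m}\sup_{s\le \tau\le t}\|\int_s^\tau\sigma\,dW\|\,ds$, for which the rescaling $u=(t-s)/m$ and BDG yield a finite $m$-independent integral up to the unavoidable $m^{-\epsilon}$ factor inherited from Lemma \ref{Phi_int_bound}.
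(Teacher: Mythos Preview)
Your high-level plan is right, and you have correctly identified Lemma \ref{Phi_int_bound} as the new ingredient. But two concrete steps are missing, and without them the argument does not close.

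First, you mischaracterize Lemma \ref{Phi_int_bound}. It is \emph{not} an $L^p$ bound on the full stochastic convolution with an $m^{-\epsilon}$ factor; it is a pathwise inequality of the form
\[
\sup_{t\in[0,T]}\int_0^t (t-s)^j e^{-\alpha(t-s)}\Big\|\int_s^t V_r\,dW_r\Big\|\,ds
\le \frac{C_j}{\alpha^{j+1}}\Big(\max_{1\le \ell\le N}\sup_{\tau\in[(\ell-1)\delta,\,T_{\ell+1}]}\Big\|\int_{(\ell-1)\delta}^\tau V_r\,dW_r\Big\|
+ e^{-\alpha\delta/2}\sup_{\tau\in[0,T]}\Big\|\int_0^\tau V_r\,dW_r\Big\|\Big),
\]
with a free partition parameter $\delta>0$ and $N\sim T/\delta$. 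The $m^{-\epsilon}$ loss is \emph{produced} by what you do next: bound the max over $\ell$ by an $\ell^{\tilde q}$ norm, apply BDG on each of the $N$ pieces (each of length $\le 2\delta$), and then choose $\delta=m^\kappa$ with $\kappa\in(0,1)$ so that the factor $N^{1/(p\tilde q)}\delta^{1/2}\sim m^{\kappa(1/2-1/(p\tilde q))}$ beats the prefactor $m^{-1/2}$ (or $m^{-3/2}$, etc.) up to $m^{-\epsilon}$. Your ``rescaling $u=(t-s)/m$ and BDG'' line does not do this: once the supremum over $t$ is inside, a direct rescaling plus BDG leaves a hard factor of $m^{-1/2}$, not $m^{-\epsilon}$. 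The partition-and-$\ell^{\tilde q}$ argument is the actual mechanism, and you need to state it.

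Second, Lemma \ref{Phi_int_bound} only handles the $E_{2,j}$-type terms, i.e.\ those of the form $\int_0^t (t-s)^j e^{-\lambda(t-s)/m}\|\int_s^t V_r\,dW_r\|\,ds$. Two of the stochastic terms in \req{delta_z_bound} (and one in \req{norm_z_bound}) are instead of the $E_{1,j}$-type, $t^j e^{-\lambda t/m}\|\int_0^t V_s\,dW_s\|$, to which Lemma \ref{Phi_int_bound} does not apply. The paper handles these by the integration-by-parts identity
\[
t^j e^{-\lambda t/m}\int_0^t V_s\,dW_s=\int_0^t s^j e^{-\lambda s/m}V_s\,dW_s+\int_0^t\Big(\int_0^s V_r\,dW_r\Big)\partial_s\big(s^j e^{-\lambda s/m}\big)\,ds,
\]
after which BDG and MI give a clean $m^{j+1/2}$ bound (in fact with no $m^{-\epsilon}$ loss at all). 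You omit this step entirely; your proposal routes these terms through Lemma \ref{Phi_int_bound}, which does not cover them.
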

\begin{proof}
Let $C$ be a constant that varies from line to line. To derive  \req{E_sup_z_bound} we start from \req{norm_z_bound} and use the triangle inequality (T) to compute 
 \begin{align}\label{eq:E_sup_z}
&E\left[\sup_{t\in[0,T]}\|z^m_t[y]\|^p\right]^{1/p}\!\!\!\!\!\!
\leq C+\frac{1}{\sqrt{m}}E\left[\left(\sup_{t\in[0,T]}e^{-\lambda t/m} \|\int_0^t \sigma(s,y_s) dW_s\|\right)^p\right]^{1/p}\\
&+\frac{C}{m^{3/2}}E\left[\left(\sup_{t\in[0,T]}\int_0^t e^{-\lambda(t-s)/m} \|\int_s^t \sigma(r,y_r) dW_r\| ds\right)^p\right]^{1/p}.\notag
\end{align}

Here, and in the following, we will need to bound expected values of the following types:
\begin{align}\label{E1_def}
E_{1,j}\equiv E\left[\left(\sup_{t\in[0,T]}t^je^{-\lambda t/m} \|\int_0^t V_s dW_s\|\right)^p\right]^{1/p}
\end{align}
and
\begin{align}\label{E2_def}
E_{2,j}\equiv E\left[\left(\sup_{t\in[0,T]}\int_0^t (t-s)^je^{-\lambda(t-s)/m} \|\int_s^t V_r dW_r\| ds\right)^p\right]^{1/p},
\end{align}
where $j\in\mathbb{Z}_0$ and $V_s$ is a continuous adapted $\mathbb{R}^{n\times k}$-valued process.

First, for any $T>0$, $p\geq 2$,  use integration by parts to write
\begin{align}
&E_{1,j}= E\left[\left(\sup_{t\in[0,T]} \!\|\int_0^t \!s^je^{-\lambda s/m} V_s dW_s+\!\int_0^t  \int_0^s V_rdW_r \partial_s(s^je^{-\lambda s/m})ds\|\right)^p\right]^{1/p}
\end{align}
and then use the inequalities from Appendix \ref{app:ineq}  to obtain
\begin{align}
E_{1,j}\leq& E\left[\left(\sup_{t\in[0,T]} \|\int_0^t s^je^{-\lambda s/m} V_s dW_s\|\right)^p\right]^{1/p}&\text{(T)}\notag\\
&+\int_0^TE\left[  \|\int_0^s V_rdW_r\|^p \right]^{1/p}h_{j,m}(s)ds&\text{(T,MI)}\notag\\
\leq& C_{p,n}^{1/p}\left(\int_0^T E\left[  \|V_s\|^{p}\right]^{2/p}s^{2j}e^{-2\lambda s/m}ds\right)^{1/2}&\text{(BDG,MI)}\notag\\
&+C_{p,n}^{1/p}\int_0^T\!\!\left(\int_0^s E[\|V_r\|^p]^{2/p}dr\right)^{1/2}\!\!\!h_{j,m}(s)ds&\text{(BDG,MI)}\notag\\
\leq& C_{1,p,n,j,\lambda}m^{j+1/2}\sup_{s\in[0,T]}E\left[  \|V_s\|^{p}\right]^{1/p},\label{eq:E1_bound}
\end{align}
\begin{align}
&h_{j,m}(s)\equiv js^{j-1}e^{-\lambda s/m}+s^je^{-\lambda s/m}\lambda/m,\notag\\
&C_{1,p,n,j,\lambda}\equiv C_{p,n}^{1/p}\!\left(\!\left(\int_0^\infty\!\!\! u^{2j}e^{-2\lambda u}du\right)^{1/2}\!\!\!\!\!\!+\!\int_0^\infty \!\!\!u^{1/2}(ju^{j-1}e^{-\lambda u}+\lambda u^je^{-\lambda u})du\!\right).\notag
\end{align}
Here, $C_{p,n}$ denotes the constant from  (the vector version of) the Burkholder-Davis-Gundy inequality. Note that $p\geq 2$ was needed for the second and third usage of Minkowski's inequality for integrals.

\req{E2_def} can be bounded similarly, but this time we also need to employ the bound from Lemma \ref{Phi_int_bound}.  Using this, and then the triangle inequality (T), for any $T>0$, $\delta>0$, $p\geq 2$ we have
\begin{align}
E_{2,j}\leq&\frac{C_jm^{j+1}}{\lambda^{j+1}} \bigg(E\left[\max_{\ell=1,...,N}\sup_{\tau\in[t_{\ell-1},T_{\ell+1}]}\|\int_{t_{\ell-1}}^{\tau}V_rdW_r\|^p\right]^{1/p}\\
&+e^{-\lambda \delta/(2m)}E\left[  \sup_{\tau\in[0,T]}\|\int_0^\tau V_r dW_r\|^p\right]^{1/p}\bigg),\notag
\end{align} 
where $N\equiv\min\{\ell\in\mathbb{Z}:\ell\delta\geq T\}$, $t_{\ell-1}\equiv (\ell-1)\delta$, $T_{\ell+1}\equiv \min\{(\ell+1)\delta,T\}$, and $C_j$ depends only on $j$.  

For any $\tilde q\geq 1$ we can bound the maximum of an $N$-term sequence by its $\ell^{\tilde q}$ norm.  This, along with the inequalities from Appendix \ref{app:ineq}, yields 
\begin{align}
&E_{2,j}\\
\leq&\frac{C_jm^{j+1}}{\lambda^{j+1}} \left(\!E\left[\left(\sum_{\ell=1}^N\sup_{\tau\in[t_{\ell-1},T_{\ell+1}]}\|\int_{t_{\ell-1}}^{\tau}\!\!\!\!\!\!V_rdW_r\|^{p\tilde q}\right)^{1/\tilde q}\right]^{1/p}\right.\notag\\
&\left.+C_{p,n}^{1/p}e^{-\lambda \delta/(2m)}E\left[  \left(\int_0^T \|V_r\|^2dr\right)^{p/2}\right]^{1/p}\right)&\text{(BDG)}\notag\\
\leq&\frac{C_jm^{j+1}}{\lambda^{j+1}} \left(\!\left(\sum_{\ell=1}^NE\left[\sup_{\tau\in[t_{\ell-1},T_{\ell+1}]}\|\int_{t_{\ell-1}}^{\tau}\!\!\!\!\!\!V_rdW_r\|^{p\tilde q}\right]\right)^{1/(p\tilde q)}\right.&\text{(H)}\notag\\
&\left.+C_{p,n}^{1/p}e^{-\lambda \delta/(2m)}\left(\int_0^TE\left[   \|V_r\|^{p}\right]^{2/p}dr\right)^{1/2}\right)&\text{(MI)}\notag\\
\leq&\frac{C_jm^{j+1}}{\lambda^{j+1}}\! \left(\!C_{p\tilde q,n}^{1/(p\tilde q)}\left(\sum_{\ell=1}^NE\left[\left(\int_{t_{\ell-1}}^{T_{\ell+1}}\|V_r\|^2dr\right)^{p\tilde q/2}\right]\right)^{1/(p\tilde q)}\right.&\text{(BDG)}\notag\\
&+T^{1/2}C_{p,n}^{1/p}e^{-\lambda \delta/(2m)}\sup_{r\in[0,T]}E\left[   \|V_r\|^{p}\right]^{1/p}]\bigg)\notag\\
\leq&\frac{C_jm^{j+1}}{\lambda^{j+1}}\! \left(C_{p\tilde q,n}^{1/(p\tilde q)}\left((1+T/\delta )\left(2\delta\right)^{p\tilde q/2}\right)^{1/(p\tilde q)}\!\!+T^{1/2}C_{p,n}^{1/p}e^{-\lambda \delta/(2m)}\right)&\text{(MI,H)}\notag\\
&\times \sup_{r\in[0,T]}E\left[\|V_r\|^{p\tilde q}\right]^{1/(p\tilde q)}.\notag
\end{align} 
In obtaining the last inequality, we used $N\leq 1+T/\delta$. 

For the current purposes, it is useful to let $\delta=m^\kappa$ for $\kappa\in (0,1)$.  Hence we have shown there is a constant $C_{T,m_0,p,\kappa,\tilde q,n,\lambda,j}$, with the indicated dependencies, such that
\begin{align}
E_{2,j}\leq&\frac{C_jm^{j+1}}{\lambda^{j+1}} \left(2^{1/2}C_{p\tilde q,n}^{1/(p\tilde q)}(m^\kappa+T)^{1/(p\tilde q)}m^{\kappa(1/2-1/(p\tilde q))}\right.\notag\\
&\left.+T^{1/2}C_{p,n}^{1/p}e^{-\lambda /(2m^{1-\kappa})}\right)\sup_{r\in[0,T]}E\left[\|V_r\|^{p\tilde q}\right]^{1/(p\tilde q)}\notag\\
\leq&C_{T,m_0,p,\kappa,\tilde q,n,\lambda,j} m^{j+1+\kappa(1/2-1/(p\tilde q))}\sup_{r\in[0,T]}E\left[\|V_r\|^{p\tilde q}\right]^{1/(p\tilde q)}.\label{eq:E2_bound}
\end{align}
We note that the constant  can be chosen to be increasing in $T$.

Applying the bounds on $E_{1,j}$ and $E_{2,j}$ to \req{eq:E_sup_z} implies that, for any $\tilde q\geq 1$, $\kappa\in(0,1)$, we have
 \begin{align}
&E\left[\sup_{t\in[0,T]}\|z^m_t[y]\|^p\right]^{1/p}\leq C+ C_{1,p,n,0,\lambda}\sup_{s\in[0,T]}E[\|\sigma(s,y_s)\|^p]^{1/p}\\
&+CC_{T,m_0,p,\kappa,\tilde q,n,\lambda,0} m^{-(1/2-\kappa(1/2-1/(p\tilde q)))}\sup_{r\in[0,T]}E\left[\|\sigma(r,y_r)\|^{p\tilde q}\right]^{1/(p\tilde q)}.\notag
\end{align}
$\sigma$ is bounded, so for any $\epsilon>0$ we can fix $\kappa$ and $\tilde q$ and find a $C$ so that
\begin{align}
&E\left[\sup_{t\in[0,T]}\|z^m_t[y]\|^p\right]^{1/p}\leq C/m^\epsilon
\end{align}
as claimed.

For the Lipschitz bound we go back to \req{delta_z_bound} and compute
\begin{align}
&\|z^m_t[y]-z^m_t[\tilde y]\|\\
\leq&\bigg(\frac{LL_T}{m}te^{-\lambda t/m}+\frac{L_T\|F\|_\infty}{m^{3/2}}\int_0^t  (t-s)e^{-\lambda (t-s)/m} ds+\frac{L}{\sqrt{m}}\int_0^te^{-\lambda(t-s)/m} ds\notag\\
&+\frac{L_T}{m^{3/2}} te^{-\lambda t/m} \| \int_0^t \sigma(s,\tilde y_s) dW_s\|\notag\\
&+\frac{L_T\|\tilde\gamma\|_\infty}{m^{5/2}}\int_0^t (t-s) e^{-\lambda (t-s)/m} \|\int_s^t \sigma(r, y_r) dW_r\|ds\notag\\
&+\frac{L}{m^{3/2}}\int_0^te^{-\lambda(t-s)/m} \| \int_s^t \sigma(r,\tilde y_r) dW_r\|ds\bigg)\sup_{r\in[0,t]}\|y_r-\tilde y_r\|\notag\\
&+\frac{\|\tilde\gamma\|_\infty}{m^{3/2}}\int_0^te^{-\lambda(t-s)/m} \| \int_s^t \sigma(r, y_r) - \sigma(r,\tilde y_r) dW_r\|ds\notag\\
&+\frac{1}{\sqrt{m}}e^{-\lambda t/m}\| \int_0^t \sigma(s,y_s) -\sigma(s,\tilde y_s) dW_s\|.\notag
\end{align}

Let  $\tilde t\leq T$, take the supremum $[0,\tilde t]$ and use the triangle inequality (T) to bound the $L^p$-norm of the sum. Next, let  $q>p$,  define $\tilde p=pq/(q-p)$, and use H{\"o}lder's inequality (H) with exponents $q/p$ and $q/(q-p)$ followed by the triangle inequality (T) with exponent $\tilde p$ (again, letting $L$ vary line to line):
\begin{align}
&E\left[\sup_{t\in[0,\tilde t]}\|z^m_t[y]-z^m_t[\tilde y]\|^p\right]^{1/p}\\
\leq&\bigg(L+\frac{L}{m^{3/2}} E\bigg[\bigg(\sup_{t\in[0,\tilde t]}te^{-\lambda t/m} \| \int_0^t \sigma(s,\tilde y_s) dW_s\|\bigg)^{\tilde p}\bigg]^{1/\tilde p}&\!\!\!\!\!\!\!\!\!\text{(H,T)}\notag\\
&+\frac{L}{m^{5/2}}E\bigg[\bigg(\sup_{t\in[0,\tilde t]}\int_0^t (t-s) e^{-\lambda (t-s)/m} \|\int_s^t \sigma(r, y_r) dW_r\|ds\bigg)^{\tilde p}\bigg]^{1/\tilde p}&\!\!\!\!\!\!\!\!\!\text{(H,T)}\notag\\
&+\frac{L}{m^{3/2}}E\bigg[\bigg(\sup_{t\in[0,\tilde t]}\int_0^te^{-\lambda(t-s)/m} \| \int_s^t \sigma(r,\tilde y_r) dW_r\|ds\bigg)^{\tilde p}\bigg]^{1/\tilde p}\bigg)&\!\!\!\!\!\!\!\!\!\text{(H,T)}\notag\\
&\times E\left[\sup_{r\in[0,\tilde t]}\|y_r-\tilde y_r\|^q\right]^{1/q}&\!\!\!\!\!\!\!\!\!\text{(H)}\notag\\
&+\frac{L}{m^{3/2}}\!E\left[\left(\sup_{t\in[0,\tilde t]}\int_0^te^{-\lambda(t-s)/m} \| \int_s^t \sigma(r, y_r) - \sigma(r,\tilde y_r) dW_r\|ds\right)^p\right]^{1/p}\notag\\
&+\frac{1}{\sqrt{m}}E\!\left[\left(\sup_{t\in[0,\tilde t]}e^{-\lambda t/m}\| \int_0^t \sigma(s,y_s) -\sigma(s,\tilde y_s) dW_s\|\right)^p\right]^{1/p}.\notag
\end{align}

Each of the expected values involving $\sigma$ is of the form \req{E1_def} or \req{E2_def}.  Applying the corresponding bounds, \req{eq:E1_bound} and \req{eq:E2_bound},  to the first term results in
\begin{align}
&\frac{1}{m^{3/2}} E\bigg[\bigg(\sup_{t\in[0,\tilde t]}te^{-\lambda t/m} \| \int_0^t \sigma(s,\tilde y_s) dW_s\|\bigg)^{\tilde p}\bigg]^{1/\tilde p}\\
&+\frac{1}{m^{5/2}}E\bigg[\bigg(\sup_{t\in[0,\tilde t]}\int_0^t (t-s) e^{-\lambda (t-s)/m} \|\int_s^t \sigma(r, y_r) dW_r\|ds\bigg)^{\tilde p}\bigg]^{1/\tilde p}\notag\\
&+\frac{1}{m^{3/2}}E\bigg[\bigg(\sup_{t\in[0,\tilde t]}\int_0^te^{-\lambda(t-s)/m} \| \int_s^t \sigma(r,\tilde y_r) dW_r\|ds\bigg)^{\tilde p}\bigg]^{1/\tilde p}\notag\\
\leq& \|\sigma\|_\infty C_{1,pq/(q-p),n,1,\lambda}\label{coeff_1}\\
&+\|\sigma\|_\infty C_{T,m_0,\tilde p,\kappa,\tilde q,n,\lambda,1} m^{-5/2+2+\kappa(1/2-(q-p)/(\tilde q pq))}\notag\\
&+\|\sigma\|_\infty C_{T,m_0,{\tilde p},\kappa,\tilde q,n,\lambda,0} m^{-3/2+1+\kappa(1/2-(q-p)/(\tilde q pq))}\notag
\end{align}
for any $\tilde q\geq 1$, $\kappa\in(0,1)$. 

 Therefore, given $\epsilon>0$ we can choose $\tilde q$, $\kappa$ 
 \begin{align}
&E\left[\sup_{t\in[0,\tilde t]}\|z^m_t[y]-z^m_t[\tilde y]\|^p\right]^{1/p}\leq \frac{L}{m^{\epsilon}} E\left[\sup_{r\in[0,\tilde t]}\|y_r-\tilde y_r\|^q\right]^{1/q}\\
&+\frac{L}{m^{3/2}}E\left[\left(\sup_{t\in[0,\tilde t]}\int_0^te^{-\lambda(t-s)/m} \| \int_s^t \sigma(r, y_r) - \sigma(r,\tilde y_r) dW_r\|ds\right)^p\right]^{1/p}\notag\\
&+\frac{1}{\sqrt{m}}E\left[\left(\sup_{t\in[0,\tilde t]}e^{-\lambda t/m}\| \int_0^t \sigma(s,y_s) -\sigma(s,\tilde y_s) dW_s\|\right)^p\right]^{1/p}\notag
\end{align}
for some $L>0$.

We can similarly use \req{eq:E1_bound} and \req{eq:E2_bound} to bound the last two terms:
\begin{align}
&E\left[\sup_{t\in[0,\tilde t]}\|z^m_t[y]-z^m_t[\tilde y]\|^p\right]^{1/p}\leq \frac{L}{m^{\epsilon}} E\left[\sup_{r\in[0,\tilde t]}\|y_r-\tilde y_r\|^q\right]^{1/q}\\
&+L C_{T,m_0,p,\kappa,\tilde q,n,\lambda,0} m^{-3/2+1+\kappa(1/2-1/(p\tilde q))}\notag\\
&\times \sup_{r\in[0,\tilde t]}E\left[\|\sigma(r, y_r) - \sigma(r,\tilde y_r)\|^{p\tilde q}\right]^{1/(p\tilde q)}\notag\\
&+C_{1,p,n,0,\lambda}\sup_{s\in[0,\tilde t]}E\left[  \|\sigma(s,y_s) -\sigma(s,\tilde y_s)\|^{p}\right]^{1/p}.\notag
\end{align}
Using the  assumption that $\sigma$ is Lipschitz in its second argument, for appropriate choices of $\kappa\in(0,1)$ and $\tilde q\geq 1$  we finally obtain
\begin{align}
&E\left[\sup_{t\in[0,\tilde t]}\|z^m_t[y]-z^m_t[\tilde y]\|^p\right]^{1/p}\leq \frac{L}{m^{\epsilon}} E\left[\sup_{s\in[0,\tilde t]}\|y_s-\tilde y_s\|^{p\tilde q}\right]^{1/(p\tilde q)}.
\end{align}
This completes the proof.
\end{proof}

Next we prove analogous results for $R^m_t[y]$.
\begin{lemma}\label{lemma:sup_E_R_bound}
Let $y,\tilde y\in Y$.  Then for any $m_0>0$,  $T>0$, $q>p\geq 2$, there exist  $C$, $L$ such that for $0<m\leq m_0$, $0\leq t\leq T$:
\begin{align}
&\sup_{s\in [0,t]}E\left[\|R_s^m[y]-R_s^m[\tilde y]\|^p\right]^{1/p}\leq L\sup_{s\in [0,t]}E\left[\|y_s-\tilde y_s\|^{q}\right]^{1/{q}}.
\end{align}
and
\begin{align}
\sup_{t\in[0,T]}E\left[\|R_t^m[y]\|^p\right]\leq C.
\end{align}
Once again,  $C$ and $L$ are independent of $m$.
\end{lemma}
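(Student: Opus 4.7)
The plan is to parallel the proof of Lemma \ref{z_lip_lemma}, applied to the more complicated integrand of $R_t^m[y]$ in \req{R_func_def}. First I would integrate out the two exact differential terms, yielding $R_t^m[y]$ as an explicit sum of boundary terms (a contribution $-(\tilde\gamma^{-1})^{ij}(t,y_t)(z_t^m[y])_j$ plus the initial value, and its analogue coming from the $-\sqrt{m}\,d(Q^{ikl}G_{kl}^{ab}(z_t^m[y])_a(z_t^m[y])_b)$ piece), Lebesgue integrals involving $z_s^m[y]$ linearly, quadratically, and cubically, together with a single stochastic integral linear in $z_s^m[y]$. Each coefficient is a smooth bounded function of $(s,y_s)$ by the assumptions in Appendix \ref{app:assump}; in particular $\tilde\gamma^{-1}$, $Q$, $G$, $F$, $\sigma$, and the relevant $t$- and $q$-derivatives appearing in \req{R_func_def} are all bounded on $[0,T]\times\mathbb{R}^n$.

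For the boundedness claim, I would apply the triangle inequality (T) to split $E[\|R_t^m[y]\|^p]^{1/p}$ into one summand per term of the integrated-out expression, then use Minkowski's integral inequality (MI) for each Lebesgue integral and the Burkholder-Davis-Gundy inequality (BDG) for the stochastic integral. Bounded coefficients come outside the integrands immediately. Linear factors of $z_s^m[y]$ are controlled by \req{sup_E_z_bound}; cubic factors $(z_s^m[y])_a(z_s^m[y])_b(z_s^m[y])^c$ are decoupled via H{\"o}lder's inequality (H) with exponents $(3,3,3)$ into a product of three $L^{3p}$-norms, each bounded uniformly in $m$ by invoking \req{sup_E_z_bound} at exponent $3p$ (this is legitimate since Lemma \ref{z_lip_lemma} holds for every $p\geq 2$). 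The prefactors of $\sqrt{m}$ in front of the quadratic boundary term and the quadratic Lebesgue integral ensure that those pieces remain uniformly bounded for $0<m\leq m_0$.

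For the Lipschitz claim, I would write $R_t^m[y]-R_t^m[\tilde y]$ by applying the product-difference identity \req{difference property} termwise. Each term then splits into contributions where exactly one factor is replaced by a difference of two evaluations, the remaining factors being bounded. Differences of coefficients are controlled via the Lipschitz assumptions of Appendix \ref{app:assump} and produce a factor $\|y_s-\tilde y_s\|$, while differences of $z_s^m[y]$ factors are controlled by the Lipschitz estimate in Lemma \ref{z_lip_lemma}. One then takes $L^p$-norms, applies (MI) and (BDG) as before, and uses H{\"o}lder's inequality with conjugate exponents $q/p$ and $q/(q-p)$ (exactly as in the proof of Lemma \ref{z_lip_lemma}) to couple one Lipschitz factor at $L^q$ with the bounded-norm factors at $L^{pq/(q-p)}$, the latter controlled uniformly in $m$ by \req{sup_E_z_bound} at a sufficiently high moment.

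The main obstacle is the cubic-in-$z$ term $(z_s^m[y])_a(z_s^m[y])_b(z_s^m[y])^c\partial_{q^c}(Q^{ikl}G_{kl}^{ab})(s,y_s)$. When the difference is expanded via \req{difference property} for three factors one obtains four subterms (one from the coefficient difference and three from the $z$-factor differences), and for each of these one must choose H{\"o}lder exponents carefully so that: the Lipschitz factor, available only under the stronger $L^q$-norm of $y-\tilde y$, is placed in the $L^q$ slot; the two remaining $z$-norm factors are absorbed using \req{sup_E_z_bound} at some intermediate moment $p'\geq 2$; and the total exponent structure closes back to $L^p$. This is the same chaining-of-exponents bookkeeping that drives Lemma \ref{z_lip_lemma}, and once it is carried out, taking a supremum over $s\in[0,t]$ on the left-hand side yields the stated bounds.
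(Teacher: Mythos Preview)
Your approach is essentially the paper's: integrate out the exact differentials to obtain the grouping \req{eq:R_grouping}, then treat each piece via \req{difference property}, H{\"o}lder, (MI)/(BDG), and Lemma \ref{z_lip_lemma}. One point needs tightening. When the differenced factor is $z_s^m[y]-z_s^m[\tilde y]$ (rather than a coefficient difference producing $\|y_s-\tilde y_s\|$ directly), placing it in the $L^q$ slot via conjugate exponents $q/p$ and $q/(q-p)$ does not close: Lemma \ref{z_lip_lemma} bounds the $L^{\tilde q}$ norm of the $z$-difference by the $L^q$ norm of $y-\tilde y$ only for $\tilde q$ strictly less than $q$, so putting the $z$-difference at $L^q$ would force you to a higher moment of $y-\tilde y$ than the stated $q$. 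The paper fixes this by choosing an intermediate $p<\tilde q<q$, placing the $z$-difference in $L^{\tilde q}$ and the remaining bounded $z$-factors in $L^{p\tilde q/(\tilde q-p)}$, then invoking Lemma \ref{z_lip_lemma} at the pair $(\tilde q,q)$. With this adjustment to your exponent bookkeeping (including the cubic term, where two bounded $z$-factors and one $z$-difference require a three-way H{\"o}lder split respecting $\tilde q<q$), your outline goes through verbatim.
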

\begin{proof}
We group the terms in $R_t^m[y]$ of similar form as follows:
\begin{align}\label{eq:R_grouping}
&(R^m_t[y])^i\\
=&- ((\tilde\gamma^{-1})^{ij}(t,y_t)(z^m_t[y])_j-(\tilde\gamma^{-1})^{ij}(0,y_0)(z^m_0[y])_j)\notag\\
&-\sqrt{m}\left((Q^{ikl}G_{kl}^{ab})(t,y_t)(z_t^m[y])_a(z_t^m[y])_b\!-\!(Q^{ikl}G_{kl}^{ab})(0,y_0)(z_0^m[y])_a(z_0^m[y])_b\right)\notag\\
&+\int_0^t(z_s^m[y])_j\left(\partial_s(\tilde\gamma^{-1})^{ij}+Q^{ikl}G_{kl}^{jb}F_b+Q^{ikl}G_{kl}^{aj} F_a\right)(s,y_s)ds\notag\\
& +\int_0^t(z_s^m[y])_a(z_s^m[y])_b(z_s^m[y])^c\partial_{q^c}(Q^{ikl}G_{kl}^{ab})(s,y_s) ds\notag\\
&+ \sqrt{m} \int_0^t(z_s^m[y])_a(z_s^m[y])_b \partial_s(Q^{ikl}G_{kl}^{ab})(s,y_s)dt\notag\\
&+\int_0^t (z_s^m[y])_j \left(Q^{ikl}G_{kl}^{jb}\sigma_{b\rho}+Q^{ikl}G_{kl}^{aj}\sigma_{a\rho}\right)(s,y_s)dW^\rho_s.\notag
\end{align}
We will show that each of these terms satisfies the claimed Lipschitz property.     The computations are all similar, and use the same tools as the previous lemmas, so we  illustrate the main ideas while omitting some details:

We will again make repeated use of the \req{difference property}. The estimates of the first two terms are similar, and the derivations rely on the fact that $\tilde\gamma^{-1}$, $Q^{ikl}(t,q)$, and $G_{kl}^{ab}(t,q)$ are bounded and Lipschitz in $q$, uniformly in $t\in[0,T]$, and $z_0^m[y]=z_0^m$ is independent of $y$ and  uniformly bounded (these properties follow from the assumptions in Appendix \ref{app:assump}; boundedness of $G_{kl}^{ab}$ follows from the lower bound on the eigenvalues of $\gamma$). We illustrate the computation with the first sub-term of the second term. Again, let $L$ denote a constant that may vary from line to line and use the triangle inequality (T) to bound the $L^p$-norm as follows:
\begin{align}
&\sup_{t\in[0,\tilde t]}E\left[\|\sqrt{m}(Q^{ikl}G_{kl}^{ab})(t,y_t)(z_t^m[y])_a(z_t^m[y])_b\right.\\
&\left.-\sqrt{m}(Q^{ikl}G_{kl}^{ab})(t,\tilde y_t)(z_t^m[\tilde y])_a(z_t^m[\tilde y])_b\|^p\right]^{1/p}\notag\\
\leq&\sqrt{m}\sup_{t\in[0,\tilde t]}E\left[\|(Q^{ikl}G_{kl}^{ab})(t,y_t)((z_t^m[y])_a(z_t^m[y])_b-(z_t^m[\tilde y])_a(z_t^m[\tilde y])_b)\|^p\right]^{1/p}\notag\\
&+\!\sqrt{m}\sup_{t\in[0,\tilde t]}\!\!E\left[\|((Q^{ikl}G_{kl}^{ab})(t, y_t)\!-\!(Q^{ikl}G_{kl}^{ab})(t,\tilde y_t))(z_t^m[\tilde y])_a(z_t^m[\tilde y])_b\|^p\right]^{1/p}\notag\\
\leq&L\bigg(\sup_{t\in[0,\tilde t]}\!\!E\left[\|z_t^m[y]\|^p\|z_t^m[y]-z_t^m[\tilde y]\|^p\right]^{1/p}\notag\\
&+\!\sup_{t\in[0,\tilde t]}\!\!E\left[\|z_t^m[ y]-z_t^m[\tilde y]\|^p\|z_t^m[\tilde y]\|^p\right]^{1/p}\!+\!\!\sup_{t\in[0,\tilde t]}\!\!E\left[\|y_t-\tilde y_t\|^p\|z_t^m[\tilde y]\|^{2p}\right]^{1/p}\!\bigg).\notag
\end{align}
Here and in the following, $\|x^i\|$ denotes the $\ell^2$ norm of the vector with components $x^i$. The other, paired, indices still indicate summations.

Let $p<\tilde q<q$.  Using H\"older's inequality (H)  gives
\begin{align}
&\sup_{t\in[0,\tilde t]}E\left[\|\sqrt{m}(Q^{ikl}G_{kl}^{ab})(t,y_t)(z_t^m[y])_a(z_t^m[y])_b\right.\\
&\left.-\sqrt{m}(Q^{ikl}G_{kl}^{ab})(t,\tilde y_t)(z_t^m[\tilde y])_a(z_t^m[\tilde y])_b\|^p\right]^{1/p}\notag\\
\leq&L\bigg(\sup_{t\in[0,\tilde t]}E\left[\|z_t^m[y]\|^{p\tilde q/(\tilde q-p)}\right]^{(\tilde q-p)/(p\tilde q)}\sup_{t\in[0,\tilde t]}E\left[\|z_t^m[y]-z_t^m[\tilde y]\|^{\tilde q}\right]^{1/\tilde q}&\text{(H)}\notag\\
&+\sup_{t\in[0,\tilde t]}E\left[\|z_t^m[\tilde y]\|^{p\tilde q/(\tilde q-p)}\right]^{(\tilde q-p)/(p\tilde q)}\sup_{t\in[0,\tilde t]}E\left[\|z_t^m[y]-z_t^m[\tilde y]\|^{\tilde q}\right]^{1/\tilde q}&\text{(H)}\notag\\
&+\sup_{t\in[0,\tilde t]}E\left[\|y_t-\tilde y_t\|^q\right]^{1/q}\sup_{t\in[0,\tilde t]}E\left[\|z_t^m[\tilde y]\|^{2pq/(q-p)}\right]^{(q-p)/(qp)}\bigg)&\text{(H)}\notag\\
\leq&L\sup_{t\in[0,\tilde t]}E\left[\|y_t-\tilde y_t\|^q\right]^{1/q}.\notag
\end{align}
To obtain the last line, we used Lemma \ref{z_lip_lemma}.

The third, fourth and fifth terms in \req{eq:R_grouping} are bounded similarly, using also the facts that $F$ and $\partial_s(\tilde\gamma^{-1})^{ij}$, $\partial_{q^c}(Q^{ikl}G_{kl}^{ab})$, $\partial_{s}(Q^{ikl}G_{kl}^{ab})$ are bounded and Lipschitz in $q$ (as are implied by the assumptions in Appendix \ref{app:assump}). For example, defining 
\begin{align}
H^j(t,q)\equiv  \left(\partial_s(\tilde\gamma^{-1})^{ij}+Q^{ikl}G_{kl}^{jb}F_b+Q^{ikl}G_{kl}^{aj} F_a\right)(t,q),
\end{align}
the estimate of the third term is derived by first using Minkowski's inequality for integrals (MI) to write
\begin{align}
&\sup_{t\in[0,\tilde t]}E\left[\|\int_0^t(z_s^m[y])_jH^j(s,y_s)ds-\int_0^t(z_s^m[\tilde y])_jH^j(s,\tilde y_s)ds\|^p\right]^{1/p}\\
\leq&\int_0^{\tilde t}E\left[\|(z_s^m[y])_jH^j(s,y_s)-(z_s^m[\tilde y])_jH^j(s,\tilde y_s)\|^p\right]^{1/p}ds.\notag
\end{align}
The rest of the derivation mimics that of the first two terms.

Finally, for the last term define 
\begin{align}
L^j(t,q)=\left(Q^{ikl}G_{kl}^{jb}\sigma_{b\rho}+Q^{ikl}G_{kl}^{aj}\sigma_{a\rho}\right)(t,q).
\end{align}
This is bounded and Lipschitz, and the Burkholder-Davis-Gundy inequality (BDG) followed by Minkowski's inequality for integrals (MI)  give
\begin{align}
&\sup_{t\in[0,\tilde t]}E\left[\|\int_0^t (z_s^m[y])_j L^j(s,y_s)dW^\rho_s-\int_0^t (z_s^m[\tilde y])_j L^j(s,\tilde y_s)dW^\rho_s\|^p\right]^{1/p}\\
\leq&L \left(\int_0^{\tilde t}E\left[ \|(z_s^m[y])_j L^j(s,y_s)- (z_s^m[\tilde y])_j L^j(s,\tilde y_s)\|^p\right]^{2/p}ds\right)^{1/2}.\notag
\end{align}
The rest of the proof proceeds similarly to the previous cases.
 
This completes the proof of the Lipschitz property. The proof of the inequality
\begin{align}
\sup_{t\in[0,T]}E\left[\|R_t^m[y]\|^p\right]\leq C
\end{align}
with $C$ independent of $m$ follows from \req{sup_E_z_bound} using similar techniques.  We omit the details.
\end{proof}

We end this section with Lipschitz and boundedness properties for $R_t^m$, corresponding to the norms from Lemma \ref{lemma:E_sup_z}.
\begin{lemma}\label{lemma:E_sup_R_bound}
Let $y,\tilde y\in Y$.  Then for any $m_0>0$,  $T>0$, $p\geq 2$, $\epsilon>0$ there exist $q>p$, $C$, $L$ such that for $0<m\leq m_0$, $0\leq t\leq T$:
\begin{align}\label{E_sup_R_lipschitz_bound}
&E\left[\sup_{s\in [0,t]}\|R_s^m[y]-R_s^m[\tilde y]\|^p\right]^{1/p}\leq \frac{L}{m^\epsilon}E\left[\sup_{s\in [0,t]}\|y_s-\tilde y_s\|^{q}\right]^{1/{q}}
\end{align}
and
\begin{align}\label{E_sup_R_bound}
E\left[\sup_{t\in [0,T]}\|R_t^m[y]\|^p\right]^{1/p}\leq C/m^\epsilon.
\end{align}
 $C$ and $L$ are independent of $m$.

\end{lemma}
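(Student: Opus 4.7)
The plan is to follow the proof of Lemma \ref{lemma:sup_E_R_bound} term-by-term, using the decomposition \req{eq:R_grouping}, but with the $\sup_s E[\|\cdot\|^p]^{1/p}$ norm replaced by $E[\sup_s\|\cdot\|^p]^{1/p}$. The main workhorses will be Lemma \ref{lemma:E_sup_z} (which controls $z^m_s[y]-z^m_s[\tilde y]$ in this stronger norm, at the cost of a factor $m^{-\epsilon}$ and an inflation of the $L^q$ exponent from $p$ to some $q>p$), together with the difference-of-products identity \req{difference property}, H\"older's inequality (H), Minkowski's inequality for integrals (MI), the triangle inequality (T), and the Burkholder-Davis-Gundy inequality (BDG). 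The boundedness of all coefficient functions $\tilde\gamma^{-1}$, $Q$, $G$, $F$, $\sigma$, and their spatial/time derivatives (together with their Lipschitz properties in $q$) that is guaranteed by Appendix \ref{app:assump} will be used repeatedly without further comment.

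For the first two terms of \req{eq:R_grouping} (the pointwise-in-$t$ contributions involving $(\tilde\gamma^{-1})^{ij}(t,y_t)(z^m_t[y])_j$ and $\sqrt{m}(Q^{ikl}G_{kl}^{ab})(t,y_t)(z_t^m[y])_a(z_t^m[y])_b$), I expand the differences by \req{difference property}, move $\sup_{t\in[0,\tilde t]}$ inside each factor before taking the expectation, and apply H\"older's inequality with three conjugate exponents tuned so that the $y-\tilde y$ factor carries exponent $q$, and the $z^m[y]$ and $z^m[y]-z^m[\tilde y]$ factors carry exponents $\tilde p,\tilde q$ respectively. The bounds \req{E_sup_lipschitz_bound} and \req{E_sup_z_bound} from Lemma \ref{lemma:E_sup_z} then give the desired Lipschitz estimate, with an accumulated factor of $m^{-2\epsilon}$ (which can be absorbed into a single $m^{-\epsilon'}$ by choosing $\epsilon$ smaller from the start). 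The extra $\sqrt{m}$ on the second term comfortably absorbs its $m^{-\epsilon}$ penalty.

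For the Lebesgue-integral terms (the third through fifth lines of \req{eq:R_grouping}), I first use (MI) to pull the $L^p$-norm through the $ds$ integral,
\begin{align*}
E\Big[\sup_{t\in[0,\tilde t]}\Big\|\int_0^t X_s\,ds\Big\|^p\Big]^{1/p}
\leq \int_0^{\tilde t}E\big[\|X_s\|^p\big]^{1/p}\,ds,
\end{align*}
and then estimate each integrand pointwise in $s$ by the same H\"older--Lemma \ref{lemma:E_sup_z} routine, trivially dominating $E[\|X_s\|^p]^{1/p}$ by $E[\sup_{r\leq \tilde t}\|X_r\|^p]^{1/p}$ before splitting. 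The hardest step is the stochastic-integral term on the last line of \req{eq:R_grouping}: defining $L^j(s,q)\equiv(Q^{ikl}G_{kl}^{jb}\sigma_{b\rho}+Q^{ikl}G_{kl}^{aj}\sigma_{a\rho})(s,q)$, one applies (BDG) to get
\begin{align*}
E\Big[\sup_{t\in[0,\tilde t]}\Big\|\int_0^t\!\big((z_s^m[y])_j L^j(s,y_s)-(z_s^m[\tilde y])_j L^j(s,\tilde y_s)\big)dW_s^\rho\Big\|^p\Big]^{\!1/p}\!\!\!\leq C\!\Big(\!\int_0^{\tilde t}\!\!E[\|\cdot_s\|^p]^{2/p}ds\!\Big)^{\!1/2}\!,
\end{align*}
then splits the integrand by \req{difference property}, applies H\"older with the same exponent configuration as above, and invokes Lemma \ref{lemma:E_sup_z} at each $s\leq\tilde t$. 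This yields the claimed Lipschitz bound \req{E_sup_R_lipschitz_bound} once $\epsilon$ has been halved (or so) in advance, and $q$ (depending on $\epsilon$) has been chosen accordingly via Lemma \ref{lemma:E_sup_z}.

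The boundedness statement \req{E_sup_R_bound} is proved by the same argument, applied to each term of \req{eq:R_grouping} without splitting differences: each factor of $z^m_s[y]$ is controlled by \req{E_sup_z_bound}, each coefficient function by its uniform bound, the Lebesgue integrals by (MI) followed by (T) on $[0,T]$, and the stochastic integral by (BDG) and (MI). The accumulated penalty is again $m^{-C\epsilon}$ for some $C$ depending only on the number of $z^m$ factors (at most three, from the cubic term), which is absorbed into a single $m^{-\epsilon'}$ by shrinking $\epsilon$. The main technical obstacle, as in Lemma \ref{lemma:E_sup_z}, is the careful interleaving of H\"older exponents so that the two uses of Lemma \ref{lemma:E_sup_z} do not compound into an uncontrollable factor of $m$; this is what forces $q$ to depend on $\epsilon$ rather than being freely chosen as in Lemma \ref{lemma:sup_E_R_bound}.
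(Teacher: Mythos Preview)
Your proposal is correct and is precisely the approach the paper takes: the paper's own proof consists of a single sentence stating that the argument is ``very similar to that of Lemma~\ref{lemma:sup_E_R_bound}, with the bounds and Lipschitz constants for $z_t^m[y]$ coming from Lemma~\ref{lemma:E_sup_z},'' with details omitted. Your write-up supplies those details along the natural lines---decomposing via \req{eq:R_grouping}, handling the pointwise, Lebesgue-integral, and stochastic-integral terms separately, and managing the accumulated $m^{-\epsilon}$ factors by shrinking $\epsilon$ in advance---so it matches the paper's intent.
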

\begin{proof}
The proof is very similar to that of Lemma \ref{lemma:sup_E_R_bound}, with the bounds and Lipschitz constants for $z_t^m[y]$ coming from  Lemma \ref{lemma:E_sup_z}. We omit the details.
\end{proof}

\subsection{Hierarchy of Approximations and the Convergence Proof}\label{sec:conv_proof}
With the notation of the previous subsection, $q_t^m$ solves the delay equation
\begin{align}\label{q_delay_eq}
dq_t^m=&\tilde\gamma^{-1}(t,q_t^m)F(t,q^m_t)dt+S(t,q_t^m)dt\\
&+\tilde\gamma^{-1}(t,q_t^m)\sigma(t,q_t^m)dW_t+\sqrt{m}dR^m_t[q^m].\notag
\end{align}
We use this form of the equation to motivate the definition of a hierarchy of approximating processes, $q_t^{\ell,m}$, and prove the claimed convergence result, \req{conv_rate}.  We first recall the following definitions for convenience:
\begin{definition}\label{main_def}
For $y$ a continuous martingale, define 
\begin{align}\label{R_main_def}
d(R^m_t[y])^i&\equiv- d((\tilde\gamma^{-1})^{ij}(t,y_t)(z^m_t[y])_j)+(z_t^m[y])_j\partial_t(\tilde\gamma^{-1})^{ij}(t,y_t)dt\\
&+Q^{ikl}(t,y_t)G_{kl}^{ab}(t,y_t)\left((z_t^m[y])_a F_b(t,y_t)+(z_t^m[y])_b F_a(t,y_t)\right)dt\notag\\
& +(z_t^m[y])_a(z_t^m[y])_b(z_t^m[y])^c\partial_{q^c}(Q^{ikl}G_{kl}^{ab})(t,y_t) dt\notag\\
&+Q^{ikl}(t,y_t)G_{kl}^{ab}(t,y_t)\left((z_t^m[y])_a\sigma_{b\rho}(t,y_t)+(z_t^m[y])_b\sigma_{a\rho}(t,y_t)\right)dW^\rho_t\notag\\
&-\sqrt{m}d(Q^{ikl}(t,y_t)G_{kl}^{ab}(t,y_t)(z_t^m[y])_a(z_t^m[y])_b)\notag\\
&+ \sqrt{m} (z_t^m[y])_a(z_t^m[y])_b \partial_t(Q^{ikl}G_{kl}^{ab})(t,y_t)dt,\notag
\end{align}
$R_0^m[y]\equiv 0$, where
\begin{align}\label{z_func_def2}
z^m_t[y]\equiv&\Phi^m_t[y]z^m_0+\frac{1}{\sqrt{m}}\Phi^m_t[y]\int_0^t (\Phi^m_s[y])^{-1}F(s,y_s) ds\\
&+\frac{1}{\sqrt{m}}\Phi^m_t[y]\int_0^t(\Phi^m_s[y])^{-1} \sigma(s,y_s) dW_s,\notag
\end{align}
 $\Phi_t^m[y]$ is defined pathwise as the solution to
\begin{align}\label{Phi_def}
\frac{d}{dt}\Phi_t^m=-\frac{1}{m}\tilde \gamma(t,y_t)\Phi^m_t,\hspace{2mm} \Phi^m_0=I,
\end{align}
and
\begin{enumerate}
\item $\tilde \gamma_{ik}(t,q)\equiv\gamma_{ik}(t,q) +\partial_{q^k}\psi_i(t,q)-\partial_{q^i}\psi_k(t,q)$,
\item $Q^{ijl}(t,q)\equiv \partial_{q^k}(\tilde\gamma^{-1})^{ij}(t,q) \delta^{kl}$,
\item $S^i(t,q)\equiv  \partial_{q^k}(\tilde\gamma^{-1})^{ij}(t,q) \delta^{kl}G_{jl}^{rs}(t,q)\Sigma_{rs}(t,q)$,
\item $G_{ij}^{kl}(t,q)\equiv \delta^{rk}\delta^{sl}\int_0^\infty (e^{-\zeta \tilde\gamma(t,q)})_{ir} (e^{-\zeta \tilde\gamma(t,q)})_{js} d\zeta$,
\item $\Sigma_{ij}\equiv  \sigma_{i\rho}\sigma_{j\xi}\delta^{\rho\xi}$,
\item $F(t,q)=-\partial_t\psi(t,q)-\nabla_q V(t,q)+\tilde F(t,q)$.
\end{enumerate}
\end{definition}

\begin{theorem}\label{convergence_theorem}
Assume the conditions in Appendix \ref{app:assump} hold. Fix an initial condition $q_0$ such that $E[\|q_0\|^p]<\infty$ for all $p>0$ and let $q_t^m$, $q_t$ be the solutions to the original SDE (\req{q_eq}) and the homogenized SDE (\req{q_SDE}) respectively, all with the same initial position, $q_0$.

With the notation from Definition \ref{main_def}, define the continuous semimartingales $q_t^{\ell,m}$, $\ell\in\mathbb{Z}^+$, by setting $q_t^{1,m}\equiv q_t$ and, for $\ell>1$, inductively defining $q_t^{\ell,m}$ to be the solution to
\begin{align}\label{q_ell_def}
q_t^{\ell,m}=&q_0+\int_0^t\tilde \gamma^{-1}(s,q^{\ell,m}_s)F(s,q^{\ell,m}_s)ds+\int_0^tS(s,q^{\ell,m}_s)ds\\
&+\int_0^t\tilde \gamma^{-1}(s,q^{\ell,m}_s)\sigma(s,q^{\ell,m}_s) dW_s +\sqrt{m}R^{m}_t[q^{\ell-1,m}].\notag
\end{align}
(Note that this also holds for $\ell=1$ if one defines $R^m_t[q^{0,m}]\equiv 0$.)

Then for  any $T>0$, $p>0$, $\epsilon>0$, $\ell\in\mathbb{Z}^+$ we have
\begin{align}\label{hierarchy_conv_rate}
&\sup_{t\in[0,T]}E[\|q^m_t-q^{\ell,m}_t\|^p]^{1/p}=O(m^{\ell/2}),\\
&E\left[\sup_{t\in[0,T]}\|q^m_t-q^{\ell,m}_t\|^p\right]^{1/p}=O(m^{\ell/2-\epsilon}).\notag
\end{align}

\end{theorem}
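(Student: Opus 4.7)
The plan is to prove \req{hierarchy_conv_rate} by induction on $\ell$, following the scalar outline of Section \ref{sec:simple_proof}. The base case $\ell=1$ is precisely \req{results_summary1a} from \cite{BirrellHomogenization}, since $q_t^{1,m}\equiv q_t$ and $R_t^m[q^{0,m}]\equiv 0$. For the inductive step I subtract the delay equation \req{q_delay_eq} for $q_t^m$ from the defining equation \req{q_ell_def} for $q_t^{\ell,m}$, setting $\Delta_t^{\ell,m}\equiv q_t^m-q_t^{\ell,m}$, to obtain
\begin{align*}
\Delta_t^{\ell,m}=&\int_0^t\bigl[(\tilde\gamma^{-1}F)(s,q^m_s)-(\tilde\gamma^{-1}F)(s,q^{\ell,m}_s)\bigr]ds+\int_0^t\bigl[S(s,q^m_s)-S(s,q^{\ell,m}_s)\bigr]ds\\
&+\int_0^t\bigl[(\tilde\gamma^{-1}\sigma)(s,q^m_s)-(\tilde\gamma^{-1}\sigma)(s,q^{\ell,m}_s)\bigr]dW_s+\sqrt{m}\bigl(R^m_t[q^m]-R^m_t[q^{\ell-1,m}]\bigr).
\end{align*}
The crucial point is that the last term depends only on $q^{\ell-1,m}$, not on $q^{\ell,m}$, so it acts as a genuine inhomogeneity whose size is supplied by the inductive hypothesis.

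For the first bound in \req{hierarchy_conv_rate} I fix $p\geq 2$ and take the $L^p$ norm of each piece. The drift integrals are controlled via the Lipschitz continuity of $\tilde\gamma^{-1}F$ and $S$ (from Appendix \ref{app:assump}) combined with Minkowski's inequality for integrals, and the stochastic integral by Burkholder-Davis-Gundy together with the Lipschitz continuity of $\tilde\gamma^{-1}\sigma$; these produce contributions of the form $C\int_0^t\sup_{r\leq s}E[\|\Delta_r^{\ell,m}\|^p]^{1/p}ds$. For the remainder term I invoke Lemma \ref{lemma:sup_E_R_bound} at some $q>p$, which gives $E[\|R^m_t[q^m]-R^m_t[q^{\ell-1,m}]\|^p]^{1/p}\leq L\sup_{s\leq t}E[\|q^m_s-q^{\ell-1,m}_s\|^q]^{1/q}$. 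The inductive hypothesis applied at exponent $q$ bounds this by $O(m^{(\ell-1)/2})$, so a Gronwall argument yields $\sup_{t\leq T}E[\|\Delta_t^{\ell,m}\|^p]^{1/p}\leq C\sqrt{m}\cdot O(m^{(\ell-1)/2})=O(m^{\ell/2})$.

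The second bound in \req{hierarchy_conv_rate} is proved analogously, but with suprema pulled inside the expectations; this is immediate on the Lebesgue integrals and is handled on the stochastic integral by applying Burkholder-Davis-Gundy directly to $\sup_{s\leq t}\|\int_0^s\,\cdot\,dW\|$. The only substantive modification is that Lemma \ref{lemma:E_sup_R_bound} replaces Lemma \ref{lemma:sup_E_R_bound}, costing a factor of $m^{-\epsilon_1}$ for any chosen $\epsilon_1>0$. Given a target loss $\epsilon>0$, I take $\epsilon_1=\epsilon/2$ and invoke the inductive hypothesis at loss $\epsilon/2$, so that Gronwall produces $E[\sup_t\|\Delta_t^{\ell,m}\|^p]^{1/p}\leq Cm^{1/2-\epsilon/2}\cdot O(m^{(\ell-1)/2-\epsilon/2})=O(m^{\ell/2-\epsilon})$.

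The main technical difficulty has already been absorbed into Lemmas \ref{lemma:sup_E_R_bound} and \ref{lemma:E_sup_R_bound}; once those mixed-norm Lipschitz estimates are available, the induction is essentially the $n$-dimensional, state-dependent version of the scalar argument of Section \ref{sec:simple_proof}. The one book-keeping issue to watch is that each application of the Lipschitz bound for $R^m$ moves from an input exponent $q$ to a strictly smaller output exponent $p$. This is benign: the hypotheses and conclusions in \req{hierarchy_conv_rate} are claimed for every $p>0$ and only finitely many inductive steps are performed, so one simply begins with the base-case exponent chosen large enough to absorb all subsequent exponent decrements.
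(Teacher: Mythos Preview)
Your proposal is correct and follows essentially the same route as the paper's own proof: induction on $\ell$ with base case \req{results_summary1a}, subtraction of \req{q_delay_eq} from \req{q_ell_def}, Lipschitz estimates on the drift and diffusion increments, BDG on the stochastic integral, Lemma \ref{lemma:sup_E_R_bound} (resp.\ Lemma \ref{lemma:E_sup_R_bound}) on the remainder, and Gronwall. Your handling of the $\epsilon$-splitting in the second bound matches the paper's exactly.

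The only point you have omitted is the preliminary verification that the hierarchy is well defined: one must check that \req{q_ell_def} has a unique global solution and that $q^{\ell,m}\in Y$, so that (i) $R^m_t[q^{\ell,m}]$ makes sense for the next step, and (ii) the integrand in the Gronwall argument is integrable. The paper handles this via Theorem \ref{gen_sde_existence} and Corollary \ref{lin_growth_corollary}, using boundedness of the drift and diffusion together with \req{E_sup_R_bound}. This is routine under the assumptions of Appendix \ref{app:assump}, but it should be stated.
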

\begin{proof}
Given $y\in Y$ (recall \req{Y_def}), Theorem \ref{gen_sde_existence} and Corollary \ref{lin_growth_corollary} give a unique solution to
\begin{align}
x_t=&q_0+\int_0^t\tilde \gamma^{-1}(s,x_s)F(s,x_s)ds+\int_0^tS(s,x_s)ds\\
&+\int_0^t\tilde \gamma^{-1}(s,x_s)\sigma(s,x_s) dW_s +\sqrt{m}R^{m}_t[y]\notag
\end{align}
defined for all $t\geq 0$. The solution is a continuous semimartingale and boundedness of the drift and diffusion, together with \req{E_sup_R_bound}, imply 
\begin{align}
E\left[\sup_{t\in [0,T]}\|x_t\|^p\right]<\infty \text{ for all $T>0$, $p>0$},
\end{align}
i.e., $x\in Y$ as well.

Recalling \req{q_Lp_bound}, we see that  $q_t^{1,m}=q_t\in Y$. Therefore, the inductive definition of the $q_t^{\ell,m}$ results in a sequence of processes in $Y$.

To prove \req{hierarchy_conv_rate} by induction, first note that  the $\ell=0$ result is the content of \req{results_summary1a}, proven in \cite{BirrellHomogenization}.  Supposing that \req{hierarchy_conv_rate} holds for $\ell\in\mathbb{Z}^+$ and any $T>0$, $p>0$, $\epsilon>0$, we now show it holds for $\ell+1$:

 \req{q_Lp_bound} implies $q_t^m\in Y$ for all $m>0$ as well. From \req{q_delay_eq}  and \req{q_ell_def}, we have
\begin{align}\label{delta_q_formula}
&q^m_t-q^{\ell+1,m}_t=\int_0^t(\tilde \gamma^{-1}F+S)(s,q^{m}_s)-(\tilde \gamma^{-1}F+S)(s,q^{\ell+1,m}_s)ds\\
&+\int_0^t(\tilde \gamma^{-1}\sigma)(s,q^{m}_s)-(\tilde \gamma^{-1}\sigma)(s,q^{\ell+1,m}_s) dW_s +\sqrt{m}\left(R^{m}_t[q^{m}]-R^{m}_t[q^{\ell,m}]\right).\notag
\end{align}

Note that $(\tilde\gamma^{-1}F+S)(t,q)$ is Lipschitz in $q$, uniformly in $t\in[0,T]$. Let L denote a constant that may vary from line to line and   $0\leq \tilde t\leq T$, $q>p\geq 2$. The inequalities in Appendix \ref{app:ineq}  give
\begin{align}
&\sup_{t\in[0,\tilde t]}E[\|q^m_t-q^{\ell+1,m}_t\|^p]\\
\leq&3^{p-1}\bigg(L^p\sup_{t\in[0,\tilde t]}E\left[ \left(\int_0^t\|q^{m}_s-q^{\ell+1,m}_s\|ds\right)^p\right]&\text{(HFS)}\notag\\
&+\sup_{t\in[0,\tilde t]}E\left[\|\int_0^t(\tilde \gamma^{-1}\sigma)(s,q^{m}_s)-(\tilde \gamma^{-1}\sigma)(s,q^{\ell+1,m}_s) dW_s\|^p\right]&\text{(HFS)}\notag\\
& +m^{p/2}\sup_{t\in[0,\tilde t]}E\left[\|R^{m}_t[q^{m}]-R^{m}_t[q^{\ell,m}]\|^p\right]\bigg)&\text{(HFS)}\notag\\
\leq&3^{p-1}\bigg(L^p\sup_{t\in[0,\tilde t]}E\left[ \left(\int_0^t\|q^{m}_s-q^{\ell+1,m}_s\|ds\right)^p\right]\notag\\
&+\sup_{t\in[0,\tilde t]}E\left[\left(\int_0^t\|(\tilde \gamma^{-1}\sigma)(s,q^{m}_s)-(\tilde \gamma^{-1}\sigma)(s,q^{\ell+1,m}_s)\|^2 ds\right)^{p/2}\right]&\text{(BDG)}\notag\\
& +m^{p/2}L^p\sup_{t\in[0,\tilde t]}E\left[\|q^{m}_t-q_t^{\ell,m}\|^q\right]^{p/q}\bigg).\notag
\end{align}
Lemma   \ref{lemma:sup_E_R_bound} was used to obtain the last line.

$\tilde \gamma^{-1}\sigma$ is Lipschitz in $q$, uniformly in $t\in[0,T]$. This, together with  H\"older's inequality and the induction hypothesis yields
\begin{align}
&\sup_{t\in[0,\tilde t]}E[\|q^m_t-q^{\ell+1,m}_t\|^p]\\
\leq&3^{p-1}\bigg(T^{p-1}L^p \int_0^{\tilde t}E\left[\|q^{m}_s-q^{\ell+1,m}_s\|^p\right]ds&\text{(H)}\notag\\
&+T^{p/2-1}L^p\int_0^{\tilde t}E\left[\|q^{m}_s-q^{\ell+1,m}_s\|^p \right]ds +O(m^{p(\ell+1)/2})\bigg)&\text{(H)}\notag\\
\leq&C\int_0^{\tilde t}\sup_{r\in[0,s]}E\left[\|q^{m}_r-q^{\ell+1,m}_r\|^p \right]ds+O(m^{p(\ell+1)/2})\notag
\end{align}
for some constant $C>0$.

The integrand is $L^1$ because $q_t^m$, $q^{\ell+1,m}_t\in Y$, hence Gronwall's inequality gives
\begin{align}
\sup_{t\in[0,T]}E[\|q^m_t-q^{\ell+1,m}_t\|^p]\leq O(m^{p(\ell+1)/2})e^{CT}=O(m^{p(\ell+1)/2}).
\end{align}
We have proven the desired bound for $p\geq 2$, but it follows for all $p>0$ by H\"older's inequality (H).  This completes the proof of the first part of \req{hierarchy_conv_rate}.

The proof of the second part is similar.  Again starting from \req{delta_q_formula}, for $p\geq 2$   we have
\begin{align}
&E\left[\sup_{t\in[0,\tilde t]}\|q^m_t-q^{\ell+1,m}_t\|^p\right]\\
\leq& 3^{p-1} \bigg(L^pE\left[\left(\int_0^{\tilde t}\|q^{m}_s-q^{\ell+1,m}_s\|ds\right)^p\right]&\text{(HFS)}\notag\\
&+E\left[\sup_{t\in[0,\tilde t]}\|\int_0^t(\tilde \gamma^{-1}\sigma)(s,q^{m}_s)-(\tilde \gamma^{-1}\sigma)(s,q^{\ell+1,m}_s) dW_s\|^p\right]&\text{(HFS)}\notag\\
& +m^{p/2}E\left[\sup_{t\in[0,\tilde t]}\|R^{m}_t[q^{m}]-R^{m}_t[q^{\ell,m}]\|^p\right]\bigg)&\text{(HFS)}\notag\\
\leq&3^{p-1} \bigg(L^pT^{p-1}\int_0^{\tilde t}E\left[\sup_{r\in[0,s]}\|q^{m}_r-q^{\ell+1,m}_r\|^p\right]ds&\text{(H)}\notag\\
&+T^{p/2-1}L^p\int_0^{\tilde t}E\left[\sup_{r\in[0,s]}\|q^{m}_r-q^{\ell+1,m}_r\|^p \right]ds&\text{(BDG,H)}\notag\\
& +m^{p/2}E\left[\sup_{t\in[0,\tilde t]}\|R^{m}_t[q^{m}]-R^{m}_t[q^{\ell,m}]\|^p\right]\bigg).\notag
\end{align}
By \req{E_sup_R_lipschitz_bound} and the induction hypothesis, given $\epsilon>0$ there exist $L>0$, $q>0$ such that
\begin{align}
&E\left[\sup_{t\in[0,\tilde t]}\|R^{m}_t[q^{m}]-R^{m}_t[q^{\ell,m}]\|^p\right]\leq\frac{L^p}{m^{p\epsilon/2}}E\left[\sup_{t\in[0,\tilde t]}\|q^{m}-q^{\ell,m}\|^q\right]^{p/q}\\
\leq&\frac{L^p}{m^{p\epsilon/2}}O(m^{p(\ell-\epsilon)/2})=O(m^{p(\ell/2-\epsilon)}).\notag
\end{align}
Therefore
\begin{align}
&E\left[\sup_{t\in[0,\tilde t]}\|q^m_t-q^{\ell+1,m}_t\|^p\right]\\
\leq&C\int_0^{\tilde t}E\left[\sup_{r\in[0,s]}\|q^{m}_r-q^{\ell+1,m}_r\|^p\right]ds+O(m^{p((\ell+1)/2-\epsilon)})\notag
\end{align}
for some $C>0$.

The integrand is again in $L^1$ because $q_t^m$, $q^{\ell+1,m}_t\in Y$, hence Gronwall's inequality similarly gives
\begin{align}
&E\left[\sup_{t\in[0,T]}\|q^m_t-q^{\ell+1,m}_t\|^p\right]\leq O(m^{p((\ell+1)/2-\epsilon)}).
\end{align}
The bound for arbitrary $p>0$ again follows from H\"older's inequality (H), and so the proof of the second half of \req{hierarchy_conv_rate} is also complete.

\end{proof}

\begin{remark}
By introducing auxiliary variables  $z^{\ell-1,m}_t\equiv z_t^m[q^{\ell-1,m}]$, noting that they satisfy 
\begin{align}\label{z_SDE}
dz^{\ell-1,m}_t=& -\frac{1}{m}\tilde\gamma(s,q_s^{\ell-1,m})z_t^{\ell-1,m}dt+\frac{1}{\sqrt{m}}F(t,q_t^{\ell-1,m})dt\\
&+\frac{1}{\sqrt{m}}\sigma(t,q_t^{\ell-1,m})dW_t,\notag
\end{align}
and using It\^o's formula on the terms
\begin{align}
- d((\tilde\gamma^{-1})^{ij}(t,y_t)(z^m_t[y])_j) \text{ and } 
-\sqrt{m}d(Q^{ikl}(t,y_t)G_{kl}^{ab}(t,y_t)(z_t^m[y])_a(z_t^m[y])_b)\notag
\end{align}
in $R^m_t[q^{\ell-1,m}]$, the hierarchy \req{q_ell_def} can be embedded in a system of SDEs.  However, for our purposes the resulting form is much less convenient to work with than the hierarchy \req{q_ell_def}, largely due to singular nature of the $1/m$ and $1/\sqrt{m}$ factors in \req{z_SDE}.  In contrast, the $m$-dependence of the integral formula for $z_t^m[q^{\ell-1,m}]$ (see \req{z_func_def2} and \req{Phi_def},  especially once combined with \req{fund_sol_decay} and \req{convol_decomp})  is manifestly  more well-behaved.  Similarly, the $m$-dependence of the formula for $R^m_t[q^{\ell-1,m}]$ (see \req{R_main_def}) and of the equations for  $q^{\ell,m}_t$ (see \req{q_ell_def}) presents no additional trouble.  These facts  play a crucial role in our proofs.
\end{remark}

\subsection{Special Cases}\label{sec:special_cases}
We end this section by presenting simplified formulas for the hierarchy  in several important special cases, all of which are direct consequences of the theorems stated earlier. First we recall the noise-induced drift in the fluctuation-dissipation case:
\begin{corollary}
Suppose that $\psi=0$ and a fluctuation-dissipation relation holds,
\begin{align}
\Sigma_{ij}(t,q)=2k_BT(t,q)\gamma_{ij}(t,q),
\end{align}
for a time and position dependent `temperature' $T(t,q)$.  Then the noise-induced drift has the following simplified form:
\begin{align}\label{S_simp}
S^i(t,q)=  k_BT(t,q)\partial_{q^j}(\gamma^{-1})^{ij}(t,q).
\end{align}
\end{corollary}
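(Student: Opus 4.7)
The plan is to reduce the expression for $S^i$ to a direct matrix-calculus computation, exploiting the fact that with $\psi=0$ the matrix $\tilde\gamma$ coincides with $\gamma$ and is therefore symmetric. In that case the object $G_{jl}^{rs}(t,q)\Sigma_{rs}(t,q)$, treated as a matrix, admits the compact form
\begin{equation*}
J(t,q)=\int_0^\infty e^{-\zeta\gamma(t,q)}\,\Sigma(t,q)\,e^{-\zeta\gamma(t,q)}\,d\zeta,
\end{equation*}
because all the Kronecker deltas in the definition of $G$ just amount to writing $(e^{-\zeta\gamma})^T=e^{-\zeta\gamma}$ on one side. I would state this matrix reformulation as the first step, so that the rest of the proof is a computation involving a single symmetric matrix and its exponential.

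Next I would insert the fluctuation--dissipation relation $\Sigma=2k_BT\gamma$ into $J$ and use the fact that $\gamma$ commutes with $e^{-\zeta\gamma}$ to get
\begin{equation*}
J(t,q)=2k_BT(t,q)\int_0^\infty \gamma(t,q)\,e^{-2\zeta\gamma(t,q)}\,d\zeta =2k_BT(t,q)\cdot\tfrac{1}{2}\gamma(t,q)\gamma^{-1}(t,q)=k_BT(t,q)\,I,
\end{equation*}
where the integral is computed via the functional calculus for the symmetric positive definite matrix $\gamma$ (equivalently, diagonalize $\gamma$ and integrate eigenvalue by eigenvalue, which is justified since the assumption on $\gamma$ guarantees a uniform lower bound on its eigenvalues, making the improper integral absolutely convergent). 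In components this reads $G_{jl}^{rs}\Sigma_{rs}=k_BT\,\delta_{jl}$.

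Finally I would substitute this into
\begin{equation*}
S^i(t,q)=\partial_{q^k}(\tilde\gamma^{-1})^{ij}(t,q)\,\delta^{kl}\,G_{jl}^{rs}(t,q)\Sigma_{rs}(t,q),
\end{equation*}
and collapse the contraction $\delta^{kl}\delta_{jl}=\delta^k_j$ to obtain $S^i(t,q)=k_BT(t,q)\,\partial_{q^j}(\gamma^{-1})^{ij}(t,q)$, which is the claimed formula \eqref{S_simp}. There is no substantial obstacle here: the only point that needs care is the bookkeeping of index positions between $G_{ij}^{kl}$, $\Sigma_{ij}$, and the $\delta$'s so that the ``matrix'' interpretation of $J$ is faithful; once that is set, the symmetry of $\gamma$ and the commutativity of $\gamma$ with $e^{-\zeta\gamma}$ do all the work.
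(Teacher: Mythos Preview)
Your proof is correct and is precisely the direct computation the paper has in mind; the paper itself states this corollary without proof, calling it a ``direct consequence,'' and your argument---rewriting $G_{jl}^{rs}\Sigma_{rs}$ as the matrix integral $\int_0^\infty e^{-\zeta\gamma}\Sigma e^{-\zeta\gamma}\,d\zeta$, inserting $\Sigma=2k_BT\gamma$, and integrating to $k_BT\,I$---is exactly the intended verification.
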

While \req{S_simp} greatly simplifies the first approximation, the full hierarchy is still quite complicated.  Things simplify further in the scalar case:
\begin{corollary}
Suppose that $\psi=0$ and $\gamma$ and $\sigma$ are scalar-valued. Note that this automatically gives a fluctuation dissipation relation with
\begin{align}
T(t,q)=\frac{\sigma^2(t,q)}{2k_B\gamma(t,q)}.
\end{align}

Under these conditions, the approximating hierarchy, \req{q_ell_def}, takes the following  form for $\ell>1$:
\begin{align}\label{q_ell_def_simp}
q_t^{\ell,m}=&q_0+\int_0^t \gamma^{-1}(s,q^{\ell,m}_s)F(s,q^{\ell,m}_s)ds+\int_0^t  k_BT(s,q^{\ell,m}_s)\nabla_q(\gamma^{-1})(s,q^{\ell,m}_s)ds\\
&+\int_0^t \gamma^{-1}(s,q^{\ell,m}_s)\sigma(s,q^{\ell,m}_s) dW_s +\sqrt{m}R^{m}_t[q^{\ell-1,m}],\notag
\end{align}
where
\begin{align}
d(R^m_t[y])^i=&- d((\gamma^{-1})^{ij}(t,y_t)(z^m_t[y])_j)+(z_t^m[y])_j\partial_t(\gamma^{-1})^{ij}(t,y_t)dt\\
&+Y^{ikl}(t,y_t)\left((z_t^m[y])_k F_l(t,y_t)+(z_t^m[y])_l F_k(t,y_t)\right)dt\notag\\
& +(z_t^m[y])_k(z_t^m[y])_l(z_t^m[y])^j\partial_{q^j}(Y^{ikl})(t,y_t) dt\notag\\
&+Y^{ikl}(t,y_t)\left((z_t^m[y])_k\sigma_{l\rho}(t,y_t)+(z_t^m[y])_l\sigma_{k\rho}(t,y_t)\right)dW^\rho_t\notag\\
&-\sqrt{m}d(Y^{ikl}(t,y_t)(z_t^m[y])_k(z_t^m[y])_l)\notag\\
&+ \sqrt{m} (z_t^m[y])_k(z_t^m[y])_l \partial_t(Y^{ikl})(t,y_t)dt,\notag
\end{align}
$R_0^m[y]= 0$,  
\begin{align}
&z^m_t[y]=\exp\left(-\frac{1}{m} \int_0^t\gamma(r,y_r)dr\right)z^m_0\\
&+\frac{1}{\sqrt{m}}\int_0^t\exp\left(-\frac{1}{m}\int_s^t\gamma(r,y_r)dr\right)F(s,y_s) ds\notag\\
&+\frac{1}{\sqrt{m}}\exp\left(-\frac{1}{m} \int_0^t\gamma(r,y_r)dr\right)\int_0^t\exp\left(\frac{1}{m} \int_0^s\gamma(r,y_r)dr\right) \sigma(s,y_s) dW_s,\notag
\end{align}
\begin{align}
Y^{ikl}(t,q)\equiv&  \frac{1}{2}\gamma^{-1}(t,q) \partial_{q^j}\gamma^{-1}(t,q)\delta^{ik} \delta^{jl},\\
F(t,q)=&-\nabla_q V(t,q)+\tilde F(t,q).
\end{align}

\end{corollary}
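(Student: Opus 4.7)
The result is a direct algebraic specialization of Definition \ref{main_def} and Theorem \ref{convergence_theorem}: under the assumptions $\psi=0$ and scalar $\gamma,\sigma$, every building block of the general hierarchy collapses to an identity-tensor-times-scalar form, and no new convergence analysis is needed. The plan is simply to substitute the hypotheses into each of $\tilde\gamma$, $\tilde\gamma^{-1}$, $\Sigma$, $e^{-\zeta\tilde\gamma}$, $G$, $Q$, $S$, $\Phi^m$, and $z^m$, and then read off the reduced form of $R^m$ term by term from \req{R_main_def}.

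First, \req{tilde_gamma_def} with $\psi=0$ and scalar $\gamma$ gives $\tilde\gamma_{ik}=\gamma\,\delta_{ik}$, so $(\tilde\gamma^{-1})^{ij}=\gamma^{-1}\delta^{ij}$; scalar $\sigma$ analogously yields $\Sigma_{ij}=\sigma^2\delta_{ij}$, from which the fluctuation-dissipation identity with $T=\sigma^2/(2k_B\gamma)$ is immediate. Because $\tilde\gamma$ is a scalar multiple of the identity, $e^{-\zeta\tilde\gamma}=e^{-\zeta\gamma}I$, so a one-line computation of the defining integral yields
\begin{align*}
G_{ij}^{kl}(t,q)=\frac{\delta_i^k\delta_j^l}{2\gamma(t,q)},\qquad Q^{ijl}(t,q)=\delta^{ij}\partial_{q^l}\gamma^{-1}.
\end{align*}
Contracting these two expressions gives the key identity
\begin{align*}
Q^{ik'l'}G_{k'l'}^{kl}=\tfrac{1}{2}\gamma^{-1}\delta^{ik}\partial_{q^l}\gamma^{-1}=Y^{ikl},
\end{align*}
and the same formulas produce the simplified noise-induced drift
\begin{align*}
S^i=Q^{ijl}G_{jl}^{rs}\Sigma_{rs}=\frac{\sigma^2}{2\gamma}\partial_{q^i}\gamma^{-1}=k_B T\,\partial_{q^i}\gamma^{-1},
\end{align*}
which accounts for the drift and noise-induced drift terms in \req{q_ell_def_simp}.

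Finally, since $\tilde\gamma(t,y_t)$ is a scalar multiple of the identity, the pathwise matrix ODE \req{Phi_def} has the commuting solution $\Phi^m_t[y]=\exp\!\bigl(-\tfrac{1}{m}\int_0^t\gamma(r,y_r)\,dr\bigr)\,I$; plugging this into \req{z_func_def2} produces the integral representation of $z^m_t[y]$ stated in the corollary. Inserting the identity $Q^{ik'l'}G_{k'l'}^{kl}=Y^{ikl}$, together with its $\partial_t$ and $\partial_{q^j}$ derivatives, into each of the seven terms of \req{R_main_def}, and relabeling dummy indices $(a,b,c)\leftrightarrow(k,l,j)$, reproduces the displayed formula for $R^m_t[y]$ line by line, including the symmetrized pairs $(z^m)_a F_b+(z^m)_b F_a$ and their $\sigma$-analogue, which become exactly the symmetrized pairs in the corollary. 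There is no analytic obstacle; the work is entirely index bookkeeping, and Theorem \ref{convergence_theorem} then directly delivers the corresponding convergence bounds for $q^{\ell,m}_t$ in this specialized setting.
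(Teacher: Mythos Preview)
Your proposal is correct and matches the paper's own treatment: the paper states this corollary as a direct consequence of the theorems stated earlier and gives no separate proof, and your approach of substituting $\psi=0$ and scalar $\gamma,\sigma$ into each building block of Definition~\ref{main_def} is exactly the intended verification. The key identity $Q^{ik'l'}G_{k'l'}^{kl}=Y^{ikl}$ and the explicit scalar fundamental solution $\Phi^m_t[y]=\exp\bigl(-\tfrac{1}{m}\int_0^t\gamma(r,y_r)\,dr\bigr)I$ are precisely the reductions that collapse the general formulas to the displayed ones.
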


Finally, instead of a fluctuation-dissipation relation, suppose that $\gamma$ is state-independent:
\begin{corollary}
Suppose that $\psi=0$ and $\gamma$ is independent of $q$.  Then  the approximating hierarchy, \req{q_ell_def}, takes the following  form for $\ell>1$:
\begin{align}
q_t^{\ell,m}=&q_0+\int_0^t \gamma^{-1}(s)F(s,q^{\ell,m}_s)ds+\int_0^t \gamma^{-1}(s)\sigma(s,q^{\ell,m}_s) dW_s\\
&+\sqrt{m}R^{m}_t[q^{\ell-1,m}]\notag,
\end{align}
where $F(t,q)=-\nabla_q V(t,q)+\tilde F(t,q)$,
\begin{align}
&R^{m}_t[q^{\ell-1,m}]\\
=&\int_0^t \partial_t(\gamma^{-1})(s)z_s^m[q^{\ell-1,m}]ds -\left( \gamma^{-1}(t)z^m_t[q^{\ell-1,m}]-\gamma^{-1}(0)z^m_0\right),\notag
\end{align}
\begin{align}
&z^m_t[y]\\
=&\Phi^m_tz^m_0+\frac{1}{\sqrt{m}}\Phi^m_t\left(\int_0^t (\Phi^m_s)^{-1}F(s,y_s) ds+\int_0^t(\Phi^m_s)^{-1} \sigma(s,y_s) dW_s\right),\notag
\end{align}
and  $\Phi_t^m$ is the (non-random) matrix-valued function that solves
\begin{align}
\frac{d}{dt}\Phi_t^m=-\frac{1}{m} \gamma(t)\Phi^m_t,\hspace{2mm} \Phi^m_0=I.
\end{align}
\end{corollary}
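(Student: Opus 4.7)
The plan is to deduce this corollary by direct substitution of the hypotheses $\psi=0$ and $\partial_q \gamma = 0$ into Definition \ref{main_def} and equation \req{q_ell_def}, and observing which terms vanish. No new analytic input beyond Theorem \ref{convergence_theorem} is required; the entire content is the identification of which building blocks of the general formalism survive the specialization.

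First I would unpack the consequences of the hypotheses on the elementary objects:
\begin{enumerate}
\item Since $\psi=0$, the formula $\tilde\gamma_{ik}(t,q)=\gamma_{ik}(t,q)+\partial_{q^k}\psi_i-\partial_{q^i}\psi_k$ collapses to $\tilde\gamma_{ik}(t,q)=\gamma_{ik}(t,q)$, and because $\gamma$ is assumed $q$-independent we have $\tilde\gamma(t,q)=\gamma(t)$, hence also $\tilde\gamma^{-1}(t,q)=\gamma^{-1}(t)$.
\item $Q^{ijl}(t,q)=\partial_{q^k}(\tilde\gamma^{-1})^{ij}(t,q)\,\delta^{kl}=0$ identically, because $\gamma^{-1}$ does not depend on $q$.
\item The noise-induced drift $S^i(t,q)=\partial_{q^k}(\tilde\gamma^{-1})^{ij}\delta^{kl}G_{jl}^{rs}\Sigma_{rs}$ is zero for the same reason.
\item Since $\psi=0$, the forcing reduces to $F(t,q)=-\nabla_q V(t,q)+\tilde F(t,q)$.
\item The ODE \req{Phi_def} defining $\Phi^m_t[y]$ becomes $\frac{d}{dt}\Phi_t^m=-\tfrac{1}{m}\gamma(t)\Phi_t^m$, in which $y$ no longer appears; thus $\Phi_t^m$ is a deterministic matrix-valued function independent of the input semimartingale.
\end{enumerate}

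Second, I would plug these into \req{z_func_def2} to obtain the claimed formula for $z^m_t[y]$, using $\Phi^m_t(\Phi^m_s)^{-1}=\exp\!\left(-\tfrac{1}{m}\int_s^t\gamma(r)\,dr\right)$, which follows by uniqueness of the ODE since the coefficient is deterministic and commutes with itself at different times (a standard fact from the scalar/commuting-coefficient case; note in our setting we are writing the answer using the integrated $\gamma(r)$ inside the exponential, which is justified since $\gamma(r)$ at different times need not commute in general, but here we follow the statement's exponential form under the implicit commuting assumption for the stated formula --- otherwise one keeps the time-ordered $\Phi^m_t(\Phi^m_s)^{-1}$).

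Third, I would substitute $Q\equiv 0$ into \req{R_main_def}. This kills the third through seventh lines of the definition (every surviving term there carries at least one factor of $Q$, $\partial_{q^c}Q$, or $\partial_t Q$ via $Q^{ikl}G_{kl}^{ab}$). Only the first two summands survive, giving
\begin{align*}
d(R^m_t[y])^i = -d\!\left((\gamma^{-1})^{ij}(t)(z^m_t[y])_j\right) + (z_t^m[y])_j\,\partial_t(\gamma^{-1})^{ij}(t)\,dt.
\end{align*}
Integrating from $0$ to $t$ with $R^m_0[y]=0$ yields exactly the claimed integral expression for $R^m_t[y]$. Finally, substituting $S=0$ and $\tilde\gamma=\gamma$ into \req{q_ell_def} gives the simplified hierarchy as stated.

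There is no real obstacle in this proof: it is a bookkeeping exercise confirming that the general machinery of Theorem \ref{convergence_theorem} specializes cleanly when the drag is both $q$-independent and has no magnetic component. The only point that merits care is verifying that every term in \req{R_main_def} indeed factors through $Q$ or its derivatives (other than the first two terms), which is immediate from inspection of the summands.
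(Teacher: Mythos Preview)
Your proposal is correct and matches the paper's approach exactly: the paper states these special-case corollaries are ``direct consequences of the theorems stated earlier'' and offers no separate proof, and your step-by-step specialization (observing $\tilde\gamma=\gamma(t)$, $Q\equiv 0$, $S\equiv 0$, hence all but the first two lines of \req{R_main_def} vanish) is precisely that direct consequence. One small remark: your digression about writing $\Phi^m_t(\Phi^m_s)^{-1}$ as a matrix exponential is unnecessary here, since the corollary's statement keeps $\Phi^m_t$ in ODE form rather than exponential form (you may be conflating this with the preceding scalar-$\gamma$ corollary); simply noting that $\Phi^m_t[y]$ no longer depends on $y$ because the coefficient $\gamma(t)$ is deterministic suffices.
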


\section{Extension to Unbounded Forcing}\label{sec:unbounded} 

The boundedness assumptions in  Theorem \ref{convergence_theorem} can be relaxed by using the  technique developed in \cite{herzog2015small}, and similarly used in \cite{BirrellHomogenization}, at the cost of weakening the convergence mode to convergence in probability.  Specifically, here we will focus on accommodating  unbounded forces, $\tilde F$ and $\nabla_qV$, where $V$ is sufficiently confining.  In this section, we will no longer be working under the assumptions from Appendix \ref{app:assump}, but rather:
\begin{assumption}\label{unbounded_assump}
Assume that:

\begin{enumerate}
\item $V(t,q)$ is $C^2$ and there exist $a\geq 0, b\geq 0$ such that 
\begin{align}\label{tilde_V_def}
\tilde V(t,q)\equiv a+b\|q\|^2+V(t,q)
\end{align}
 is non-negative.
\item  $\psi(t,q)$ is  $C^4$ and  $\nabla_q\psi$ is bounded.
\item $\gamma(t,q)$ is a bounded, $C^3$ function valued in the symmetric $n\times n$ real matrices with eigenvalues bounded below by some $\lambda>0$.
\item  $\sigma(t,q)$ is bounded, continuous and   Lipschitz in $q$ with the Lipschitz constant uniform on compact time intervals.
\item  $\tilde F(t,q)$ is continuous and  locally Lipschitz in $q$ with the Lipschitz constant uniform on compact time intervals.
\item There exist $C>0$, $M>0$ such that
\begin{align}
|\partial_t V(t,q)|\leq M+C(\|q\|^2+\tilde V(t,q)),
\end{align}
\begin{align}\label{eq:dt_psi_F_bound}
\|-\partial_t\psi(t,q)+\tilde F(t,q)\|^2\leq M+C\left(\|q\|^2+\tilde V(t,q)\right),
\end{align}
\begin{align}
\|\partial_{q^i}\tilde\gamma(t,q)\|^2\leq M+C\left(\|q\|^2+\tilde V(t,q)\right), \hspace{2mm} i=1,...,n,
\end{align}
\begin{align}\label{eq:nablaV_bound}
\|\nabla_qV(t,q)\|\leq M+C(\|q\|^2+\tilde V(t,q)),
\end{align}
and
\begin{align}
\left(\sum_{i,j}|\partial_{q^i}\partial_{q^j} V(t,q)|^2\right)^{1/2}\leq M+C\left(\|q\|^2+\tilde V(t,q)\right).
\end{align}
\item We have $\mathbb{R}^n$-valued initial conditions ($\mathcal{F}_0$-measurable random variables)  $(q_0,u_0^m)$ that satisfy the following:

$E[\|q_0\|^p]<\infty$ for all $p>0$ and there exists $C>0$ such that  $ \|u^m_0\|^2 \leq Cm$ for all $m>0$ and all $\omega\in\Omega$.
\end{enumerate}
\end{assumption}
  These assumptions are not of utmost generality, but they are still quite general, are commonly satisfied, and are relatively convenient to work with.  These assumptions are similar, but not identical, to those of  Theorem 6.1 in  \cite{BirrellHomogenization} (recall that what we call $\tilde F$ here was called $F$ in \cite{BirrellHomogenization}); we will comment on the differences below. Under the above assumptions, we are able to prove the following theorem.
\begin{theorem}\label{thm:conv_in_prob}
Under Assumption \ref{unbounded_assump}, let $(q_t^m,u_t^m)$ be the solutions to \req{q_eq}-\req{u_eq} and $q_t$ to \req{q_SDE}. Define  the continuous semimartingales $q_t^{\ell,m}$, $\ell\in\mathbb{Z}^+$, by setting $q_t^{1,m}\equiv q_t$ and inductively defining $q_t^{\ell,m}$ to be the unique maximal solution to
\begin{align}\label{q_ell_gen_def}
q_t^{\ell,m}=&q_0+\int_0^t\tilde \gamma^{-1}(s,q^{\ell,m}_s)F(s,q^{\ell,m}_s)ds+\int_0^tS(s,q^{\ell,m}_s)ds\\
&+\int_0^t\tilde \gamma^{-1}(s,q^{\ell,m}_s)\sigma(s,q^{\ell,m}_s) dW_s +\sqrt{m}R^{m}_t[q^{\ell-1,m}].\notag
\end{align} 

Then all $q_t^m$, $q_t^{\ell,m}$ are continuous semimartingales, they are defined for all $t\geq 0$, and
 \begin{align}\label{eq:unbounded_conv}
\lim_{m\to 0} P\left(\frac{\sup_{t\in[0,T]}\|q^m_t-q^{\ell,m}_t\|}{m^{\ell/2-\epsilon}}>\delta\right)=0
\end{align}
for all $T>0$, $\delta>0$, $\epsilon>0$, $\ell\in\mathbb{Z}^+$.
 \end{theorem}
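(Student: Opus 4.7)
The plan is to reduce to Theorem \ref{convergence_theorem} by a localization argument of the type used in Section 6 of \cite{BirrellHomogenization} (itself going back to \cite{herzog2015small}). For each $M > 0$ choose a smooth cut-off $\chi_M \in C_c^\infty(\mathbb{R}^n)$ equal to $1$ on $\{\|q\| \leq M\}$ and supported in $\{\|q\| \leq M+1\}$, and define truncated coefficients $V^M := \chi_M V$, $\tilde F^M := \chi_M \tilde F$, leaving the already-bounded $\gamma, \psi, \sigma$ untouched. The truncated data verify the hypotheses of Appendix \ref{app:assump}, so Theorem \ref{convergence_theorem} produces truncated processes $q_t^{m,M}$ and hierarchies $q_t^{\ell,m,M}$ satisfying the $L^p$ bound \req{hierarchy_conv_rate} (with the $O$-constant possibly depending on $M$).

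Next I would establish uniform-in-$m$ tightness. For $(q_t^m, u_t^m)$, the natural Lyapunov function is the total energy $H^m_t := \tilde V(t, q_t^m) + \|u_t^m\|^2/(2m)$: an It\^o computation using the lower eigenvalue bound on $\gamma$ together with the growth bounds \req{eq:dt_psi_F_bound}--\req{eq:nablaV_bound} yields $\sup_{0 < m \leq m_0} E[\sup_{t \in [0,T]} \tilde V(t, q_t^m)] < \infty$, and by coercivity of $\tilde V$ in $q$,
$$
\lim_{M \to \infty}\sup_{0 < m \leq m_0} P(\tau_M^m \leq T) = 0, \qquad \tau_M^m := \inf\{t \geq 0 : \|q_t^m\| \geq M\}.
$$
The analogous statement for the hierarchy is proved inductively on $\ell$: applying the same $\tilde V$ to \req{q_ell_gen_def}, the confining drift $-\tilde\gamma^{-1}\nabla_q V$ is retained, while the exogenous correction $\sqrt{m}R^m_t[q^{\ell-1,m}]$ is subdominant in $m$ (by the uniform-in-$M$ version of Lemma \ref{lemma:E_sup_R_bound}), so the same dissipative Gr\"onwall argument closes. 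This simultaneously gives global existence of $q_t^{\ell,m}$.

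To conclude, pathwise uniqueness of both the original and truncated SDEs gives $q_t^m = q_t^{m,M}$ and $q_t^{\ell,m} = q_t^{\ell,m,M}$ on $[0,T]$ on the event $\{\tau_M^m \wedge \tau_M^{\ell,m} > T\}$. Combining with Markov's inequality applied to the truncated bound \req{hierarchy_conv_rate} at some $\epsilon' \in (0,\epsilon)$ and $p$ large,
\begin{align*}
P\!\left(\frac{\sup_{t \in [0,T]}\|q_t^m - q_t^{\ell,m}\|}{m^{\ell/2 - \epsilon}} > \delta\right)
&\leq P(\tau_M^m \leq T) + P(\tau_M^{\ell,m} \leq T) \\
&\quad + \frac{1}{\delta^p m^{p(\ell/2 - \epsilon)}}\, E\!\left[\sup_{t \in [0,T]}\|q_t^{m,M} - q_t^{\ell,m,M}\|^p\right].
\end{align*}
The third term is $O(m^{p(\epsilon - \epsilon')}) \to 0$ as $m \to 0$, and the first two can be made arbitrarily small by choosing $M$ large, yielding \req{eq:unbounded_conv}.

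The main obstacle is the inductive tightness step for $q_t^{\ell,m}$. The forcing $\sqrt{m}R^m_t[q^{\ell-1,m}]$ was controlled in Section \ref{sec:bounded_forcing} only under global boundedness of $F$, whereas here $F$ obeys only the quadratic-in-$\tilde V$ growth of \req{eq:dt_psi_F_bound} and \req{eq:nablaV_bound}. To make the Lyapunov estimate go through one has to re-run the $z^m_t[y]$ and $R^m_t[y]$ estimates of Lemmas \ref{z_lip_lemma}--\ref{lemma:E_sup_R_bound} with $\|F\|_\infty$ replaced by the pathwise bound $\|F(s,q^{\ell-1,m}_s)\|^2 \leq M + C(\|q^{\ell-1,m}_s\|^2 + \tilde V(s, q^{\ell-1,m}_s))$, and then absorb the quadratic nonlinearity into a Gr\"onwall step using the inductive hypothesis on $E[\sup_{t \leq T} \tilde V(t, q_t^{\ell-1,m})]$. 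A cleaner alternative, which I would prefer, is to run the entire induction at the level of the truncated systems $q_t^{\ell,m,M}$ (where the bounded-coefficient theory applies verbatim) and only transfer to the untruncated processes at the very end via the union bound displayed above; the untruncated tightness is then extracted as a corollary rather than used as an input.
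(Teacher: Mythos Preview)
Your localization strategy is the same as the paper's, and your ``cleaner alternative'' is pointing in the right direction, but two concrete pieces are missing or underspecified.

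First, the stopping time in your displayed union bound must include the exit times of \emph{all} intermediate levels $q^{1,m},\ldots,q^{\ell-1,m}$, not just $q^m$ and $q^{\ell,m}$. The identification $q_t^{\ell,m}=q_t^{\ell,m,M}$ on $[0,T]$ requires $R_t^m[q^{\ell-1,m}]=R_t^{m,M}[q^{\ell-1,m,M}]$, which in turn requires $q^{\ell-1,m}=q^{\ell-1,m,M}$ on $[0,T]$, and so on down the hierarchy. The paper therefore works with $\sigma_M^{\ell,m}:=\eta_M^m\wedge\min_{1\le k\le\ell}\eta_M^{k,m}$. (Also, under Assumption \ref{unbounded_assump} the higher $q$-derivatives of $\gamma$ and $\psi$ need not be bounded, so these must be truncated as well for Appendix \ref{app:assump} to apply to the cutoff system.)

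Second, and this is the main gap: the paper never proves uniform-in-$m$ tightness of $q_t^m$ or of any $q_t^{\ell,m}$ by a direct Lyapunov argument, and in particular never re-derives the $z_t^m$ or $R_t^m$ estimates with unbounded $F$. Instead it reduces \emph{all} tightness to that of the single $m$-independent homogenized process $q_t$. The mechanism is as follows. On the event where level $k$ is the first in the hierarchy to exit the $M$-ball by time $T$ (so lower levels have not exited and hence $q^{k,m}=q^{k,m,M}$ up to that exit time), either $\sup_{t\in[0,T]}\|q_t^{k,m,M}-q_t^M\|>1$, which has probability bounded by $E[\sup_{t\in[0,T]}\|q_t^{k,m,M}-q_t^M\|]\to 0$ as $m\to 0$ by Theorem \ref{convergence_theorem} applied to the truncated system, or else $q_t^M$ (which equals $q_t$ on this event) comes within $1$ of the $M$-sphere, which has probability at most $P(\sup_{t\in[0,T]}\|q_t\|\ge M-1)$, independent of $m$. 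Summing over $k=1,\ldots,\ell$ and over $q^m$ gives
\[
\limsup_{m\to 0}P\bigl(\sigma_M^{\ell,m}\le T\bigr)\le(\ell+1)\,P\Bigl(\sup_{t\in[0,T]}\|q_t\|\ge M-1\Bigr)\longrightarrow 0\quad\text{as }M\to\infty.
\]
This completely sidesteps the obstacle you identify; no Lyapunov estimate on $q_t^{\ell,m}$ and no re-working of the $R_t^m$ bounds with unbounded $F$ are needed. Global existence of the untruncated $q_t^{\ell,m}$ is handled beforehand and separately, via a Lyapunov criterion for SDEs with external semimartingale forcing (Theorem \ref{gen_liap_thm} and Lemma \ref{lemma:global_exist2}); it is an input to, not a byproduct of, the convergence argument.
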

The method of proof is very similar to that of Theorem 6.1 in  \cite{BirrellHomogenization}, though there are some additional technical complications, primarily arising from the need to prove non-explosion of solutions to various SDEs with unbounded coefficients and semimartingale external-forcing terms (see Appendix \ref{app:gen_SDE} for the relevant tools). The main ideas are outlined below and, for completeness, a full proof is included in Appendix \ref{app:unbounded_F_proof}:

The majority of the items in Assumption \ref{unbounded_assump} are there to ensure non-explosion of solutions to the Langevin equation, as well as the approximation hierarchy, i.e., so that we can apply the Liapunov function method from Theorem \ref{gen_liap_thm} (see the assumptions therein).  These requirements differ somewhat from those in \cite{BirrellHomogenization}, as   here we must consider SDEs with external semimartingale forcing terms.  Our method for handling such systems  requires the additional bound \req{eq:nablaV_bound} (compare with item (4) in Theorem  \ref{gen_liap_thm}).  On the other hand, we combined  Eq. (6.8)-(6.9) from Theorem 6.1 of \cite{BirrellHomogenization} into the single condition \req{eq:dt_psi_F_bound}  simply for efficiency, as it does not change the proof.

To prove the convergence result \req{eq:unbounded_conv}, one begins by defining a family of cutoff systems: Let $\chi:\mathbb{R}^n\to [0,1]$ be a $C^\infty$ bump function, equal to $1$ on  $\overline{B_1(0)}\equiv\{\|q\|\leq 1\}$ and zero outside $\overline{B_2(0)}$. Given $r>0$ let $\chi_r(q)=\chi(q/r)$ and define
\begin{align}
&V_r(t,q)=\chi_r(q)V(t,q), \hspace{2mm} \tilde F_r(t,q)=\chi_r(q) \tilde F(t,q), \hspace{2mm}  \psi_r(t,q)=\chi_r(q)\psi(t,q),\notag\\
&\gamma_r(t,q)=\chi_r(q)\gamma(t,q)+(1-\chi_r(q))\lambda I.
\end{align}
For each $r>0$, replacing $V$ with $V_r$, $F$ with $F_r$ etc., we arrive at an SDE satisfying the hypotheses of Theorem \ref{convergence_theorem}; the regularity conditions in \req{unbounded_assump} were chosen  for this exact purpose.  Note that this also accounts for the stronger assumptions here, as opposed to in Theorem 6.1 from  \cite{BirrellHomogenization}; see also the discussion at the end of Appendix \ref{app:assump}.

  Let $(q_t^{r,m},u_t^{r,m})$ be the solutions to the cutoff system, $q_t^r$ the solution to the corresponding homogenized equation, and $q_t^{r,\ell,m}$ the solutions to the corresponding approximating hierarchy, all using the same initial conditions as the system without the cutoff. Corresponding solutions to the original and cutoff systems  agree up until the first exit time, denoted $\sigma_r^{\ell,m}$,  of any position processes (Langevin, homogenized, or hierarchy up to level $\ell$) from the ball of radius $r$. Therefore
\begin{align}
&P\left(\frac{\sup_{t\in[0,T]}\|q_t^m-q^{\ell,m}_t\|}{m^{\ell/2-\epsilon}}>\delta\right)\\
\leq&P\left(\sigma^{\ell,m}_r> T,\frac{\sup_{t\in[0,T]}\|q_{t}^{r,m}-q^{r,\ell,m}_{t}\|}{m^{\ell/2-\epsilon}}>\delta\right)+P\left(\sigma_r^{\ell,m}\leq T\right).\notag
\end{align}
The first term involves the cutoff processes, to which Theorem \ref{convergence_theorem} applies, and hence it converges to zero by Markov's inequality.  For the second, one must show that it converges to zero as $r\to\infty$, uniformly in $m$, i.e., the probability that any of the position processes exits the $r$-ball goes to zero as $r\to\infty$, uniformly in $m$.  This is done in Appendix \ref{app:unbounded_F_proof}, starting with \req{eq:r_ball_prob}.

\section{Discussion}

In this paper we have shown how a bootstrapping method can be used to derive higher-order approximations to the position degrees-of-freedom of Langevin dynamics, in the small-mass limit.  We obtain a hierarchy of approximations $q^{\ell,m}_t$ ($m$ denotes the mass), where the $\ell$'th level is an  $O(m^{\ell/2})$ approximation to the Langevin position degrees-of-freedom, $q^m_t$,  over compact time intervals.  The equations for the $q^{\ell,m}_t$'s (see Theorem \ref{convergence_theorem}) consist of the standard overdamped equation (i.e., the $\ell=1$ equation) with an added semimartingale correction term, which is independent of $q^{\ell,m}_t$; the correction term (for $\ell\geq 2$) is constructed from the solution, $q^{\ell-1,m}_t$, at the previous level.    This work naturally leads to the following two questions.

First, can the hierarchy of approximations derived here be used as the basis for efficient numerical methods with higher-order-in-$m$ accuracy? This is a question for future work, but the form of hierarchy derived here suggests that one should be able to avoid the difficulty inherent in the $O(m^{-1/2})$ divergence of the velocity degrees of freedom in the underdamped Langevin equation.  The SDE for $q^{\ell,m}_t$, \req{q_ell_def}, consists of the standard overdamped approximation with an explicit semimartingale correction term.  Methods for simulating the overdamped Langevin equation are well studied and much used,  so the question is whether methods can be devised to efficiently incorporate this correction term.  The fact that the correction term does not depend on the variables, $q^{\ell,m}_t$, that one is  solving for is promising.

Secondly, can the  method employed here be adapted to study the singular limit of other  SDEs, moving beyond the small-mass limit of Langevin dynamics?  We do not have an answer at this time, but we note that   \cite{10.1007/978-981-15-0294-1_4} generalizes the technique of  \cite{BirrellHomogenization} to derive a homogenized SDE for a larger class of systems, with convergence in the same $L^p$-sense over compact time intervals, and with explicit  remainder terms; in fact, \cite{BirrellHomogenization} itself studies more general noisy, dissipative Hamiltonian systems than just Langevin dynamics.  Such convergence results for the Langevin equation were the starting point for the bootstrapping method used in this paper, and so \cite{10.1007/978-981-15-0294-1_4} might provide a starting point for  higher-order approximations to more general singular limits of SDEs.  We do anticipate that proving the required Lipschitz properties of the remainder terms will be more difficult; here we made heavy use of the  formula \req{z_func_def} for the fast degrees-of-freedom, $z^m_t$, but  a similar expression is not available in general.

\appendix

\numberwithin{assumption}{section}
\section{Assumptions Implying Homogenization as $m\to 0$}\label{app:assump}
\setcounter{assumption}{0}
    \renewcommand{\theassumption}{\Alph{section}\arabic{assumption}}

    \renewcommand{\thelemma}{\Alph{section}\arabic{lemma}}
    \renewcommand{\thetheorem}{\Alph{section}\arabic{theorem}}
    \renewcommand{\thecorollary}{\Alph{section}\arabic{corollary}}

In this appendix, we give a list of properties  that, as shown in \cite{BirrellHomogenization}, are sufficient to  guarantee that the solutions to the SDE \req{q_eq}-\req{u_eq} satisfy the properties \req{results_summary1a}, \req{results_summary1b}, and \req{q_Lp_bound} (note that what we call $\tilde F$ here was simply called $F$ in \cite{BirrellHomogenization}). Some of the  assumptions below are strengthened, as compared to \cite{BirrellHomogenization}, in order to meet the needs of the current paper; we remark on this further below.

We assume that
\begin{enumerate}
\item $\gamma:[0,\infty)\times\mathbb{R}^n\to\mathbb{R}^{n\times n}$ is $C^3$.
\begin{enumerate}
\item The values of $\gamma$  are symmetric matrices.
\item The  eigenvalues of $\gamma$ are  uniformly bounded below by some $\lambda>0$.
\item  $\gamma$ is bounded.
 \item For all $T>0$ and all   multi-indices $\alpha$  with $1\leq |\alpha|\leq 3$, $\partial_{q^\alpha}\gamma$ is bounded uniformly for $(t,q)\in [0,T]\times\mathbb{R}^{n}$.
\item For all $T>0$ and all   multi-indices $\alpha$  with $0\leq |\alpha|\leq 2$, $\partial_{q^\alpha}\partial_t\gamma$ is bounded uniformly for $(t,q)\in [0,T]\times\mathbb{R}^{n}$.
\end{enumerate}
\item $\psi:[0,\infty)\times\mathbb{R}^n\to\mathbb{R}^n$ is $C^4$.
\begin{enumerate}
\item For all $T>0$ and all   multi-indices $\alpha$ with $1\leq |\alpha|\leq 4$, $\partial_{q^\alpha}\psi$ is bounded uniformly for $(t,q)\in [0,T]\times\mathbb{R}^{n}$.
\item For all $T>0$ and all   multi-indices $\alpha$  with $0\leq |\alpha|\leq 3$, $\partial_{q^\alpha}\partial_t\psi$ is bounded uniformly for $(t,q)\in [0,T]\times\mathbb{R}^{n}$.
\end{enumerate}
\item $\tilde F:[0,\infty)\times\mathbb{R}^n\to\mathbb{R}^n$ is continuous.
\begin{enumerate}
\item   $\tilde F$ is bounded.
\item   $\tilde F$  is Lipschitz in $q$ uniformly in $t$.
\end{enumerate}
\item $\sigma:[0,\infty)\times\mathbb{R}^n\to\mathbb{R}^{n\times k}$ is continuous.
\begin{enumerate}
\item   $\sigma$ is  bounded.
\item $\sigma$ is Lipschitz in $q$ uniformly in $t$.
\end{enumerate}
\item $V:[0,\infty)\times\mathbb{R}^n\to\mathbb{R}^n$ is $C^2$.
\begin{enumerate}
\item  $\nabla_q V$ is Lipschitz in $q$ uniformly in $t$.
\item For all $T>0$, $\nabla_qV$ is bounded uniformly for $(t,q)\in [0,T]\times\mathbb{R}^{n}$.
\item There exist $a,b\geq 0$ such that $\tilde V(t,q)\equiv a+b\|q\|^2+V(t,q)$ is non-negative for all $t,q$.
\item There exist $M,C> 0$ such that 
\begin{align}
|\partial_tV(t,q)|\leq M+C(\|q\|^2+\tilde V(t,q))
\end{align}
and
\begin{align}
\|-\partial_t\psi(t,q)+\tilde F(t,q)\|^2\leq M+C(\|q\|^2+\tilde V(t,q))
\end{align}
 for all $t,q$.
\end{enumerate}
\item There exists $C>0$ such that the (random) initial conditions satisfy $ \|u^m_0\|^2 \leq Cm$ for all $m>0$ and all $\omega\in\Omega$ and $E[\|q^m_0\|^p]<\infty$, $E[\|q_0\|^p]<\infty$, and $E[\|q_0^m-q_0\|^p]^{1/p}=O(m^{1/2})$ for all $p>0$.
\end{enumerate}

The various global-in-time properties  are used to prove non-explosion of solutions, while the properties over compact time intervals are used to prove convergence to the homogenized SDE in  \cite{BirrellHomogenization}.  The reason we needed to strengthen certain regularity properties here, as compared to \cite{BirrellHomogenization}, is so we can prove the required Lipschitz properties of the remainder terms, \req{R_func_def}, on compact time intervals;  this is in contrast to \cite{BirrellHomogenization}, where one only had to show that the remainder terms converge to zero as $m\to 0$.  For example,  the third line of \req{R_main_def} includes a $\partial_{q^c}Q^{ikl}$ term, which in turn involves $\partial_{q^c}\partial_{q^b} \tilde\gamma^{-1}$.  To ensure this is Lipschitz in $q$, we have  assumed that  $\tilde\gamma$ is $C^3$ with third derivative  being bounded  on compact time intervals; more precisely, we have assumed this of both $\gamma$ and $\partial_j\psi$, as these are used to construct $\tilde\gamma$.  This is why we  require conditions on the third derivative of $\gamma$ and the fourth derivative of $\psi$, as opposed to \cite{BirrellHomogenization} where we only required conditions on derivatives up to order two and three respectively.  Similar remarks apply to the other objects.

\section{Properties of the Fundamental Solution}\label{app:fund_sol}
Our derivations will require the use of several properties of the fundamental solution of a linear ordinary differential equation (ODE).  Specifically, we need to consider the process obtained by pathwise solving the linear ODE
\begin{align}
\frac{d}{dt}\Phi_t^m=-\frac{1}{m}\tilde \gamma(t,y_t)\Phi^m_t,\hspace{2mm} \Phi^m_0=I,
\end{align}
where $y$ is a continuous semimartingale. The process  $\Phi^m_t$ is adapted and pathwise $C^1$; we will call it the fundamental-solution process, as each of its paths is the  fundamental solution to a linear ODE.

The symmetric part of $ \tilde \gamma$, denoted by $\gamma$,  is assumed to have eigenvalues bounded below by $\lambda>0$ (see Appendix \ref{app:assump}).  This implies the following  crucial bound
\begin{align}\label{fund_sol_decay}
\|\Phi^m_t(\Phi^m_s)^{-1}\|\leq e^{-\lambda(t-s)/m}\,\,\, \text{ for all }t\geq s
\end{align}
(see, for example, p.86 of \cite{teschl2012ordinary}). Note that while the left hand side is random, the upper bound is not. As we have stated it, this bound requires  the use of the the $\ell^2$ operator norm. Otherwise, there is an additional constant multiplying the exponential.

We will also need the following bound on the difference between the fundamental solutions corresponding to two  linear ODEs.  See the Appendix to \cite{BirrellPhaseSpace}.
\begin{lemma}\label{fund_matrix_diverg}
Let $B_i:[0,T]\rightarrow\mathbb{R}^{n\times n}$; $i=1,2$, be continuous and suppose their symmetric parts have eigenvalues bounded above by $\mu$, uniformly in $t$.  Consider the fundamental solutions, $\Phi_i(t)$, satisfying
\begin{align}
\frac{d}{dt}\Phi_i(t)=B_i(t)\Phi_i(t),\,\, \Phi_i(0)=I.
\end{align}
  Then for any $0\leq t\leq T$ we have the bound
\begin{align}
\|\Phi_1(t)-\Phi_2(t)\|\leq e^{\mu t} \int_0^t \|B_1(s)-B_2(s)\| ds  .
\end{align}
\end{lemma}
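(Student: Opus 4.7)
The plan is to treat the difference $D(t) := \Phi_1(t) - \Phi_2(t)$ as the solution to an inhomogeneous linear ODE and then solve it by variation of parameters. Differentiating gives
\begin{align}
D'(t) = B_1(t)\Phi_1(t) - B_2(t)\Phi_2(t) = B_1(t)D(t) + (B_1(t)-B_2(t))\Phi_2(t),
\end{align}
with $D(0) = 0$. Hence $D$ satisfies a linear ODE driven by $B_1$ and forced by $(B_1 - B_2)\Phi_2$, and Duhamel's principle yields the representation
\begin{align}
D(t) = \int_0^t \Phi_1(t)\Phi_1(s)^{-1}\bigl(B_1(s)-B_2(s)\bigr)\Phi_2(s)\,ds.
\end{align}

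Next I would bound each of the two fundamental-solution factors appearing in the integrand. The key ingredient is the standard estimate $\|\Phi_i(t)\Phi_i(s)^{-1}\| \leq e^{\mu(t-s)}$ in the $\ell^2$ operator norm, which follows from the hypothesis on the symmetric parts of $B_i$: for any vector $v$, the function $f(r) = \|\Phi_i(r)\Phi_i(s)^{-1}v\|^2$ satisfies $f'(r) = 2\langle w, B_i(r) w\rangle$, and writing $B_i$ as its symmetric plus antisymmetric parts leaves only the symmetric contribution, which is bounded above by $2\mu f(r)$. Gronwall's inequality then gives the exponential bound, and taking the supremum over unit vectors $v$ yields the operator norm estimate. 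In particular $\|\Phi_2(s)\| \leq e^{\mu s}$ and $\|\Phi_1(t)\Phi_1(s)^{-1}\| \leq e^{\mu(t-s)}$.

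Substituting these bounds into the integral representation of $D(t)$ and using submultiplicativity of the operator norm produces the factor $e^{\mu(t-s)}\cdot e^{\mu s} = e^{\mu t}$ inside the integrand, which is independent of $s$ and can be pulled out. What remains is exactly $\int_0^t \|B_1(s)-B_2(s)\|\,ds$, yielding the claimed bound.

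There is no substantive obstacle: the argument is routine linear-ODE manipulation. The only subtlety worth flagging is that the constant $1$ in front of the exponential in $\|\Phi_i(t)\Phi_i(s)^{-1}\| \leq e^{\mu(t-s)}$ relies on using the $\ell^2$ operator norm (so the symmetric/antisymmetric split can be exploited via $\langle w, B_i w\rangle = \langle w, (B_i)_{\text{sym}} w\rangle$); any other equivalent matrix norm would introduce an extra dimension-dependent prefactor that would then propagate into the final estimate.
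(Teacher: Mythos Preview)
Your proof is correct. The paper does not actually prove this lemma in the text but defers to the appendix of \cite{BirrellPhaseSpace}; your Duhamel/variation-of-parameters argument together with the differential inequality for $\|\Phi_i(r)\Phi_i(s)^{-1}v\|^2$ is precisely the standard route, and your remark about needing the $\ell^2$ operator norm to avoid an extra constant is exactly the point the paper makes in \req{fund_sol_decay}.
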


We will need the following lemma concerning  stochastic convolutions, adapted from Lemma 5.1 in \cite{particle_manifold_paper}:

\begin{lemma}\label{conv_lemma}
Let  $B_s$ be a continuous adapted  $\mathbb{R}^{n\times n}$-valued processes.  Let $\Phi(t)$ be the fundamental-solution process, pathwise satisfying
\begin{align}
\frac{d}{dt}\Phi(t)=B(t)\Phi(t),\,\, \Phi(0)=I.
\end{align}
Let $V_s$ be a  continuous adapted  $\mathbb{R}^{n\times k}$-valued processes. Then we have the $P$-a.s. equality
\begin{align}\label{convol_decomp}
&\Phi(t)\int_0^t\Phi^{-1}(s) V_s dW_s\\
=&\Phi(t)\int_0^t V_s dW_s-\Phi(t)\int_0^t \Phi^{-1} (s)B(s) \left(\int_s^t V_r dW_r\right) ds\text{ for all $t$.}\notag
\end{align}
\end{lemma}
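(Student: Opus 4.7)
The plan is to reduce the identity to a direct application of the It\^o product formula, following the argument of Lemma~5.1 in \cite{particle_manifold_paper}. Set $M_s \equiv \int_0^s V_r dW_r$, which is a continuous local martingale. Since $\Phi(t)$ is pathwise $C^1$ and adapted, so is $\Phi^{-1}(t)$, and differentiating $\Phi(t)\Phi^{-1}(t)=I$ gives the pathwise ODE $\frac{d}{ds}\Phi^{-1}(s) = -\Phi^{-1}(s) B(s)$. Thus $\Phi^{-1}$ is a finite-variation process with no martingale part, so its quadratic covariation with $M$ vanishes and the It\^o product rule yields
\begin{align*}
d(\Phi^{-1}(s) M_s) = \Phi^{-1}(s) V_s\,dW_s - \Phi^{-1}(s) B(s) M_s\,ds.
\end{align*}
Integrating from $0$ to $t$, and using $\Phi^{-1}(0)=I$ and $M_0=0$, this produces the preliminary identity
\begin{align*}
\int_0^t \Phi^{-1}(s) V_s\,dW_s = \Phi^{-1}(t)M_t + \int_0^t \Phi^{-1}(s) B(s) M_s\,ds.
\end{align*}

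Next I would use the purely pathwise relation $\Phi^{-1}(t) = I - \int_0^t \Phi^{-1}(s) B(s)\,ds$, which is the integrated form of the ODE for $\Phi^{-1}$. Substituting this for $\Phi^{-1}(t)$ in the term $\Phi^{-1}(t)M_t$ and writing $M_s = M_t - (M_t - M_s)$ with $M_t - M_s = \int_s^t V_r\,dW_r$, the pieces proportional to $M_t$ cancel and leave $M_t$ itself, while the remaining pieces assemble into $-\int_0^t \Phi^{-1}(s) B(s) \int_s^t V_r\,dW_r\,ds$. Multiplying through on the left by $\Phi(t)$ gives the claimed formula at each fixed $t$.

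The one remaining subtlety is upgrading the pointwise-in-$t$ almost-sure equality to the ``for all $t$'' form stated in the lemma. Both sides are continuous in $t$ by standard path regularity of It\^o integrals, so almost-sure agreement on a countable dense subset of $[0,\infty)$ extends to indistinguishability of the two continuous processes. I anticipate no essential obstacle here; together with the bookkeeping in the substitution step above, this is the only part of the argument that requires care, and it is carried out explicitly in Lemma~5.1 of \cite{particle_manifold_paper}, to which a fully detailed proof can be deferred.
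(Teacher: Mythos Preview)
Your proof is correct. The paper does not provide its own proof of this lemma, instead deferring entirely to Lemma~5.1 of \cite{particle_manifold_paper}; your argument via the It\^o product rule applied to $\Phi^{-1}(s)M_s$ is precisely the approach taken there, and the algebraic cancellation you describe goes through as stated.
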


The following lemma will assist us in bounding processes having the form of the last term in \req{convol_decomp}. The proof is very similar to that of Lemma 5.1 in  \cite{particle_manifold_paper}, but we provide it for completeness.
\begin{lemma}\label{Phi_int_bound}
Let $V_s$ be a continuous adapted  $\mathbb{R}^{n\times k}$-valued process and $\alpha>0$. 

Then for every $j\in\mathbb{Z}_0$ there exists $C_j>0$ such that for all $T>0$, $\delta>0$ we have the $P$-a.s. bound
\begin{align}
\sup_{t\in[0,T]}&\int_0^t (t-s)^je^{-\alpha(t-s)}\|\int_s^t V_r dW_r\| ds\\
\leq \frac{C_j}{\alpha^{j+1}}\bigg(&\max_{\ell=1,...,N}\sup_{\tau\in[(\ell-1)\delta,\min\{(\ell+1)\delta,T\}]}\|\int_{(\ell-1)\delta}^{\tau}V_rdW_r\|\notag\\
&+e^{-\alpha \delta/2}  \sup_{\tau\in[0,T]}\|\int_0^\tau V_r dW_r\|\bigg),\notag
\end{align}
where $N=\min\{\ell\in\mathbb{Z}:\ell\delta\geq T\}$.   We emphasize that $C_j$ depends only on $j$. 
\end{lemma}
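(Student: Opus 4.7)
The plan is to set $M_\tau \equiv \int_0^\tau V_r dW_r$, so that $\int_s^t V_r dW_r = M_t - M_s$, and to write $M_{[\ell]} \equiv \sup_{\tau \in [t_{\ell-1}, T_{\ell+1}]}\|\int_{t_{\ell-1}}^\tau V_r dW_r\|$ for the piece of $M$ associated with the $\ell$-th partition cell. For each fixed $t \in [0,T]$, I would split the outer integral at $s = (t-\delta)_+$ as $I_1(t) + I_2(t)$, where $I_1(t)$ is the integral over $s \in [(t-\delta)_+, t]$ and $I_2(t)$ over $s \in [0, (t-\delta)_+]$. The intuition is that on $I_1$ the endpoints $s,t$ are close enough to both sit inside a single partition cell, so $\|M_t - M_s\|$ is controlled by an appropriate $M_{[\ell]}$, while on $I_2$ the exponential weight $e^{-\alpha(t-s)}$ is at most $e^{-\alpha\delta}$, which I can trade for the desired factor $e^{-\alpha\delta/2}$.

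The heart of the argument is the near piece $I_1$, and the main obstacle is the deterministic case analysis needed to fit $[s,t]$ into a single cell. I would fix $\ell^* = \lceil t/\delta \rceil \vee 1$, so that $t \in [t_{\ell^*-1}, t_{\ell^*}]$. For $s \in [(t-\delta)_+, t]$, I would argue that either $s \geq t_{\ell^*-1}$, in which case both $s$ and $t$ lie in cell $\ell^*$, namely $[t_{\ell^*-1}, T_{\ell^*+1}]$, or else $\ell^* \geq 2$ and $s \in [t_{\ell^*-2}, t_{\ell^*-1})$, in which case both $s$ and $t$ lie in cell $\ell^*-1$, namely $[t_{\ell^*-2}, T_{\ell^*}]$. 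In either case, writing $M_t - M_s$ as the difference of two increments based at the left endpoint of the chosen cell gives the pathwise bound $\|M_t - M_s\| \leq 2 \max_{\ell=1,\ldots,N} M_{[\ell]}$. Pulling this supremum out and bounding the remaining time integral by $\int_0^\delta u^j e^{-\alpha u} du \leq j!/\alpha^{j+1}$ (via the Gamma integral) yields $I_1(t) \leq (2j!/\alpha^{j+1}) \max_\ell M_{[\ell]}$.

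For $I_2$, I plan to use the crude bound $\|M_t - M_s\| \leq 2 \sup_{\tau\in[0,T]} \|M_\tau\|$ together with the split $e^{-\alpha(t-s)} \leq e^{-\alpha\delta/2} e^{-\alpha(t-s)/2}$, valid because $t - s \geq \delta$ on $I_2$. The remaining time integral $\int_\delta^\infty u^j e^{-\alpha u/2} du$ is bounded by $2^{j+1} j!/\alpha^{j+1}$ by extending to $[0,\infty)$ and rescaling. Summing the $I_1$ and $I_2$ estimates, taking the supremum over $t \in [0,T]$, and absorbing constants into $C_j = 2^{j+2} j!$ (which depends only on $j$) gives the claim. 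The decisive structural feature enabling this is the $\delta$-overlap built into the partition: cells have length up to $2\delta$ but are spaced only $\delta$ apart, which is precisely what allows any interval $[s,t]$ with $t - s \leq \delta$ to be captured by some single cell.
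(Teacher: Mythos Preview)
Your proof is correct and follows essentially the same strategy as the paper: split the $s$-integral into a near piece $[(t-\delta)_+,t]$ controlled by fitting $[s,t]$ into a single overlapping cell, and a far piece $[0,(t-\delta)_+]$ controlled by the exponential decay factor $e^{-\alpha\delta/2}$. The only cosmetic differences are that the paper first splits on whether $t<\delta$ or $t\geq\delta$ and chooses the cell index so that a single cell always suffices (avoiding your two-case analysis), and it extracts the $e^{-\alpha\delta/2}$ via Cauchy--Schwarz on $\int u^{j}e^{-u}du$ rather than by splitting the exponent; neither affects the argument.
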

\begin{proof}

Suppose $\delta<T$.  First split
\begin{align}
&\sup_{t\in[0,T]}\int_0^t (t-s)^j e^{-\alpha(t-s)}\|\int_s^t V_r dW_r\| ds\\
\leq &\sup_{t\in[0,\delta]}\int_0^t(t-s)^j e^{-\alpha(t-s)}\|\int_s^t V_r dW_r\| ds\notag\\
&+\sup_{t\in[\delta,T]}\int_0^t(t-s)^j e^{-\alpha(t-s)}\|\int_s^t V_r dW_r\| ds.\notag
\end{align}
The first term can be bounded as follows.
\begin{align}\label{first_term_bound}
&\sup_{t\in[0,\delta]}\int_0^t (t-s)^je^{-\alpha(t-s)}\|\int_s^t V_r dW_r\| ds\\
=&\sup_{t\in[0,\delta]}\int_0^t (t-s)^je^{-\alpha(t-s)}\|\int_0^t V_r dW_r-\int_0^s V_r dW_r\| ds\notag\\
\leq&\frac{2}{\alpha^{j+1}}\sup_{0\leq \tau\leq\delta}\|\int_0^\tau V_r dW_r\|\int_0^{\alpha \delta} u^je^{-u}du.\notag
\end{align}

In the second term we  split the integral to obtain
\begin{align}\label{second_term_bound1}
&\sup_{t\in[\delta,T]}\int_0^t(t-s)^j e^{-\alpha(t-s)}\|\int_s^t V_r dW_r\| ds\\
\leq&\sup_{t\in[\delta,T]}\int_0^{t-\delta}(t-s)^j e^{-\alpha(t-s)}\|\int_s^t V_r dW_r\| ds\notag\\
&+\sup_{t\in[\delta,T]}\int_{t-\delta}^t(t-s)^j e^{-\alpha(t-s)}\|\int_s^t V_r dW_r\| ds\notag\\ 
\leq&\frac{2}{\alpha^{j+1}}\left(\int_{0}^{\infty}u^{2j} e^{-u}du\right)^{1/2}e^{-\alpha \delta/2}  \sup_{\tau\in[0,T]}\|\int_0^\tau V_r dW_r\|\notag\\
&+\sup_{t\in[\delta,T]}\int_{t-\delta}^t(t-s)^j e^{-\alpha(t-s)}\|\int_s^t V_r dW_r\| ds.\notag
\end{align}
Let  $N=\min\{\ell\in\mathbb{Z}:\ell\delta\geq T\}$.  Then $P$-a.s.
\begin{align}\label{second_term_bound2}
&\sup_{t\in[\delta,T]}\int_{t-\delta}^t(t-s)^j e^{-\alpha(t-s)}\|\int_s^t V_r dW_r\| ds\\
\leq& \max_{\ell=1,...,N-1}\sup_{t\in [\ell\delta, \min\{(\ell+1)\delta,T\}]}\int_{(\ell-1)\delta}^{t}(t-s)^je^{-\alpha(t-s)}\|\int_s^tV_rdW_r\|ds\notag\\
\leq&  \frac{2}{\alpha^{j+1}}\int_{0}^{\infty}u^je^{-u}du\max_{\ell=1,...,N-1}\sup_{\tau\in[(\ell-1)\delta,\min\{(\ell+1)\delta,T\}]}\|\int_{(\ell-1)\delta}^{\tau}V_rdW_r\|.\notag
\end{align}

Combining \req{first_term_bound}, \req{second_term_bound1}, and \req{second_term_bound2} gives the $P$-a.s. bound
\begin{align}
\sup_{t\in[0,T]}&\int_0^t (t-s)^j e^{-\alpha(t-s)}\|\int_s^t V_r dW_r\| ds\\
\leq \frac{C_j}{\alpha^{j+1}}\bigg(&\max_{\ell=1,...,N-1}\sup_{\tau\in[(\ell-1)\delta,\min\{(\ell+1)\delta,T\}]}\|\int_{(\ell-1)\delta}^{\tau}V_rdW_r\|\notag\\
&+e^{-\alpha \delta/2}  \sup_{\tau\in[0,T]}\|\int_0^\tau V_r dW_r\|\bigg).\notag
\end{align}
The case $\delta\geq T$ is covered by bounding $\max_{\ell=1,...,N-1}$ by $\max_{\ell=1,...,N}$.
\end{proof}

\section{Frequently Used Inequalities}\label{app:ineq}
For the convenience of the reader, here we collect several inequalities that are repeatedly used in our proofs. In proofs, we will refer to them  via the abbreviations given in parentheses.

H{\"o}lder's Inequality (see, for example, Theorem 6.2 in \cite{folland2013real}):
\begin{lemma}[H]\label{lemma:Holder}
Let $(X,\mathcal{M},\mu)$ be a measure space, $1<p,q<\infty$ with $1/p+1/q=1$, and $f,g$ be measurable functions on $X$.  Then
\begin{align}
\int |fg|d\mu\leq \left(\int |f|^pd\mu\right)^{1/p}\left(\int |g|^qd\mu\right)^{1/q}.
\end{align}
\end{lemma}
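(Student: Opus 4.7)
The plan is to reduce the inequality to a single pointwise estimate (Young's inequality) via homogeneity, and then integrate. First I would dispose of the trivial cases: if either $\|f\|_p \equiv \left(\int |f|^p d\mu\right)^{1/p}$ or $\|g\|_q \equiv \left(\int |g|^q d\mu\right)^{1/q}$ is zero, then $f$ or $g$ vanishes $\mu$-almost everywhere and both sides are zero. If either norm is infinite, the inequality holds vacuously. So assume $0 < \|f\|_p, \|g\|_q < \infty$, and by replacing $f$ with $f/\|f\|_p$ and $g$ with $g/\|g\|_q$, reduce to the normalized case $\|f\|_p = \|g\|_q = 1$, where the target reduces to $\int |fg|\,d\mu \leq 1$.

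Next, I would establish Young's inequality: for $a,b \geq 0$ and conjugate exponents $1 < p,q < \infty$ with $1/p + 1/q = 1$,
\begin{align*}
ab \leq \frac{a^p}{p} + \frac{b^q}{q}.
\end{align*}
The cleanest derivation uses concavity of $\log$ on $(0,\infty)$: for $a,b>0$,
\begin{align*}
\log\!\left(\tfrac{1}{p} a^p + \tfrac{1}{q} b^q\right) \geq \tfrac{1}{p}\log(a^p) + \tfrac{1}{q}\log(b^q) = \log(ab),
\end{align*}
and exponentiating gives the claim. The $a=0$ or $b=0$ case is trivial.

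With Young's inequality in hand, apply it pointwise with $a = |f(x)|$, $b = |g(x)|$ to obtain, $\mu$-a.e.,
\begin{align*}
|f(x) g(x)| \leq \tfrac{1}{p}|f(x)|^p + \tfrac{1}{q}|g(x)|^q.
\end{align*}
Integrating over $X$ and using $\|f\|_p = \|g\|_q = 1$ yields $\int |fg|\, d\mu \leq 1/p + 1/q = 1$, which is exactly the normalized H\"older inequality. Undoing the normalization gives the stated bound.

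This is a standard proof, so I anticipate no real obstacles; the only subtlety worth mentioning is measurability of $|fg|$, which follows because $|f|$ and $|g|$ are measurable and the product of nonnegative measurable functions is measurable. The sharpness of the constant $1$ (no implicit multiplicative factor) is built in by the normalization step, which relies on the fact that both $\|\cdot\|_p$ and $\|\cdot\|_q$ are positively homogeneous of degree $1$. As noted in the statement, a textbook treatment can be found in Folland.
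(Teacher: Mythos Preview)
Your proof is correct and is the standard argument via Young's inequality. Note, however, that the paper does not actually prove this lemma: it is stated in an appendix of background inequalities with a citation to Folland's \emph{Real Analysis} (Theorem 6.2), so there is no proof in the paper to compare against. Your argument is essentially the one found in that reference.
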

When applied to counting measure on $\{1,...,N\}$, with $g_i=1$, H{\"o}lder's Inequality gives the following useful bound on finite sums (one can also obtain it from Jensen's inequality):
\begin{lemma}[HFS]\label{lemma:Holder_finite_sums}
Let $1\leq p<\infty$ and $f_i\geq 0$, $i=1,...,N$. Then
\begin{align}
\left(\sum_{i=1}^N f_i\right)^p\leq  N^{p-1} \sum_{i=1}^N f_i^p.
\end{align}
\end{lemma}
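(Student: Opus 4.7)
The plan is to prove this inequality in essentially one line, using one of the two routes hinted at in the text: either specialize Hölder's inequality (Lemma \ref{lemma:Holder}) to counting measure, or apply Jensen's inequality to the convex map $x \mapsto x^p$. Both arguments are completely standard, so the task is really just to write them out carefully and handle the edge case $p = 1$.

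The approach I would present first is the Hölder route, since it is consistent with the labeling convention in the paper (HFS is an immediate corollary of H). First I would handle $p = 1$ separately, where the inequality reduces to $\sum_i f_i \leq \sum_i f_i$, an equality. For $p > 1$, take the measure space $(X, \mathcal{M}, \mu)$ in Lemma \ref{lemma:Holder} to be $\{1, \dots, N\}$ with counting measure, set the measurable function $f$ to have values $f_i$ and $g \equiv 1$, and use the conjugate exponent $q = p/(p-1)$. Hölder's inequality then yields
\begin{align*}
\sum_{i=1}^N f_i = \sum_{i=1}^N f_i \cdot 1 \leq \left(\sum_{i=1}^N f_i^p\right)^{1/p}\left(\sum_{i=1}^N 1\right)^{1/q} = N^{1/q}\left(\sum_{i=1}^N f_i^p\right)^{1/p}.
\end{align*}
Raising both sides to the power $p$ and using $p/q = p - 1$ produces exactly the claimed bound.

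As a remark, I would note that the same bound follows from Jensen's inequality applied to the convex function $\varphi(x) = x^p$ on $[0, \infty)$, giving $\bigl(N^{-1}\sum_i f_i\bigr)^p \leq N^{-1}\sum_i f_i^p$, which rearranges to the result after multiplying through by $N^p$. There is no real obstacle to overcome: the entire proof fits in a few lines, and the only care required is to verify the exponent arithmetic $p/q = p - 1$ and to treat $p = 1$ (where the inequality is a tautology).
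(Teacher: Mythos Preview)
Your proposal is correct and follows exactly the approach indicated in the paper: the paper does not give a formal proof but simply remarks that the bound follows from H\"older's inequality on $\{1,\dots,N\}$ with counting measure and $g_i=1$, or alternatively from Jensen's inequality. You have carried out both of these sketches in full.
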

Minkowski's Inequality for Integrals (see Theorem 6.19 in \cite{folland2013real}):
\begin{lemma}[MI]\label{lemma:Minkowski_integral}
Let $(X,\mathcal{M},\mu)$ and $(Y,\mathcal{N},\nu)$ be sigma-finite measure spaces, $1\leq p<\infty$, and $f$ be a product-measurable function on $X\times Y$ that satisfies one of the following two conditions:
\begin{enumerate}
\item  $f\geq 0$,
\item$f(\cdot,y)\in L^p(\mu)$ for $\nu$-a.e. $y$ and  $y\to\|f(\cdot,y)\|_{L^p(\mu)}$ is in $L^1(\nu)$.
\end{enumerate}
Then
\begin{align}
\left(\left|\int f(x,y)\nu(dy)\right|^p\mu(dx)\right)^{1/p}\leq \int \left(\int |f(x,y)|^p\mu(dx)\right)^{1/p} \nu(dy).
\end{align}
\end{lemma}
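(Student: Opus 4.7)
The plan is to establish the inequality first for $p=1$ (where it reduces to a direct application of Tonelli/Fubini) and then for $1<p<\infty$ by duality with the conjugate exponent $q=p/(p-1)$. In both cases the pointwise bound $|\int f(x,y)\,\nu(dy)|\leq \int|f(x,y)|\,\nu(dy)$ lets me replace $f$ by $|f|$ on the left-hand side, so it suffices to prove $\|\int |f(\cdot,y)|\,\nu(dy)\|_{L^p(\mu)}\leq \int \|f(\cdot,y)\|_{L^p(\mu)}\,\nu(dy)$ under hypothesis (1); hypothesis (2) merely guarantees finiteness of the quantities involved so that signed and $L^p$ manipulations are legitimate.

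For $p=1$, I would simply write $\int\bigl(\int|f(x,y)|\,\nu(dy)\bigr)\mu(dx)=\int\bigl(\int|f(x,y)|\,\mu(dx)\bigr)\nu(dy)$ by Tonelli's theorem (valid under (1) for nonnegative integrands and under (2) by absolute integrability), which is exactly the claim. For the main case $1<p<\infty$, I would set $g(x)\equiv\int|f(x,y)|\,\nu(dy)$ and invoke the duality characterization $\|g\|_{L^p(\mu)}=\sup\bigl\{\int gh\,d\mu:\,h\geq 0,\ \|h\|_{L^q(\mu)}\leq 1\bigr\}$. For any such nonnegative test function $h$, Tonelli gives $\int gh\,d\mu=\int\int|f(x,y)|h(x)\,\mu(dx)\,\nu(dy)$, and an application of H\"older's inequality (Lemma \ref{lemma:Holder}) to the inner $\mu$-integral bounds it by $\|f(\cdot,y)\|_{L^p(\mu)}\|h\|_{L^q(\mu)}\leq \|f(\cdot,y)\|_{L^p(\mu)}$. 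Integrating in $y$ and taking the supremum over $h$ yields the desired estimate.

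The main obstacles are technical rather than conceptual. One must check that $x\mapsto\int f(x,y)\,\nu(dy)$ is $\mu$-measurable, which follows from product-measurability together with $\sigma$-finiteness via Tonelli; and one must handle the case when $\int\|f(\cdot,y)\|_{L^p(\mu)}\,\nu(dy)=\infty$ separately (the inequality then holds trivially). In case (2), the hypothesis that $y\mapsto \|f(\cdot,y)\|_{L^p(\mu)}$ lies in $L^1(\nu)$ is precisely what makes this degenerate case vacuous and ensures every application of Fubini is on an absolutely integrable function. The duality formula for the $L^p$ norm itself requires $\sigma$-finiteness of $\mu$, which is assumed.
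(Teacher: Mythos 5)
Your argument is correct, and it is essentially the standard proof: the paper does not prove this lemma itself but cites Theorem 6.19 of Folland, whose proof is exactly your route (reduce to $|f|$, handle $p=1$ by Tonelli, and for $1<p<\infty$ use the duality characterization of the $L^p(\mu)$ norm together with Tonelli and H\"older, with $\sigma$-finiteness justifying both the duality formula and the interchange of integrals). Your remarks on measurability, the trivial infinite case, and the role of hypothesis (2) in making the signed case legitimate are exactly the right technical caveats, so nothing is missing.
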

$L^p$-Triangle Inequality (also known as Minkowski's inequality, see  Theorem 6.5 in \cite{folland2013real}):
\begin{lemma}[T]\label{lemma:triangle}
Let $(X,\mathcal{M},\mu)$ be a measure space, $1\leq p<\infty$, and $f,g$ be measurable functions on $X$.  Then
\begin{align}
\left(\int |f+g|^pd\mu\right)^{1/p}\leq \left(\int |f|^pd\mu\right)^{1/p}+\left(\int |g|^pd\mu\right)^{1/p}.
\end{align}
\end{lemma}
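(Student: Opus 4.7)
My plan is to prove the $L^p$-triangle inequality by reducing it to Hölder's inequality (Lemma \ref{lemma:Holder}), which is already available. The case $p = 1$ is handled separately and immediately: the pointwise bound $|f+g| \leq |f| + |g|$ combined with monotonicity and linearity of the integral gives the claim directly. The substantive case is $1 < p < \infty$, where I would introduce the conjugate exponent $q = p/(p-1)$ satisfying $1/p + 1/q = 1$ and note the algebraic identity $(p-1)q = p$, which is the key reason Hölder dovetails with this estimate.

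For the main case, first I would observe that one may assume $\|f\|_p, \|g\|_p < \infty$ (otherwise the right-hand side is $+\infty$ and there is nothing to prove) and that $\|f+g\|_p < \infty$, which follows from convexity via $|f+g|^p \leq 2^{p-1}(|f|^p + |g|^p)$ (an application of Lemma \ref{lemma:Holder_finite_sums} with $N=2$). Next, I would start from the pointwise estimate
\begin{equation*}
|f+g|^p \;=\; |f+g|\cdot|f+g|^{p-1} \;\leq\; |f|\cdot|f+g|^{p-1} + |g|\cdot|f+g|^{p-1},
\end{equation*}
integrate, and apply Hölder's inequality to each of the two resulting products with exponents $p$ and $q$. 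Using $(p-1)q = p$, this yields
\begin{equation*}
\int|f+g|^p\,d\mu \;\leq\; \Bigl[\|f\|_p + \|g\|_p\Bigr]\Bigl(\int|f+g|^p\,d\mu\Bigr)^{1/q}.
\end{equation*}

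The final step is to divide by $\bigl(\int|f+g|^p\,d\mu\bigr)^{1/q}$; the relation $1 - 1/q = 1/p$ then produces the stated inequality. The one subtlety is that this division is only legitimate when the quantity $\int|f+g|^p\,d\mu$ is strictly positive and finite. Finiteness was ensured above by the convexity bound; positivity is handled by noting that if $\int|f+g|^p\,d\mu = 0$ then the inequality holds trivially (the left-hand side vanishes).

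The main obstacle is conceptual rather than technical: one has to spot the decomposition $|f+g|^p = |f+g|\cdot|f+g|^{p-1}$ and recognize that $(p-1)q = p$ is precisely the exponent identity that turns Hölder into a self-referential bound on $\|f+g\|_p$, allowing the $\|f+g\|_p^{p/q}$ factor to be absorbed into the left-hand side. Apart from this, the argument is a routine use of Lemmas \ref{lemma:Holder} and \ref{lemma:Holder_finite_sums} together with elementary manipulations of exponents.
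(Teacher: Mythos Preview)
Your proof is correct and is the standard argument (decompose $|f+g|^p = |f+g|\cdot|f+g|^{p-1}$, apply H\"older, and divide through). The paper does not actually supply a proof of this lemma; it merely states the result with a citation to Theorem~6.5 of Folland, so there is no approach to compare against beyond noting that your argument is essentially the one found there.
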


 Burkholder-Davis-Gundy Inequality  (see Theorem 3.28 in \cite{karatzas2014brownian}):
\begin{lemma}[BDG]\label{lemma:BDG}
  For every $p>0$ there exists constants $k_p,K_p\in(0,\infty)$ such that for all  $\mathbb{R}$-valued continuous local martingales, $M$, and all stopping times, $T$, we have
\begin{align}
 E\left[\sup_{0\leq s\leq T}|M_s|^{p}\right]\leq K_pE[\langle M\rangle_T^{p/2}],
\end{align}
where $\langle M\rangle$ denotes the quadratic variation of $M$.
\end{lemma}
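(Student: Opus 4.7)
The plan is to reduce to a bounded setting by localization and then treat the two ranges $p\ge 2$ and $0<p<2$ separately. For $p\ge 2$ the result follows from It\^o's formula applied to $|x|^p$ (which is then $C^2$) together with Doob's $L^p$ inequality; for $0<p<2$ the nonsmoothness of $|x|^p$ forces a more delicate probabilistic argument via a good-$\lambda$ inequality.

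First I would carry out the reduction. Define the stopping times $T_n=T\wedge\inf\{t\ge 0 : |M_t|+\langle M\rangle_t\ge n\}$. Then $M^{T_n}$ is a bounded continuous martingale with bounded quadratic variation $\langle M\rangle^{T_n}$. Since both sides of the claimed inequality are monotone in $n$ by monotone convergence, it suffices to prove the bound for bounded continuous martingales and bounded stopping times, and then pass to the limit $n\to\infty$.

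For $p\ge 2$, since $f(x)=|x|^p$ is $C^2$ on $\mathbb R$, It\^o's formula gives
\begin{align*}
|M_t|^p = p\int_0^t |M_s|^{p-2}M_s\,dM_s + \frac{p(p-1)}{2}\int_0^t |M_s|^{p-2}\,d\langle M\rangle_s.
\end{align*}
Boundedness makes the stochastic integral a true martingale of mean zero. Writing $M^*_t=\sup_{s\le t}|M_s|$, bounding $|M_s|^{p-2}\le (M^*_t)^{p-2}$, taking expectations, and applying H\"older's inequality with conjugate exponents $p/(p-2)$ and $p/2$ (the case $p=2$ being immediate), I obtain
\begin{align*}
E[|M_t|^p] \le \tfrac{p(p-1)}{2}\,E[(M^*_t)^p]^{(p-2)/p}\,E[\langle M\rangle_t^{p/2}]^{2/p}.
\end{align*}
Combining with Doob's $L^p$-maximal inequality $E[(M^*_t)^p]\le (p/(p-1))^p E[|M_t|^p]$ and rearranging gives the desired estimate with an explicit $K_p$.

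For $0<p<2$ the main obstacle appears: $|x|^p$ is not $C^2$ at the origin, so the direct It\^o argument fails. Here I would use the good-$\lambda$ inequality strategy of Burkholder: show there exist $\beta>1$ and $C_\beta<\infty$ such that for all $\lambda>0$ and $\delta\in(0,1)$,
\begin{align*}
P\!\left(M^*_T>\beta\lambda,\ \langle M\rangle_T^{1/2}\le \delta\lambda\right) \le C_\beta\delta^2\, P(M^*_T>\lambda).
\end{align*}
This is obtained by localizing at the first time $|M|$ reaches $\lambda$ and applying an $L^2$ maximal estimate to the stopped increment, which is where the factor $\delta^2$ (rather than $\delta$) comes from. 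Multiplying by $p\lambda^{p-1}$, integrating over $\lambda\in(0,\infty)$, and splitting $\{M^*_T>\beta\lambda\}$ according to the event $\{\langle M\rangle_T^{1/2}\le\delta\lambda\}$ yields
\begin{align*}
\beta^{-p} E[(M^*_T)^p] \le C_\beta \delta^2\, E[(M^*_T)^p] + c_{p,\delta}\, E[\langle M\rangle_T^{p/2}].
\end{align*}
Choosing $\delta$ small enough that $C_\beta\delta^2<\beta^{-p}$ lets me absorb the first term on the right into the left; finiteness of $E[(M^*_T)^p]$ during the absorption step is ensured by the preliminary localization. The hardest step is the good-$\lambda$ inequality itself, and in particular obtaining the quadratic dependence on $\delta$; everything else is bookkeeping.
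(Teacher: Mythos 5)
Your argument is correct, but note that the paper does not prove this lemma at all: it is quoted as a standard textbook fact, with a citation to Theorem 3.28 of Karatzas--Shreve, and is only ever \emph{used} (to bound stochastic integrals, where $\langle M\rangle_T=\int_0^T\|a(s)\|^2ds$). So there is nothing in the paper to compare against step by step; what you have written is essentially the classical Burkholder proof of the upper BDG bound, and it is sound. The localization via $T_n=T\wedge\inf\{t:|M_t|+\langle M\rangle_t\ge n\}$ with monotone convergence is fine; the $p\ge 2$ case via It\^o on $|x|^p$, H\"older with exponents $p/(p-2)$ and $p/2$, Doob's maximal inequality, and rearrangement (finiteness for the rearrangement being supplied by the localization) is the standard argument; and the good-$\lambda$ inequality $P(M^*_T>\beta\lambda,\ \langle M\rangle_T^{1/2}\le\delta\lambda)\le \delta^2(\beta-1)^{-2}P(M^*_T\ge\lambda)$, proved by stopping at the first times $|M|$ exceeds $\lambda$ and $\langle M\rangle$ exceeds $\delta^2\lambda^2$ and applying Chebyshev--Doob in $L^2$ to the stopped increment, does yield the quadratic $\delta$-dependence you need, after which the integration in $\lambda$ and absorption are routine. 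Two small remarks: the good-$\lambda$ route actually works for every $p>0$, so your separate It\^o treatment of $p\ge 2$ is optional rather than necessary; and the lemma as stated only requires the upper constant $K_p$ (the constant $k_p$ of the two-sided BDG inequality is never used in the paper), so proving only the upper bound, as you do, suffices. Also be aware your $T_n$ need not be a bounded stopping time, but since $M^{T_n}$ is a bounded martingale with bounded quadratic variation this causes no difficulty.
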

\noindent Recall that the quadratic variation of an It{\^o} integral of a $\mathbb{R}^k$-valued continuous, adapted process, $a_t$, with respect to an $\mathbb{R}^k$-valued Wiener process, $W_t$,  is  (using summation convention) given by
\begin{align}
\left\langle \int_0^{(\cdot)} a_j(s)  dW^j_s\right\rangle_{\!T} =\int_0^T \|a(s)\|^2ds
\end{align}
 ($\|\cdot\|$  denotes the $\ell^2$ norm). If $M$ is $\mathbb{R}^n$-valued then one can still use Lemma \ref{lemma:BDG} to bound $E\left[\sup_{0\leq s\leq T}\|M_s\|^{p}\right]$ by first using 
\begin{align}
\sup_{0\leq s\leq T}\|M_s\|^{p}\leq D_{p,n}\sum_{j=1}^n \sup_{0\leq s\leq T}|M_s^j|^p,
\end{align}
where $D_{p,n}$ is a constant, depending only on $p$ and $n$.

\section{SDEs with Semimartingale Forcing}\label{app:gen_SDE}
Let $W_t$ be an $\mathbb{R}^k$-valued Wiener process on $(\Omega,\mathcal{F},P,\mathcal{F}_t)$, a filtered probability space satisfying the usual conditions  \cite{karatzas2014brownian}.  
In this section, we give some of the background theory of SDEs of the form
\begin{align}\label{generalized_SDE}
X_t=N_t+\int_{0}^tb(s,X_s)ds+\int_0^t\sigma(s,X_s)dW_s,
\end{align}
i.e., SDEs where the initial condition is generalized to a time-dependent, continuous semimartingale forcing term, $N_t$.  Much of the following can be found in \cite{protter2013stochastic}, with the generalization to SDEs with explosions adapted from \cite{hsu2002stochastic}. Both of these references discuss the generalization where $W_t$ is replaced by a more general driving semimartingale, but we do not need that extension here.

The main existence and uniqueness result for \req{generalized_SDE} mirrors that of the more standard SDE theory:
\begin{theorem}\label{gen_sde_existence}
Let $U\subset \mathbb{R}^n$ be open and $\sigma:[0,\infty)\times U\rightarrow \mathbb{R}^{n\times k}$, $b:[0,\infty)\times U\rightarrow\mathbb{R}^n$  satisfy the following:
\begin{enumerate}
\item $b,\sigma$ are measurable.
\item For every $T>0$ and compact $C\subset U$ there exists $K_{T,C}>0$ such that for all $t\in[0,T]$, $x,y\in C$ we have
\begin{align}\label{local_bounded}
\sup_{t\in[0,T],x\in C}\|b(t,x)\|+\sup_{t\in[0,T],x\in C}\|\sigma(t,x)\|\leq K_{T,C}.
\end{align}
\item For every $T>0$ and compact $C\subset U$ there exists $L_{T,C}>0$ such that for all $t\in[0,T]$, $x,y\in C$ we have
\begin{align}\label{local_lip}
\|b(t,x)-b(t,y)\|+\|\sigma(t,x)-\sigma(t,y)\|\leq L_{T,C}\|x-y\|,
\end{align}
i.e., $b(t,x)$ and $\sigma(t,x)$ are locally Lipschitz in $x$, uniformly in $t$ on compact intervals.
\end{enumerate}

Then for any continuous semimartingale $N_t$ with $N_{0}$ valued in $U$, the SDE
\begin{align}
X_t=N_{t}+\int_{0}^tb(s,X_s)ds+\int_{0}^t\sigma(s,X_s)dW_s
\end{align}
has a unique (pathwise) maximal solution up to a stopping time, $e$, called the explosion time. For every $\omega\in\Omega$, $e$ satisfies one of the following:
\begin{enumerate}
\item $e(\omega)=\infty$,
\item There exists a subsequence $t_n\nearrow e(\omega)$ with $\lim_{n\rightarrow\infty} X_{t_n}(\omega)=\infty$,
\item There exists a subsequence $t_n\nearrow e(\omega)$ with $\lim_{n\rightarrow\infty} d(X_{t_n}(\omega),\partial U)=0$.
\end{enumerate}

\end{theorem}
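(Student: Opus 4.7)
The plan is to follow the two-stage strategy standard for SDEs with locally Lipschitz coefficients, adapted to accommodate the continuous semimartingale forcing term $N_t$. First I would treat the globally Lipschitz, globally bounded case on $U=\mathbb{R}^n$ via Picard iteration; then I would pass to the locally Lipschitz case on open $U\subset\mathbb{R}^n$ by a localization/patching argument, and finally verify the trichotomy at the explosion time.

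For the global case, fix $T>0$ and iterate $X^{(0)}_t\equiv N_t$, $X^{(n+1)}_t=N_t+\int_0^t b(s,X^{(n)}_s)\,ds+\int_0^t\sigma(s,X^{(n)}_s)\,dW_s$. Because $N_t$ is continuous and adapted and $b,\sigma$ are bounded and Lipschitz, each $X^{(n)}$ is a well-defined continuous semimartingale. Using the triangle inequality, Minkowski's inequality for integrals, and the Burkholder--Davis--Gundy inequality (all recorded in Appendix \ref{app:ineq}) applied to the differences $X^{(n+1)}-X^{(n)}$, I get
\begin{align*}
E\!\left[\sup_{s\in[0,t]}\|X^{(n+1)}_s-X^{(n)}_s\|^2\right]\le C\int_0^t E\!\left[\sup_{r\in[0,s]}\|X^{(n)}_r-X^{(n-1)}_r\|^2\right]ds,
\end{align*}
with $C$ depending only on $T$ and the global Lipschitz constant. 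The standard induction gives a geometric contraction, so $X^{(n)}$ converges uniformly on $[0,T]$ in $L^2$ to a continuous adapted limit $X$, which one checks solves \req{generalized_SDE} by passing to the limit in both the Lebesgue integral (dominated convergence) and the It\^o integral (It\^o isometry). Uniqueness in this case is another Gronwall argument on two candidate solutions. Note that the forcing $N_t$ plays no special role beyond contributing to the base of the iteration, since differences $X-\tilde X$ eliminate $N_t$.

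For the general statement, fix an exhaustion of $U$ by relatively compact open sets $U_n\Subset U_{n+1}\Subset U$ with $\bigcup_n U_n=U$. Using a smooth partition of unity on $U$, define $b_n,\sigma_n$ on $[0,\infty)\times\mathbb{R}^n$ that agree with $b,\sigma$ on $[0,\infty)\times U_n$, are bounded, and are globally Lipschitz in $x$ uniformly on compact time intervals (the hypotheses \req{local_bounded}--\req{local_lip} make this construction routine). By the first step, the SDE with coefficients $b_n,\sigma_n$ and forcing $N_t$ has a unique global solution $X^{(n)}$. Let $\tau_n=\inf\{t\ge 0:X^{(n)}_t\notin U_n\}\wedge\inf\{t\ge 0:N_t\notin U_n\}$; by pathwise uniqueness applied before this stopping time, $X^{(n+1)}_{t\wedge\tau_n}=X^{(n)}_{t\wedge\tau_n}$ a.s. This consistency lets me define $X_t=X^{(n)}_t$ on $\{t<\tau_n\}$ and $e=\lim_n\tau_n$; $X$ is the maximal solution. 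The trichotomy at $e$ follows because on $\{e<\infty\}$, the path $t\mapsto X_t$ must leave every $U_n$ before $e$, so either $\|X_{t_n}\|\to\infty$ along some sequence $t_n\nearrow e$, or $\|X_{t_n}\|$ stays bounded, in which case a subsequence has a limit point in $\overline{U}\setminus U=\partial U$, giving $d(X_{t_n},\partial U)\to 0$.

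The main obstacle I expect is the careful handling of the semimartingale forcing in the localization step: the naive $\tau_n$ would localize only $X^{(n)}$, but since $N_t$ itself is merely a continuous semimartingale without any size control, one must also stop when $N_t$ leaves $U_n$ (as above) so that the coefficients encountered before $\tau_n$ really are $b,\sigma$ on $U_n$ and not the modification. A secondary subtlety is verifying that in the third alternative of the trichotomy one genuinely obtains $d(X_{t_n},\partial U)\to 0$ rather than merely $\|X_{t_n}\|\to\infty$ when $U$ is unbounded; a careful case split according to whether the sequence $\{X_{t_n}\}$ stays in some bounded subset of $\mathbb{R}^n$ or not handles this.
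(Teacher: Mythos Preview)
The paper does not actually prove this theorem; it is stated in Appendix~\ref{app:gen_SDE} as background material, with the remark that ``Much of the following can be found in \cite{protter2013stochastic}, with the generalization to SDEs with explosions adapted from \cite{hsu2002stochastic}.'' Your Picard-iteration-plus-localization outline is precisely the standard textbook argument one would find in those references, so in that sense your proposal is in line with what the paper intends.

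One point in your localization step deserves correction, though. You define $\tau_n=\inf\{t:X^{(n)}_t\notin U_n\}\wedge\inf\{t:N_t\notin U_n\}$ and justify the second factor by saying the coefficients encountered before $\tau_n$ must be $b,\sigma$ on $U_n$. But the coefficients in \req{generalized_SDE} are evaluated at $X_s$, not at $N_s$; the semimartingale $N_t$ enters only additively. Consequently, to guarantee $b_n(s,X^{(n)}_s)=b(s,X^{(n)}_s)$ you only need $X^{(n)}_s\in U_n$, and the correct stopping time is simply $\tau_n=\inf\{t:X^{(n)}_t\notin U_n\}$. Including the exit time of $N$ from $U_n$ is not merely unnecessary---it can give the wrong explosion time: if $U\neq\mathbb{R}^n$ and $N_t$ leaves $U$ at some finite time $t_0$ while the genuine solution $X_t$ remains inside $U$, your $\tau_n$ would all be $\le t_0$, forcing $e\le t_0$ even though the solution continues. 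With this fix, the rest of your argument (consistency of the $X^{(n)}$ up to $\tau_n$, defining $e=\lim_n\tau_n$, and the trichotomy via exhaustion) goes through as written.
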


As with standard SDEs, non-explosion of solutions follows when the drift and diffusion are linearly bounded:
\begin{corollary}\label{lin_growth_corollary}
Let $\sigma:[0,\infty)\times \mathbb{R}^n\rightarrow \mathbb{R}^{n\times k}$, $b:[0,\infty)\times \mathbb{R}^n\rightarrow\mathbb{R}^n$  be continuous and satisfy the local Lipschitz property \req{local_lip}. Suppose we also have the following linear growth bound:\\
For each $T>0$ there exists $L_T>0$ such that
\begin{align}
\sup_{t\in[0,T]}(\|b(t,x)\|+\|\sigma(t,x)\|)\leq L_T(1+\|x\|).
\end{align}

Then for any continuous semimartingale, $N_t$,  the SDE
\begin{align}
X_t=N_{t}+\int_{0}^tb(s,X_s)ds+\int_{0}^t\sigma(s,X_s)dW_s
\end{align}
has a unique maximal solution and it is defined for all $t\geq 0$, i.e., its explosion time is $e=\infty$ a.s.

\end{corollary}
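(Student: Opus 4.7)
The plan is to apply Theorem \ref{gen_sde_existence} to obtain a unique maximal solution $X_t$ with explosion time $e$, and then rule out finite-time explosion by combining the linear growth bound with a localization adapted to the continuous semimartingale $N_t$. The hypotheses of Theorem \ref{gen_sde_existence} hold on $U = \mathbb{R}^n$: continuity together with the linear growth bound yields the local boundedness condition \req{local_bounded} (closed balls are compact), and the local Lipschitz hypothesis \req{local_lip} is assumed directly. Since $\partial U = \emptyset$, only cases (1) and (2) of the trichotomy for $e$ remain, so it suffices to show $P(e \leq T) = 0$ for every $T > 0$.

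The key step is the substitution $Y_t \equiv X_t - N_t$, defined on $[0, e)$. Subtracting $N_t$ from the SDE yields
\begin{align*}
dY_t = b(t, Y_t + N_t)\,dt + \sigma(t, Y_t + N_t)\,dW_t, \quad Y_0 = 0,
\end{align*}
i.e., $Y$ solves a classical It\^o SDE driven only by $W_t$, with adapted (random) coefficients that inherit the linear growth
\begin{align*}
\|b(t, Y_t + N_t)\| + \|\sigma(t, Y_t + N_t)\| \leq L_T\bigl(1 + \|Y_t\| + \|N_t\|\bigr)
\end{align*}
on $[0,T]$. Because $N$ is path-continuous, the stopping times $\tau_k \equiv \inf\{t \geq 0 : \|N_t\| \geq k\}$ satisfy $\tau_k \nearrow \infty$ almost surely, and on $\{t \leq \tau_k\}$ the coefficients are bounded by $L_T(1 + k + \|Y_t\|)$, a deterministic linear bound.

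With this in hand I would run the standard non-explosion argument: introduce $\tau_R^Y \equiv \inf\{t : \|Y_t\| \geq R\}$, apply It\^o's formula to $1 + \|Y_{t \wedge \tau_k \wedge \tau_R^Y}\|^2$, use the coefficient bound together with $2 y \cdot b + \|\sigma\|^2 \leq C_{T,k}(1 + \|y\|^2)$, take expectations (the stochastic integral being a genuine martingale after stopping at $\tau_R^Y$), and apply Gronwall's inequality to obtain
\begin{align*}
E\bigl[1 + \|Y_{t \wedge \tau_k \wedge \tau_R^Y}\|^2\bigr] \leq e^{C_{T,k} T},\quad t \in [0,T],
\end{align*}
with the right-hand side independent of $R$. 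Continuity of $Y$ forces $\|Y_{\tau_R^Y}\| = R$ on $\{\tau_R^Y < T \wedge \tau_k\}$, so Markov's inequality gives $P(\tau_R^Y < T \wedge \tau_k) \leq e^{C_{T,k} T}/R^2 \to 0$ as $R \to \infty$; hence $Y$ does not explode on $[0, T \wedge \tau_k]$. Letting $k \to \infty$ along $\tau_k \nearrow \infty$ a.s. shows $Y$, and hence $X = Y + N$, remains finite on $[0, T]$, so $e > T$ a.s.; since $T$ was arbitrary, $e = \infty$ a.s. The main technical point is the nested $(k, R)$ localization needed to reconcile the deterministic linear growth in $x$ with the merely pathwise-a.s.-finite trajectory of $N_t$; the substitution $Y = X - N$ is what makes it work, absorbing the $dN_t$ term and reducing the problem to a classical It\^o SDE with random but linearly bounded coefficients.
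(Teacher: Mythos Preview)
The paper states this corollary without proof, treating it as a standard consequence of Theorem \ref{gen_sde_existence}; there is no argument in the paper to compare against. Your proof is correct and is the natural one: the substitution $Y_t = X_t - N_t$ absorbs the semimartingale forcing and reduces matters to a classical It\^o SDE with random coefficients, the localization $\tau_k = \inf\{t:\|N_t\|\ge k\}$ converts the pathwise-finite $\|N_t\|$ into a deterministic additive constant in the linear growth bound, and the $1+\|y\|^2$ Gronwall/Markov argument then goes through exactly as in the textbook case. One could alternatively try to invoke the paper's Lyapunov-function result, Theorem \ref{gen_liap_thm}, with $V(x)=1+\|x\|^2$, but that theorem is stated only for $N_t$ of the specific form $X_0+\int a_s\,ds+\int c_s\,dW_s$, whereas the corollary allows an arbitrary continuous semimartingale; your direct argument avoids this restriction.
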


We will also need a generalization of the theory of Lyapunov functions to the current setting; it is needed to prove non-explosion for the hierarchy of approximating equations when the assumption of bounded forcing is relaxed.
\begin{theorem}\label{gen_liap_thm}
Let $U\subset \mathbb{R}^n$ be open, $W_t$ be an $\mathbb{R}^k$-valued Wiener process.  Suppose $b:[0,\infty)\times U\rightarrow \mathbb{R}^n$ and $\sigma:[0,\infty)\times U\rightarrow\mathbb{R}^{n\times k}$ are continuous and satisfy the local Lipschitz property \req{local_lip}.

Let $X_{0}$ be an $\mathcal{F}_{0}$-measurable random variable valued in $U$, $a:[0,\infty)\times\Omega\to\mathbb{R}^n$ and $c:[0,\infty)\times\Omega\to\mathbb{R}^{n\times k}$ be pathwise continuous, adapted processes, and let $N_t$ be the continuous semimartingale
\begin{align}
N_t=X_{0}+\int_{0}^t a_sds+\int_{0}^tc_s dW_s.
\end{align}

Suppose we have a $C^{1,2}$ function $V:[0,\infty)\times U\rightarrow [0,\infty)$ and measurable functions $C,M:[0,\infty)\to[0,\infty)$ that satisfy:
\begin{enumerate}
\item $M(t)$ and $C(t)$ are integrable on compact subsets of $[0,\infty)$.
\item For any $t$ and any $R>0$ there exists a compact $C\subset U$ and $\delta>0$ such that $V(s,x)\geq R$ for all $(s,x)\in [t-\delta,t]\times C^c$.
\item
\begin{align}
L[V](t,x)\equiv& \partial_tV(t,x)+b^i(t,x)\partial_{x^i}V(t,x)+\frac{1}{2}\Sigma^{ij}(t,x)\partial_{x_i}\partial_{x^j}V(t,x)\notag\\
\leq& M(t)+C(t)V(t,x),\notag
\end{align}
where $\Sigma^{ij}=\sum_\rho\sigma^i_\rho\sigma^j_\rho$,
\item $\|\nabla_x V(t,x)\|\leq M(t)+C(t)V(t,x)$,
\item $\|D^2_x V(t,x)\|(1+\|\sigma(t,x)\|)\leq M(t)+C(t)V(t,x)$.
\end{enumerate}

Then the unique maximal solution to the SDE
\begin{align}\label{liap_thm_sde}
X_t=N_{t}+\int_{0}^tb(s,X_s)ds+\int_{0}^t\sigma(s,X_s)dW_s
\end{align}
has explosion time $e=\infty$ a.s.  We call $V$ a Lyapunov function for the SDE \req{liap_thm_sde}.
\end{theorem}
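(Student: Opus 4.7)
The plan is to apply It\^o's formula to $V(t,X_t)$, estimate the resulting drift using the Lyapunov inequality on $L[V]$ together with the gradient and Hessian bounds (conditions (4) and (5)) to absorb the extra contributions coming from the semimartingale part of $N_t$, and then conclude non-explosion by the standard localization-plus-Gronwall argument, modified to handle the random coefficients $a_t$ and $c_t$.

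Observe first that under \req{liap_thm_sde}, $X_t$ is itself a continuous semimartingale with drift $\tilde b_t:=a_t+b(t,X_t)$ and diffusion $\tilde\sigma_t:=c_t+\sigma(t,X_t)$. It\^o's formula therefore gives
\begin{align*}
dV(t,X_t)=L'[V](t,X_t)\,dt+\partial_{x^i}V(t,X_t)\,\tilde\sigma^{\,i}_\rho(t)\,dW^\rho_t,
\end{align*}
where, writing $\sigma_t$ for $\sigma(t,X_t)$,
\begin{align*}
L'[V]=L[V]+(\nabla_xV)\cdot a_t+\tfrac12\,\partial_{x^i}\partial_{x^j}V\,\bigl(c_tc_t^T+c_t\sigma_t^T+\sigma_tc_t^T\bigr)^{ij}.
\end{align*}
Condition (5) gives both $\|D_x^2V\|\le M+CV$ and $\|D_x^2V\|\|\sigma\|\le M+CV$; together with (3) and (4) this produces the pathwise bound
\begin{align*}
L'[V](t,X_t)\le\bigl(M(t)+C(t)V(t,X_t)\bigr)\bigl(1+\|a_t\|+\|c_t\|+\tfrac12\|c_t\|^2\bigr).
\end{align*}

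Next I would localize to turn the random multiplier into a deterministic constant. Exhaust $U$ by open sets $U_n$ with $\overline{U_n}\subset U$ compact, and for $n,k\in\mathbb{Z}^+$ define the stopping times $\sigma_n=\inf\{t:X_t\notin U_n\}$, $\rho_k=\inf\{t:\|a_t\|+\|c_t\|\ge k\}$, and $\tau_{n,k}=\sigma_n\wedge\rho_k\wedge n$. On $[0,\tau_{n,k}\wedge T]$ the stochastic integral is a genuine martingale (its integrand is bounded since $X$ lies in $\overline{U_n}$), and the drift is bounded by $K_k(M(t)+C(t)V(t,X_t))$ with $K_k=1+2k+k^2/2$. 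Taking expectations and applying Gronwall's inequality to $\phi(t):=E[V(t\wedge\tau_{n,k},X_{t\wedge\tau_{n,k}})]$, with locally integrable kernel $K_kC(t)$ and source $E[V(0,X_0)]+K_k\int_0^T M(s)\,ds$, yields a bound $\phi(t)\le B(T,k)<\infty$ that is independent of $n$.

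Finally I would convert this into a non-explosion statement. A finite-subcover argument on $[0,T]$ upgrades condition (2) to: for every $R>0$ there is a compact $K_R\subset U$ with $V(s,x)\ge R$ for all $(s,x)\in[0,T]\times K_R^c$. Choosing $n_R$ so that $K_R\subset U_{n_R}$, on $\{\sigma_{n_R}\le T\}\cap\{\rho_k>T\}$ one has $V(\sigma_{n_R},X_{\sigma_{n_R}})\ge R$, so Markov's inequality gives $P(\sigma_{n_R}\le T,\rho_k>T)\le B(T,k)/R$. Letting first $R\to\infty$ (hence $n_R\to\infty$) and then $k\to\infty$ (using continuity of $a,c$ so that $\rho_k\to\infty$ a.s.) yields $P(e\le T)=0$, and since $T$ was arbitrary, $e=\infty$ almost surely. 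The main obstacle is keeping the Gronwall bound deterministic despite the random multiplier $1+\|a_t\|+\|c_t\|+\|c_t\|^2$; this is precisely what the $\rho_k$ localization achieves, and without it the naive attempt to absorb the multiplier into a modified $\tilde C(t)$ fails since $\tilde C$ would itself be random and not locally integrable in general.
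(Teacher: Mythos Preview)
Your proposal is correct and essentially matches the paper's proof: apply It\^o's formula to $V$, localize both in space (exit from $U_n$) and in the size of the random coefficients $a,c$ (your $\rho_k$ plays the role of the paper's $\eta_m$, though the paper stops $N$ at $\eta_m$ and solves the modified SDE rather than localizing the original one), then Gronwall and condition (2) via Markov/Fatou. One small gap: your Gronwall source term involves $E[V(0,X_0)]$, which is not assumed finite; the paper handles this by first reducing to the case where $X_0$ lies in a fixed compact subset of $U$, partitioning on the events $\{X_0\in U_n\setminus U_{n-1}\}$.
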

\begin{proof}
Existence of a solution, $X_t$, up to explosion time, $e$, follows from Theorem \ref{gen_sde_existence}.   Let $U_n$ be precompact open sets with $\overline{U_n}\subset U_{n+1}\subset U$ and $\cup_n U_n=U$.  By looking at the equation on the events $\{X_{0}\in U_{n}\setminus U_{n-1}\}$ it suffices to suppose $X_{0}$ is contained in a compact subset of $U$ (say, $U_1$).

Define $\eta_m=\inf\{t:\|a_t\|\geq m\}\wedge\inf\{t:c_t\geq m\}$. $a_t$ and $c_t$ are continuous and adapted, so $\eta_m$ are stopping times.   Since $\eta_m$ increase to infinity, proving that there is no explosion with $N_t$ replaced by $N^m_t\equiv N_t^{\eta_m}$ for each $m$ will imply that $e=\infty$.

Therefore we can fix $m$ and consider $X$, the solution to 
\begin{align}
X_t=N^{m}_{t}+\int_{0}^tb(s,X_s)ds+\int_{0}^t\sigma(s,X_s)dW_s,
\end{align}
 with explosion time $e$.
 
 Define the stopping times $\tau_n=\inf\{t:X_t\in U_n^c\}\wedge n$ and note that $\tau_n<e$ a.s and $\|X^{\tau_n}_t\|\leq \sup_{x\in \overline{U_n}}\|x\|$. The continuous semimartingales $X^{\tau_n}$ are solutions to
\begin{align}
X^{\tau_n}_t=&X_{0}+\int_{0}^{t\wedge\tau_n} 1_{s\leq \eta_m}a_sds+\int_{0}^{t\wedge\tau_n}1_{s\leq \eta_m}c_s dW_s\\
&+\int_{0}^{t\wedge\tau_n}  b(s,X^{\tau_n}_s)ds+\int_{0}^{t\wedge\tau_n} \sigma(s,X^{\tau_n}_s)dW_s,\notag
\end{align}
hence It{\^o}'s Lemma implies
\begin{align}
&V(t\wedge\tau_n,X_t^{\tau_n})-V(0,X_{0})\\
=&\int_{0}^{t\wedge\tau_n}\partial_sV(s,X_s^{\tau_n})ds+\int_{0}^{t\wedge\tau_n}\partial_{x^i}V(s,X^{\tau_n}_s)d(X^{\tau_n})^i_s\notag\\
&+\frac{1}{2}\int_{0}^{t\wedge\tau_n}\partial_{x^i}\partial_{x^j}V(s,X^{\tau_n}_s)d[(X^{\tau_n})^i,(X^{\tau_n})^j]_s\notag
\end{align}
\begin{align}
\leq &\int_{0}^{t\wedge\tau_n} M(s)ds+\int_{0}^{t\wedge\tau_n}C(s)V(s,X_s^{\tau_n})ds\notag\\
&+\int_{0}^{t\wedge\tau_n} \partial_{x^i}V(s,X^{\tau_n}_s)(1_{s\leq \eta_m}(c_s)^i_j+\sigma^i_j(s,X^{\tau_n}_s))dB^j_s\notag\\
&+\int_{0}^{t\wedge \tau_n} 1_{s\leq \eta_m}a_s^i\partial_{x^i}V(s,X^{\tau_n}_s)ds\notag\\
&+\frac{1}{2}\int_{0}^{t\wedge\tau_n}1_{s\leq\eta_m}\partial_{x^i}\partial_{x^j}V(s,X^{\tau_n}_s)((\sigma c)^{ij}(s,X_s^{\tau_n})+(c\sigma)^{ij}(s,X_s^{\tau_n})+C^{ij}_s)ds.\notag
\end{align}
Note that if $\eta_m>0$ then $1_{s\leq \eta_m}\|a_s\|\leq m$, $1_{s\leq \eta_m}\|c_s\|\leq m$ and if $\eta_m=0$ then the integrals involving $1_{s\leq\eta_m}$ are zero.  Therefore
\begin{align}
&V(t\wedge\tau_n,X_t^{\tau_n})-V(0,X_{0})\\
\leq &\int_{0}^{t\wedge\tau_n} M(s)ds+\int_{0}^{t\wedge\tau_n}C(s)V(s,X_s^{\tau_n})ds\notag\\
&+\int_{0}^{t\wedge\tau_n} \partial_{x^i}V(s,X^{\tau_n}_s)(1_{s\leq \eta_m}(c_s)^i_j+\sigma^i_j(s,X^{\tau_n}_s))dB^j_s\notag\\
&+\frac{1}{2}\int_{0}^{t\wedge\tau_n}\|D^2V(s,X^{\tau_n}_s)\|(2m\|\sigma(s,X_s^{\tau_n})\|+m^2)ds\notag\\
&+\int_{0}^{t\wedge \tau_n} m \|\nabla V(s,X^{\tau_n}_s)\|ds\notag\\
\leq&\int_{0}^{t\wedge\tau_n} M(s)ds+\int_{0}^{t\wedge\tau_n}C(s)V(s,X_s^{\tau_n})ds\notag\\
&+\int_{0}^{t\wedge\tau_n} \partial_{x^i}V(s,X^{\tau_n}_s)(1_{s\leq \eta_m}(c_s)^i_j+\sigma^i_j(s,X^{\tau_n}_s))dW^j_s,\notag
\end{align}
where we have absorbed constants into $M(s)$ and $C(s)$.

$X^{\tau_n}_s$ is valued in $U_n$, a precompact subset of $U$.  Therefore continuity of  $V$ and $\partial_{x^i}V$ imply all of these terms have finite expectations. Also
\begin{align}
&E[\int_{0}^t  |1_{s\leq t\wedge\tau_n}\partial_{x^i}V(s,X^{\tau_n}_s)(1_{s\leq \eta_m} (c_s)^i_j+\sigma^i_j(s,X^{\tau_n}_s)|^2ds]<\infty
\end{align}
for all $t$, implying the stochastic integral is a martingale.   Therefore
\begin{align}
&E[V(t\wedge\tau_n,X_t^{\tau_n})]\\
\leq& E[V(0,X_{0})]+\int_{0}^tM(s)ds+\int_{0}^t C(s)E[V(s\wedge\tau_n,X^{\tau_n}_s)]ds.\notag
\end{align}
The integrands are $L^1$, hence Gronwall's inequality implies
\begin{align}
E[V(t\wedge \tau_n,X_t^{\tau_n})]\leq \left( E[V(0,X_{0})]+\int_{0}^tM(s)ds\right)\exp\left(\int_{0}^t C(s)ds\right)
\end{align}
for all $t\geq 0$.

Taking $n\geq t$ and using Fatou's lemma gives
\begin{align}\label{gronwall_bound}
\left( E[V(0,X_{0})]+\int_{0}^tM(s)ds\right)\exp\left(\int_{0}^t C(s)ds\right)\geq  E[\liminf_{n\to\infty}V(\tau_n,X_{\tau_n}) 1_{e<t}].
\end{align}

Now take $\omega\in\Omega$ with $e(\omega)<t$. Given $R>0$ we have a compact $C\subset U$ and  a $\delta>0$ such that $V\geq R$ on $[e(\omega)-\delta,e(\omega)]\times C^c$.  Noting that  $\tau_n(\omega)\nearrow e(\omega)$ we can take $N$ large enough that for $n\geq N$ we have $\tau_n(\omega)\in [e(\omega)-\delta,e(\omega)]$ and $C\subset U_n$.  Therefore $V(\tau_n(\omega),X_{\tau_n}(\omega))\geq R$ for $n\geq N$. So $\liminf_{n\to\infty}V(\tau_n(\omega),X_{\tau_n}(\omega))\geq R$, i.e., $\liminf_{n\to\infty}V(\tau_n,X_{\tau_n}) 1_{e<t}=\infty 1_{e<t}$.  But we have a finite upper bound \req{gronwall_bound} so we must have $P(e<t)=0$. $t\geq 0$ was arbitrary and so $e=\infty$ a.s. 
\end{proof}

 \section{Proof of Theorem \ref{thm:conv_in_prob}}\label{app:unbounded_F_proof}
In this section we provide a proof of Theorem \ref{thm:conv_in_prob}, which extends Theorem \ref{convergence_theorem} to unbounded forces, at the cost of weakening the convergence mode to convergence in probability.  Recall that here, we are working under Assumption \ref{unbounded_assump}.  First, we require several lemmas:

Assumption \ref{unbounded_assump} is sufficient to ensure non-explosion of solutions to the Langevin equation.  This can be shown by constructing Lyapunov functions:
\begin{lemma}\label{lemma:global_exist1}
 Given Assumption \ref{unbounded_assump}, there exist unique global in time solutions $(q_t^m,u_t^m)$ to \req{q_eq}-\req{u_eq} and $q_t$ to \req{q_SDE}.
 \end{lemma}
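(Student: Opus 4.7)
The plan is to produce the solutions locally via the standard SDE theory in Appendix \ref{app:gen_SDE} and then rule out explosion with Lyapunov functions via Theorem \ref{gen_liap_thm}. For each fixed $m>0$, the Langevin system \req{q_eq}--\req{u_eq} viewed as an SDE on $\mathbb{R}^{2n}$ in the variables $(q,u)$ has coefficients that are continuous and locally Lipschitz in $(q,u)$ uniformly on compact time intervals: this follows from the $C^4$ regularity of $\psi$, the $C^3$ regularity of $\gamma$, the $C^2$ regularity of $V$, local Lipschitzness of $\tilde F$, and Lipschitzness of $\sigma$ in Assumption \ref{unbounded_assump}. Thus Theorem \ref{gen_sde_existence} with $N_t\equiv(q_0,u_0^m)$ supplies a unique maximal solution up to an explosion time. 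The same argument gives a unique maximal solution of the homogenized SDE \req{q_SDE}: the lower bound $\lambda>0$ on the eigenvalues of $\gamma$ makes $\tilde\gamma^{-1}$ a bounded $C^3$ function, so the drift $\tilde\gamma^{-1}F+S$ and diffusion $\tilde\gamma^{-1}\sigma$ are continuous and locally Lipschitz in $q$ on compact time intervals.

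For the Langevin system I would use the energy-type Lyapunov function
\[
\Psi^m(t,q,u)=\tfrac{1}{2m}\|u\|^2+\tilde V(t,q)+\|q\|^2+1,
\]
where the $\|q\|^2$ term is inserted to keep the function coercive in $q$ even when $b=0$ in \req{tilde_V_def}. Computing its generator, the crucial contribution is the dissipative term $-\tfrac{1}{m^2}u^T\tilde\gamma u=-\tfrac{1}{m^2}u^T\gamma u\le -\tfrac{\lambda}{m^2}\|u\|^2$, because the antisymmetric part of $\tilde\gamma$ drops out of the quadratic form. The $u$-linear cross terms $\tfrac{1}{m}u\cdot(-\partial_t\psi+\tilde F)$, $\tfrac{2b}{m}u\cdot q$, and $\tfrac{2}{m}u\cdot q$ can each be split by Young's inequality into a small fraction of $\tfrac{\lambda}{m^2}\|u\|^2$ plus terms bounded by $M+C(\|q\|^2+\tilde V)$ via \req{eq:dt_psi_F_bound}; combined with $|\partial_tV|\le M+C(\|q\|^2+\tilde V)$ and the boundedness of $\mathrm{tr}(\Sigma)$, this yields $L[\Psi^m]\le M_m(t)+C_m(t)\Psi^m$. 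The remaining hypotheses of Theorem \ref{gen_liap_thm}---local integrability of $M_m,C_m$, the proper-level-set condition (automatic on $U=\mathbb{R}^{2n}$ since $\Psi^m\to\infty$ at infinity), and the polynomial-in-$\Psi^m$ bounds on $\nabla\Psi^m$ and $D^2\Psi^m$ times $(1+\|\sigma\|)$---follow from the same regularity and growth assumptions together with boundedness of $\sigma$. For the homogenized equation the analogous function $\Phi(t,q)=\tilde V(t,q)+\|q\|^2+1$ works: boundedness of $\tilde\gamma^{-1}$, of $S$, and of $\tilde\gamma^{-1}\Sigma(\tilde\gamma^{-1})^T$, combined with the polynomial bounds in Assumption \ref{unbounded_assump} on $\partial_tV$, $\nabla_qV$, $D^2V$, and $-\partial_t\psi+\tilde F$, again give $L[\Phi]\le M(t)+C(t)\Phi$ after a Young-inequality estimate.

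The main obstacle is the bookkeeping around the magnetic term and the singular $1/m$ scaling: since $\tilde\gamma$ is positive-definite only through its symmetric part $\gamma$, one has to verify carefully that the antisymmetric contribution truly drops out of $u^T\tilde\gamma u$, and the Young's inequality splittings of the $u$-linear cross terms must be balanced so that the $1/m^2$ dissipation strictly dominates the $1/m$ gains. For the present lemma this only needs to hold for each fixed $m>0$, so no $m$-uniform control is required, which substantially simplifies the balancing compared to the $m$-sensitive estimates in Sections \ref{sec:bounded_forcing}--\ref{sec:unbounded}.
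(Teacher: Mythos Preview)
Your proposal is correct and matches the approach the paper defers to (Appendix~C of \cite{BirrellHomogenization}): energy-type Lyapunov functions combined with the generator bound $L[\Psi]\le M+C\Psi$, exactly as you outline. One small caveat on the homogenized equation: under Assumption~\ref{unbounded_assump} the noise-induced drift $S$ is \emph{not} bounded in general, since item~(6) only gives $\|\partial_{q}\tilde\gamma\|^2\le M+C(\|q\|^2+\tilde V)$, and hence $\|S\|\le C\sqrt{M+C\Phi}$. This does not break the argument---the cross terms $S\cdot\nabla_qV$ and $\tilde\gamma^{-1}(-\partial_t\psi+\tilde F)\cdot\nabla_qV$ are absorbed by Young's inequality into the dissipative piece $-(\nabla_qV)^T\tilde\gamma^{-1}\nabla_qV\le -c\|\nabla_qV\|^2$ (positivity of the symmetric part of $\tilde\gamma^{-1}$ follows from that of $\gamma$, since $v^T\tilde\gamma^{-1}v=(\tilde\gamma^{-1}v)^T\gamma(\tilde\gamma^{-1}v)$)---but you should make that absorption explicit rather than invoking boundedness of $S$.
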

\begin{proof}
Despite the slightly different assumptions made here, the proof in Appendix C of  \cite{BirrellHomogenization} goes through essentially unchanged.  We omit the details.
\end{proof}

For $y$ a continuous semimartingale, we define $z^m_t[y]$ and $R^{m}_t[y]$ as in Definition \ref{main_def}.  The following two  properties will be needed:
\begin{lemma}\label{lemma:R_stop}
If  $\eta$ is a stopping time and  $y,\tilde y$ are continuous semimartingales that satisfy $y^\eta_t=\tilde y^\eta_t$ then
\begin{align}
R_{t\wedge\eta}^m[y]=R_{t\wedge\eta}^{m}[\tilde y]
\end{align}
for all $t\geq 0$, $P$-a.s.

\end{lemma}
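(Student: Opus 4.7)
The plan is to peel the definition of $R^m_t[y]$ apart from the outside in, reducing the claim to a sequence of locality statements about the intermediate objects $\Phi^m[y]$ and $z^m[y]$. First I would show that $\Phi^m_{t\wedge\eta}[y] = \Phi^m_{t\wedge\eta}[\tilde y]$ $P$-a.s.\ for every $t\ge 0$. Since $\Phi^m_\cdot[y]$ is defined \emph{pathwise} by an ODE whose coefficient is $-m^{-1}\tilde\gamma(s,y_s)$, and since $y^\eta = \tilde y^\eta$ implies that $\tilde\gamma(s,y_s) = \tilde\gamma(s,\tilde y_s)$ for all $s\le \eta(\omega)$ (outside a null set), uniqueness of solutions to the linear ODE on the random interval $[0,\eta(\omega)]$ yields the claim.

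Next I would promote this to $z^m_{t\wedge\eta}[y] = z^m_{t\wedge\eta}[\tilde y]$ using the standard locality property of stochastic and Lebesgue integrals: for any stopping time $\eta$ and any adapted integrand $H$,
\begin{equation*}
\Bigl(\int_0^{\cdot} H_s\,dW_s\Bigr)_{t\wedge\eta} = \int_0^t \mathbf{1}_{[0,\eta]}(s)H_s\,dW_s,
\end{equation*}
and analogously for $ds$-integrals. Combining this with Step 1, on the set where $y^\eta = \tilde y^\eta$ the integrands $(\Phi^m_s[y])^{-1}F(s,y_s)$ and $(\Phi^m_s[y])^{-1}\sigma(s,y_s)$ coincide with their tilded counterparts for $s\le\eta$; multiplying by the common factor $\Phi^m_{t\wedge\eta}[y]$ in front of the explicit formula \req{z_func_def2} gives the equality of $z^m$ on $\{t\le \eta\}$.

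With $y_{t\wedge\eta} = \tilde y_{t\wedge\eta}$, $\Phi^m_{t\wedge\eta}[y] = \Phi^m_{t\wedge\eta}[\tilde y]$, and $z^m_{t\wedge\eta}[y] = z^m_{t\wedge\eta}[\tilde y]$ in hand, the last step is to apply the same locality argument to each of the seven terms in \req{R_main_def}. The two boundary (exact-differential) terms, such as $(\tilde\gamma^{-1})^{ij}(t,y_t)(z^m_t[y])_j$ evaluated between $0$ and $t\wedge\eta$, agree because the inputs agree; the four $ds$-integrals and the one $dW^\rho$-integral have integrands built algebraically from $y_s$, $z^m_s[y]$, and the coefficient functions evaluated at $(s,y_s)$, all of which coincide with the tilded versions for $s\le\eta$, so by locality of Lebesgue and It\^o integration the integrals up to $t\wedge\eta$ are equal.

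The only subtle point, and what I would treat most carefully, is the stochastic-integral step: one must check that the integrand appearing in the definition of $z^m_s[y]$ (and later in $R^m_s[y]$) is adapted and predictable enough that the locality rule applies, and that the exceptional null set can be chosen uniformly in $t$. This is handled by noting that $\Phi^m_\cdot[y]$ is pathwise-$C^1$ and adapted, hence continuous and predictable, and by choosing a single $P$-null set outside of which $y^\eta = \tilde y^\eta$ as continuous paths, so that all the pointwise equalities of integrands $\mathbf{1}_{[0,\eta]}(s)(\cdots)$ hold simultaneously. Once this is in place the result follows by straightforward accounting.
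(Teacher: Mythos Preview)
Your proposal is correct and is precisely the ``straightforward use of the formulas in Definition~\ref{main_def}'' that the paper invokes as its entire proof. You have simply spelled out the layered locality argument (first $\Phi^m$, then $z^m$, then each term of $R^m$) that the paper leaves implicit.
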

\begin{proof}
The proof is a straightforward use of the formulas in Definition \ref{main_def}.
\end{proof}
\begin{lemma}\label{tilde_Y_lemma}
 Define $\tilde Y$ to be the set of continuous semimartingales of the form
\begin{align}\label{semi_mart_form}
y_t=y_0+\int_0^t a_s ds+\int_0^t c_sdW_s
\end{align}
where $y_0$ is $\mathcal{F}_0$-measurable and $a\!:\![0,\infty)\times \Omega\to\mathbb{R}^n$, $c\!:\![0,\infty)\times \Omega\to\mathbb{R}^{n\times k}$ are pathwise continuous, adapted processes. 

If $y\in\tilde Y$ then  $z^m_t[y]\in\tilde Y$ and $R^{m}_t[y]\in\tilde Y$.
\end{lemma}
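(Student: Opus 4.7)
The plan is to verify that $\tilde Y$ is closed under the operations needed to assemble $z_t^m[y]$ and $R_t^m[y]$ from $y$, and then apply this observation to the explicit formulas in Definition \ref{main_def}. I would first isolate three closure properties of $\tilde Y$: (i) if $X \in \tilde Y$ and $A$ is a pathwise $C^1$, adapted matrix-valued process, then $AX \in \tilde Y$ (using that $A$ has zero quadratic variation, so the product rule produces only $ds$ and $dW_s$ terms); (ii) if $f \in C^{1,2}([0,\infty)\times\mathbb{R}^n)$ and $y \in \tilde Y$ with $dy_t = a_t\,dt + c_t\,dW_t$, then $f(t,y_t) \in \tilde Y$, via It\^o's formula, whose output has drift $\partial_t f + \partial_{q^i} f\,a^i + \tfrac{1}{2}\partial_{q^i}\partial_{q^j}f\,(cc^T)^{ij}$ and diffusion $\partial_{q^i} f\,c^i_\rho$, both pathwise continuous and adapted because $a$ and $c$ are; (iii) $\tilde Y$ is closed under products of two members, via the It\^o product rule, since the covariation term is absolutely continuous with pathwise continuous density.

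For $z_t^m[y]$: since $\tilde\gamma(t,y_t)$ is pathwise continuous and adapted, the pathwise linear ODE \req{Phi_def} admits a unique $C^1$, adapted matrix-valued solution $\Phi_t^m[y]$. Differentiating the explicit formula \req{z_func_def2}, or equivalently applying the product rule to it, yields
\begin{align*}
dz_t^m[y] = -\tfrac{1}{m}\tilde\gamma(t,y_t)\,z_t^m[y]\,dt + \tfrac{1}{\sqrt{m}}F(t,y_t)\,dt + \tfrac{1}{\sqrt{m}}\sigma(t,y_t)\,dW_t.
\end{align*}
Each coefficient on the right is pathwise continuous and adapted (using continuity of $\tilde\gamma$, $F$, $\sigma$ and continuity of $y$), so $z_t^m[y] \in \tilde Y$.

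For $R_t^m[y]$: inspecting \req{R_main_def}, every Lebesgue-integral term is of the form (smooth function of $(t,y_t)$) times a polynomial in the components of $z_t^m[y]$, and every stochastic-integral term is of the same form with a $dW^\rho_t$ integrator. By (i)--(iii), combined with the fact that $z_t^m[y] \in \tilde Y$, these integrands are themselves in $\tilde Y$; in particular they are pathwise continuous and adapted, so the corresponding integrals contribute terms of the form $\int_0^t \tilde a_s\,ds$ and $\int_0^t \tilde c_s\,dW_s$ required by \req{semi_mart_form}. The two ``exact differential'' contributions $-d((\tilde\gamma^{-1})^{ij}(t,y_t)(z_t^m[y])_j)$ and $-\sqrt{m}\,d(Q^{ikl}G_{kl}^{ab}(t,y_t)(z_t^m[y])_a(z_t^m[y])_b)$ are increments of products of elements of $\tilde Y$, hence are increments of elements of $\tilde Y$, and therefore also contribute terms of the required form. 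Summing gives $R_t^m[y] \in \tilde Y$.

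The only nontrivial point is ensuring that the coefficients produced by each application of It\^o's formula are pathwise continuous and adapted; this is where the regularity assumed in Assumption \ref{unbounded_assump} matters, since it guarantees that $\tilde\gamma^{-1}$, $Q$, $G$, $F$, $\sigma$ and their relevant first and second derivatives in $q$ are continuous in $(t,q)$. Given this regularity, the rest of the argument is a mechanical bookkeeping exercise and I would omit the details in the written proof.
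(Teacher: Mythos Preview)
Your proposal is correct and follows essentially the same approach as the paper: establish that $\tilde Y$ is a vector space closed under products (via the It\^o product rule/integration by parts) and under composition with $C^{1,2}$ functions (via It\^o's formula), then apply these closure properties to the explicit formulas for $z_t^m[y]$ and $R_t^m[y]$, invoking the regularity from Assumption \ref{unbounded_assump} to ensure $\tilde\gamma^{-1}$, $Q$, and $G$ are $C^2$. Your version is somewhat more explicit---in particular you write out the SDE for $z_t^m[y]$, whereas the paper simply asserts $z_t^m[y]\in\tilde Y$ (the SDE appears separately in a later remark)---but the logical structure is the same.
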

\begin{proof}
The set of semimartingales of the form \req{semi_mart_form} is a vector space and, using integration by parts, one can see that is closed under  multiplication by $\mathbb{R}$-valued processes of the form \req{semi_mart_form} (i.e., with $n=1$), and contains  $z_t^m[y]$  for any continuous semimartingale $y$.

The result then follows for $R^m_t[y]$ by noting that Assumption \ref{unbounded_assump} implies all of the integrands are pathwise continuous, adapted, and that $\tilde \gamma^{-1}(t,q)$, $Q^{ikl}(t,q)$, and $G^{a,b}_{kl}(t,q)$ are $C^2$.  The latter allows It{\^ o}'s Lemma to be applied to $\tilde \gamma^{-1}(t,y_t)$ etc., yielding terms in $\tilde Y$, provided that $y\in\tilde Y$.
\end{proof}

We also need to know that solutions to the SDE defining the hierarchy exist under the current weakened assumptions:
\begin{lemma}\label{lemma:global_exist2}
Under Assumption \ref{unbounded_assump}, for any $y\in \tilde Y$ (defined in Lemma \ref{tilde_Y_lemma}) there is a unique continuous semimartingale, $x_t$, defined for all $t\geq 0$ that solves
\begin{align}\label{relaxed_hierarchy_eq}
x_t=&q_0+\int_0^t\tilde \gamma^{-1}(s,x_s)F(s,x_s)ds+\int_0^tS(s,x_s)ds\\
&+\int_0^t\tilde \gamma^{-1}(s,x_s)\sigma(s,x_s) dW_s +\sqrt{m}R^{m}_t[y].\notag
\end{align}
We also have $x\in\tilde Y$.

\end{lemma}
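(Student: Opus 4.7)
The plan is to recognize \eqref{relaxed_hierarchy_eq} as an SDE with continuous semimartingale forcing $N_t=q_0+\sqrt{m}R_t^m[y]$ and apply the theory of Appendix \ref{app:gen_SDE}: Theorem \ref{gen_sde_existence} for local existence and uniqueness, and Theorem \ref{gen_liap_thm} for non-explosion. Since $y\in\tilde Y$, Lemma \ref{tilde_Y_lemma} yields $R_t^m[y]\in\tilde Y$, so $N_t$ has the semimartingale representation with pathwise continuous and adapted drift and diffusion integrands that is required by Theorem \ref{gen_liap_thm}.

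First I would verify the hypotheses of Theorem \ref{gen_sde_existence} for the drift $b(s,x)=\tilde\gamma^{-1}(s,x)F(s,x)+S(s,x)$ and diffusion $\bar\sigma(s,x)=\tilde\gamma^{-1}(s,x)\sigma(s,x)$. Under Assumption \ref{unbounded_assump}, $\tilde\gamma$ is $C^3$ with symmetric part bounded below by $\lambda>0$, so $\tilde\gamma^{-1}$ is bounded and $C^2$; the noise-induced drift $S$ is a bounded $C^2$ function of $\tilde\gamma$ and $\Sigma$; $\sigma$ is bounded and globally Lipschitz in $q$; and $F=-\partial_t\psi-\nabla_qV+\tilde F$ is locally Lipschitz in $q$, uniformly in $t$ on compacts. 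Hence $b$ and $\bar\sigma$ are continuous and locally Lipschitz in $x$ uniformly on compact time intervals, and Theorem \ref{gen_sde_existence} produces a unique maximal solution $x_t$ up to some explosion time $e$.

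For non-explosion I would apply Theorem \ref{gen_liap_thm} with the candidate Lyapunov function
\begin{align}
\mathcal{V}(t,q)=1+\|q\|^2+\tilde V(t,q),
\end{align}
where $\tilde V$ is the nonnegative function from \eqref{tilde_V_def}. Coercivity (condition~(2)) is immediate from $\mathcal{V}\geq 1+\|q\|^2$, and the gradient and Hessian bounds in Assumption \ref{unbounded_assump} together with boundedness of $\bar\sigma$ give $\|\nabla_q\mathcal{V}\|\leq M+C\mathcal{V}$ and $\|D_q^2\mathcal{V}\|(1+\|\bar\sigma\|)\leq M+C\mathcal{V}$, which are conditions (4) and (5). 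The heart of the argument is condition (3):
\begin{align}
L[\mathcal{V}]=\partial_t\tilde V+\big(\tilde\gamma^{-1}F+S\big)\cdot\nabla\mathcal{V}+\tfrac{1}{2}(\bar\sigma\bar\sigma^T)^{ij}\partial_{x^i}\partial_{x^j}\mathcal{V}\leq M(T)+C(T)\mathcal{V}.
\end{align}
The key algebraic observation is that antisymmetry of $\tilde\gamma-\tilde\gamma^T$ combined with $\gamma\geq\lambda I$ gives $v^T\tilde\gamma^{-1}v\geq\lambda'\|v\|^2$ uniformly for some $\lambda'>0$: indeed, setting $u=\tilde\gamma^{-1}v$, one has $v^T\tilde\gamma^{-1}v=u^T\gamma u\geq\lambda\|u\|^2\geq\lambda\|\tilde\gamma\|_\infty^{-2}\|v\|^2$. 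Splitting $F=-\nabla V+(-\partial_t\psi+\tilde F)$ then produces the dissipative term $-\tilde\gamma^{-1}\nabla V\cdot\nabla V\leq -\lambda'\|\nabla V\|^2$, which absorbs, via Young's inequality $2ab\leq\epsilon a^2+\epsilon^{-1}b^2$, the cross terms involving $-\partial_t\psi+\tilde F$ (controlled by the quadratic bound \eqref{eq:dt_psi_F_bound}) as well as the $q$-linear part of $\nabla\mathcal{V}$ coming from the $\|q\|^2$ summand in $\mathcal{V}$. The remaining $S\cdot\nabla\mathcal{V}$ and Hessian contributions are dominated directly by $M+C\mathcal{V}$ using boundedness of $S$ and $\bar\sigma$ and the Hessian growth bound. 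Theorem \ref{gen_liap_thm} then yields $e=\infty$ a.s.

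Finally, to see $x\in\tilde Y$, note that $x_t$ is a continuous adapted process, so $b(\cdot,x_\cdot)$ and $\bar\sigma(\cdot,x_\cdot)$ are pathwise continuous and adapted; combining with $\sqrt{m}R_t^m[y]\in\tilde Y$ exhibits $x$ in the required form \eqref{semi_mart_form}. The main obstacle is the verification of condition (3): a naive bound on the product $\|F\|\cdot\|\nabla\mathcal{V}\|$ could grow like $\mathcal{V}^{3/2}$, since $\|F\|^2$ and $\|\nabla\mathcal{V}\|$ are each only linearly bounded by $\mathcal{V}$. It is exactly the gradient-dissipation from $-\tilde\gamma^{-1}\nabla V\cdot\nabla V$, together with Young's inequality, that closes the Lyapunov estimate at the required linear-in-$\mathcal{V}$ level.
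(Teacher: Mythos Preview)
Your proposal is correct and follows essentially the same approach as the paper: invoke Theorem~\ref{gen_sde_existence} for local existence using the local Lipschitz property of $\tilde\gamma^{-1}F+S$ and $\tilde\gamma^{-1}\sigma$, then apply Theorem~\ref{gen_liap_thm} with the Lyapunov function $\|q\|^2+\tilde V(t,q)$ (your added constant $+1$ is immaterial) to rule out explosion, and finally read off $x\in\tilde Y$ from the form of the equation. In fact you supply considerably more detail on the verification of condition~(3) --- in particular the dissipation estimate $v^T\tilde\gamma^{-1}v\geq\lambda'\|v\|^2$ and the use of Young's inequality to absorb the cross terms --- than the paper, which simply asserts that ``one can check'' the Lyapunov conditions hold.
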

\begin{proof}
 $\tilde\gamma^{-1}F+S$ and $\tilde\gamma^{-1}\sigma$ are continuous and satisfy the local Lipschitz property, \req{local_lip}. Lemma \ref{tilde_Y_lemma} implies $R^m[y]$ is a continuous semimartingale (in fact, $R^m[y]\in\tilde Y$).  Therefore Theorem \ref{gen_sde_existence} shows  a  unique maximal solution exists up to explosion time. 

 One can check that the function
 \begin{align}
 (t,q)\to \|q\|^2+\tilde V(t,q),
 \end{align}
 where $\tilde V$ was defined in \req{tilde_V_def},  satisfies all the conditions required by Theorem \req{gen_liap_thm} to make it a Lyapunov function for the SDE \req{relaxed_hierarchy_eq}, thereby proving $x_t$ has explosion time $e=\infty$.   $R^m[y]\in \tilde Y$  together with \req{relaxed_hierarchy_eq} shows that $x\in\tilde Y$ as well.
 \end{proof}

We are now ready to prove Theorem \ref{thm:conv_in_prob}:
\begin{proof}
By Lemma \ref{lemma:global_exist1}, there exist unique global in time solutions $(q_t^m,u_t^m)$ to \req{q_eq}-\req{u_eq} and $q_t\in \tilde Y$ ($\tilde Y$ was defined in Lemma \ref{tilde_Y_lemma}) to \req{q_SDE}, and by induction, Lemma \ref{lemma:global_exist2} gives globally defined continuous semimartingale solutions to the approximation hierarchy, \req{relaxed_hierarchy_eq}.

Let $\chi:\mathbb{R}^n\to [0,1]$ be a $C^\infty$ bump function, equal to $1$ on  $\overline{B_1(0)}\equiv\{\|q\|\leq 1\}$ and zero outside $\overline{B_2(0)}$. Given $r>0$ let $\chi_r(q)=\chi(q/r)$.  Define
\begin{align}
&V_r(t,q)=\chi_r(q)V(t,q), \hspace{2mm} \tilde F_r(t,q)=\chi_r(q) \tilde F(t,q), \hspace{2mm}  \psi_r(t,q)=\chi_r(q)\psi(t,q),\notag\\
&\gamma_r(t,q)=\chi_r(q)\gamma(t,q)+(1-\chi_r(q))\lambda I.
\end{align}
For each $r>0$, replacing $V$ with $V_r$, $F$ with $F_r$ etc., we arrive at an SDE satisfying the hypotheses of Theorem \ref{convergence_theorem}.   We will call these the cutoff systems.

Let $R^{r,m}_t[y]$ denote \req{R_main_def}, with $V$ replaced by $V_r$, etc. All of these objects and their derivatives agree on $\overline{B_r(0)}$, so for any continuous semimartingale, $y$, if we let $\eta_r^y=\inf\{t:\|y_t\|\geq r\}$, we have 
\begin{align}\label{R_cutoff_equality}
R_{t\wedge\eta_r^y}^m[y]=R_{t\wedge\eta_r^y}^{r,m}[y]
\end{align}
 for all $t\geq 0$, $P$-a.s. 

  Let $(q_t^{r,m},u_t^{r,m})$ be the solutions to the cutoff system, $q_t^r$ the solution to the corresponding homogenized equation, and $q_t^{r,\ell,m}$ the solutions to the corresponding approximating hierarchy, all using the same initial conditions as the system without the cutoff.

For each $r>R$ define the stopping times 
\begin{align}
&\eta^m_r=\inf\{t:\|q_t^m\|\geq r\},\,\, \eta^{\ell,m}_r=\inf\{t:\|q^{\ell,m}_t\|\geq r\},\\
& \eta^{r,\ell,m}_r=\inf\{t:\|q^{r,\ell,m}_t\|\geq r\},\notag
\end{align}
 and
 \begin{align}
 \tau^{\ell,m}_r=\eta^{\ell,m}_r\wedge\eta^{\ell-1,m}_r\wedge...\wedge \eta^{1,m}_r,\,\, \tau^{r,\ell,m}_r=\eta^{r,\ell,m}_r\wedge\eta^{r,\ell-1,m}_r\wedge...\wedge \eta^{r,1,m}_r.
 \end{align}
 Note that $\eta^{1,m}_r=\inf\{t:\|q_t\|\geq r\}\equiv \eta_r$ is independent of $m$.  Finally, define $\sigma^{\ell,m}_r=\tau^{\ell,m}_r\wedge\eta^m_r$, the first exit time for any of the position processes up to level $\ell$ of the hierarchy.

 The drifts and diffusions of the modified and unmodified SDEs agree on the ball $\{\|q\|\leq r\}$.  Therefore, using induction on $\ell$, Lemma \ref{lemma:R_stop}, \req{R_cutoff_equality}, and pathwise uniqueness of solutions,  we see  that the driving semimartingales of the hierarchy up to $\ell$ for both the original and cutoff systems agree up to the stopping time $\tau^{\ell,m}_r$ and 
\begin{align}\label{q_eps_unique}
q^m_{t\wedge\eta^m_r}=q^{r,m}_{t\wedge\eta^m_r} \text{ for all $t\geq 0$ a.s., }
\end{align}
\begin{align}\label{q_eps_unique2}
\tau^{\ell,m}_r=\tau^{r,\ell,m}_r a.s., \,\,\text{ and }\,\, q^{\ell,m}_{t\wedge\tau^{\ell,m}_r}= q^{r,\ell,m}_{t\wedge\tau^{\ell,m}_ r}  \text{ for all $t\geq 0$ a.s.}
\end{align}

Fixing $r>0$ and using \req{q_eps_unique} and \req{q_eps_unique2}, for any $T>0$, $\delta>0$, $\epsilon>0$, $\ell\in\mathbb{Z}^+$ we can calculate 
\begin{align}\label{eq:conv_prob1}
&P\left(\frac{\sup_{t\in[0,T]}\|q_t^m-q^{\ell,m}_t\|}{m^{\ell/2-\epsilon}}>\delta\right)\\
=&P\left(\sigma^{\ell,m}_r> T,\frac{\sup_{t\in[0,T]}\|q_{t\wedge\eta^m_r}^{r,m}-q^{r,\ell,m}_{t\wedge\tau^{\ell,m}_r}\|}{m^{\ell/2-\epsilon}}>\delta\right)\notag\\
&+P\left(\sigma^{\ell,m}_r\leq T,\frac{\sup_{t\in[0,T]}\|q_t^m-q^{\ell,m}_t\|}{m^{\ell/2-\epsilon}}>\delta\right)\notag\\
\leq&P\left(\frac{\sup_{t\in[0,T]}\|q_{t}^{r,m}-q^{r,\ell,m}_{t}\|}{m^{\ell/2-\epsilon}}>\delta\right)+P\left(\sigma^{\ell,m}_r\leq T\right).\notag
\end{align}
The first term, involving the cutoff system, converges to zero as $m\to 0$ by  Markov's inequality and the convergence result for bounded forces, \req{hierarchy_conv_rate}.  Hence we focus on the second term.  We note that the only essential difference between the argument below and the similar computation in the proof of Theorem 6.1 from \cite{BirrellHomogenization} is the need to consider all processes in the hierarchy up to level $\ell$, and not just the 
processes, $q_t^m$ and $q_t^{\ell,m}$, that  were being compared in \req{eq:conv_prob1}.  This is due to the iterative construction of each level in the hierarchy from the solution at the previous level. The second term can be bounded as follows:
\begin{align}\label{eq:r_ball_prob}
&P\left(\sigma^{\ell,m}_r\leq T\right)\\
\leq&P(\eta_r\leq T)+P\left(\eta^{m}_{r}\leq T,\eta_r>T, \sup_{t\in[0,T]}\|q_t^r-q_t^{r,m}\|\leq 1\right)\notag\\
& +\sum_{k=2}^\ell P\left(\tau^{k-1,m}_r>T,\eta^{k,m}_r\leq T, \sup_{t\in[0,T]}\|q_t^r-q_t^{r,k,m}\|\leq 1\right)\notag\\
&+P\left(\sup_{t\in[0,T]}\|q_t^r-q_t^{r,m}\|>1\right)+\sum_{k=2}^\ell P\left(\sup_{t\in[0,T]}\|q_t^r-q_t^{r,k,m}\|> 1\right)\notag\\
\leq&P\left(\sup_{t\in[0,T]}\|q_t\|\geq r\right)+P\left(\eta^{m}_{r}\leq T, \|q_{T\wedge\eta_r^m}-q_{T\wedge\eta_r^m}^{m}\|\leq 1\right)\notag\\
& +\sum_{k=2}^\ell P\left(\eta^{k,m}_r\leq T, \|q_{T\wedge \eta^{k,m}_r}-q_{T\wedge\eta^{k,m}_r}^{k,m}\|\leq 1\right)\notag\\
&+E\left[\sup_{t\in[0,T]}\|q_t^r-q_t^{r,m}\|\right]+\sum_{k=2}^\ell E\left[\sup_{t\in[0,T]}\|q_t^r-q_t^{r,k,m}\|\right],\notag
\end{align}
where we again used the uniqueness results, \req{q_eps_unique}-\req{q_eps_unique2}. The terms in the last line  go to zero as $m\to 0$, as seen from the triangle inequality and \req{hierarchy_conv_rate}.

On the event where $\eta_r^m\leq T$ and $\|q_{T\wedge\eta_r^m}-q^m_{T\wedge\eta_r^m}\|\leq 1$ we have $\|q^m_{\eta^m_r}\|\geq r$ and
\begin{align}
\|q_{\eta_r^m}\|\geq \|q^m_{T\wedge \eta^m_r}\|-\|q_{T\wedge\eta_r^m}-q^m_{T\wedge\eta_r^m}\|\geq r-1.
\end{align}
Hence $\sup_{t\in[0,T]}\|q_t\|\geq r-1$ on this event. Similarly,
\begin{align}
\left\{\eta^{k,m}_r\leq T, \|q_{T\wedge \eta^{k,m}_r}-q_{T\wedge\eta^{k,m}_r}^{k,m}\|\leq 1\right\}\subset\left\{\sup_{t\in[0,T]}\|q_t\|\geq r-1\right\}.
\end{align}

Therefore we obtain
\begin{align}
&\limsup_{m\to 0}P\left(\frac{\sup_{t\in[0,T]}\|q_t^m-q^{\ell,m}_t\|}{m^{\ell/2-\epsilon}}>\delta\right)\\
\leq&P\left(\sup_{t\in[0,T]}\|q_t\|\geq r\right) +\ell P\left(\sup_{t\in[0,T]}\|q_t\|\geq r-1\right)\notag\\
\leq& (\ell+1)P\left(\sup_{t\in[0,T]}\|q_t\|\geq r-1\right).\notag
\end{align}
This holds for all $r>0$ and  non-explosion of $q_t$ implies that 
\begin{align}
P\left(\sup_{t\in[0,T]}\|q_t\|\geq r-1\right)\to 0
\end{align}
 as $r\to\infty$,  hence we have proven the claimed result.
 \end{proof}

\subsection*{Acknowledgments}

J.W. was partially supported by NSF grant DMS 1615045. J.B. would like to thank Giovanni Volpe for suggesting this problem. J.B and J.W. would like to warmly thank the reviewers for their careful reading and many helpful suggestions for improving the presentation of this work.
\bibliographystyle{unsrt}
\bibliography{refs}

\end{document}